\newtheorem{theorem}{Theorem}
\newtheorem{lemma}[theorem]{Lemma}
\newtheorem{proposition}[theorem]{Proposition}
\newtheorem{corollary}[theorem]{Corollary}
\theoremstyle{definition}
\newtheorem{definition}[theorem]{Definition}
\newtheorem{remark}[theorem]{Remark}
\newtheorem{condition}[theorem]{Condition}
\numberwithin{theorem}{section}
\newcommand{\f}{f^\alpha}
\newcommand{\g}{g^\alpha}
\renewcommand{\a}{a^\alpha}
\newcommand{\e}{e_\alpha}
\renewcommand{\b}{b^\alpha}
\newcommand{\K}{\mathcal{K}_\alpha}
\renewcommand{\v}{\widehat{v}_\alpha}
\newcommand{\vo}{v_\alpha^0}
\newcommand{\gamm}{\gamma_\alpha}
\newcommand{\R}{\mathbb{R}}
\newcommand{\N}{\mathbb{N}}
\newcommand{\T}{T_\bullet}
\newcommand{\I}{I_{\T}}
\newcommand{\suma}{\sum_{\alpha=1}^N}
\newcommand{\sumb}{\sum_{\alpha=1}^{N'}}
\let\originalleft\left
\let\originalright\right
\renewcommand{\left}{\mathopen{}\mathclose\bgroup\originalleft}
\renewcommand{\right}{\aftergroup\egroup\originalright}
\DeclareMathOperator{\supp}{supp}
\let\div\relax
\DeclareMathOperator{\div}{div}
\DeclareMathOperator{\curl}{curl}
\newcommand{\inte}{\mathrm{int}}
\DeclareMathOperator{\dist}{dist}
\newcommand{\kin}{{\alpha\mathrm{kin}}}
\newcommand{\lt}{\mathrm{lt}}
\newcommand{\loc}{\mathrm{loc}}
\begin{document}

	\title{Weak Solutions of the Relativistic Vlasov-Maxwell System with External Currents}
	\author{Jörg Weber\\ \textit{University of Bayreuth, 95440 Bayreuth, Bavaria, Germany}\\ \texttt{Joerg.Weber@uni-bayreuth.de}}
	\date{}
	\maketitle
	\begin{abstract}
		The time evolution of a collisionless plasma is modeled by the relativistic Vlasov-Maxwell system which couples the Vlasov equation (the transport equation) with the Maxwell equations of electrodynamics. We consider the case that the plasma consists of $N$ particle species, the particles are located in a bounded container $\Omega\subset\R^3$, and are subject to boundary conditions on $\partial\Omega$. Furthermore, there are external currents, typically in the exterior of the container, that may serve as a control of the plasma if adjusted suitably. We do not impose perfect conductor boundary conditions for the electromagnetic fields, but consider the fields as functions on whole space $\R^3$ and model objects, that are placed in space, via given matrix-valued functions $\varepsilon$ (the permittivity) and $\mu$ (the permeability). A weak solution concept is introduced and existence of global in time solutions is proved, as well as the redundancy of the divergence part of the Maxwell equations in this weak solution concept.
		
		\vspace*{3mm}
		
		\noindent\textbf{Keywords}$\;$ relativistic Vlasov-Maxwell system, nonlinear partial differential equations
		
		\vspace*{3mm}
		
		\noindent\textbf{MSC Classification:}$\;$ 35Q61, 35Q83, 82D10
	\end{abstract}
	
	\numberwithin{equation}{section}
	\section{Introduction}
	The time evolution of a collisionless plasma is modeled by the relativistic Vlasov-Maxwell system. Collisions among the plasma particles can be neglected if the plasma is sufficiently rarefied or hot. The particles only interact through electromagnetic fields created collectively. We consider the following setting: There are $N$ species of particles, all of which are located in a container $\Omega\subset\R^3$, which is a bounded domain, for example a fusion reactor. Thus, boundary conditions on $\partial\Omega$ have to be imposed. In the exterior of $\Omega$, there are external currents, for example in electric coils, that may serve as a control of the plasma if adjusted suitably. In order to model materials that are placed somewhere in space, for example (almost perfect) superconductors, we consider the permittivity $\varepsilon$ and permeability $\mu$, which are functions of the space coordinate, take values in the set of symmetric, positive definite matrices of dimension three, and do not depend on time, as given. With this assumption we can model linear, possibly anisotropic materials that stay fixed in time. We should mention that in reality $\varepsilon$ and $\mu$ will on the one hand additionally depend on the particle density inside $\Omega$ and on the other hand additionally locally on the electromagnetic fields, typically via their frequencies (maybe even nonlocally because of hysteresis). However, this would cause further nonlinearities which we avoid in this work.
	
	The unknowns are on the one hand the particle densities $\f=\f\left(t,x,v\right)$, $\alpha=1,\dots,N$, which are functions of time $t\in\R$, the space coordinate $x\in\Omega$, and the momentum coordinate $v\in\R^3$. Roughly speaking, $\f\left(t,x,v\right)$ indicates how many particles of the $\alpha$-th species are at time $t$ at position $x$ with momentum $v$. On the other hand there are the electromagnetic fields $E=E\left(t,x\right)$, $H=H\left(t,x\right)$, which depend on time $t$ and space coordinate $x\in\R^3$. The $D$- and $B$-fields are computed from $E$ and $H$ by $D=\varepsilon E$ and $B=\mu H$. We will only view $E$ and $H$ as unknowns in the following. The main assumption about $\varepsilon$ (and likewise $\mu$) in Section \ref{sec:existence} will be $\sigma\leq\varepsilon\leq\sigma'$ for some $\sigma,\sigma'>0$ in the sense of positive definiteness. This property implies that
	\begin{align*}
	E\mapsto\left(\int_{\R^3}\varepsilon E\cdot E\,dx\right)^{\frac{1}{2}}
	\end{align*}
	is a norm on $L^2\left(\R^3;\R^3\right)$, which is equivalent to the standard $L^2$-norm.
	
	The Vlasov-Maxwell system on a time interval with given final time $0<\T\leq\infty$, equipped with boundary conditions on $\partial\Omega$ and initial conditions for $t=0$, is then given by the following set of equations; we explain the appearing notation afterwards:
	\begin{subequations}\renewcommand{\theequation}{VM.\arabic{equation}}\phantomsection\makeatletter\def\@currentlabel{VM}\label{eq:WholeSystem}\makeatother
		\begin{align}
		\partial_t\f+\v\cdot\partial_x\f+\e\left(E+\v\times\mu H\right)\cdot\partial_v\f&=0&\ \mathrm{on}\ \I\times\Omega\times\R^3,\label{eq:WholeVl}\\
		\f_-&=\K\f_++\g&\ \mathrm{on}\ \gamma_{\T}^-,\label{eq:WholeBoun}\\
		\f\left(0\right)&=\mathring\f&\ \mathrm{on}\ \Omega\times\R^3,\label{eq:WholeInitf}\\
		\varepsilon\partial_tE-\curl_xH&=-4\pi j&\ \mathrm{on}\ \I\times\R^3,\label{eq:WholeMax1}\\
		\mu\partial_tH+\curl_xE&=0&\ \mathrm{on}\ \I\times\R^3,\\
		\left(E,H\right)\left(0\right)&=\left(\mathring E,\mathring H\right)&\ \mathrm{on}\ \R^3,\label{eq:WholeInitEH}
		\end{align}
	\end{subequations}
	where \eqref{eq:WholeVl} to \eqref{eq:WholeInitf} have to hold for all $\alpha=1,\dots,N$ and $\I$ denotes the given time interval $\I:=\left[0,\T\right]$ if $0\leq\T<\infty$ and $\I:=\left[0,\infty\right[$ if $\T=\infty$, respectively. Additionally, the divergence equations
	\begin{subequations}\label{eq:diveqn}
		\begin{align}
		\div_x\left(\varepsilon E\right)&=4\pi\rho&\ \mathrm{on}\ \I\times\R^3,\label{eq:diveqnD}\\
		\div_x\left(\mu H\right)&=0&\ \mathrm{on}\ \I\times\R^3,
		\end{align}
	\end{subequations}
	have to hold. In \eqref{eq:WholeInitf} and \eqref{eq:WholeInitEH}, $\f\left(0\right)$ and $\left(E,H\right)\left(0\right)$ denote the evaluation of $\f$ and $\left(E,H\right)$ at time $t=0$, that is to say the function $\f\left(0,\cdot,\cdot\right)$. We will use this notation often, also similarly for other functions.
	
	In \eqref{eq:WholeVl}, $\e$ is the charge of the $\alpha$-th particle species and $\v$ the velocity, which is computed from the momentum $v$ by
	\begin{align*}
	\v=\frac{v}{\sqrt{m_\alpha^2+\left|v\right|^2}}.
	\end{align*}
	To ensure that the speed of light is constant in $\Omega$ and hence ensure that $\v$ is independent of $x$, we have to assume that $\varepsilon\mu$ is constant in $\Omega$. Throughout this work we use modified Gaussian units such that $\varepsilon=\mu=1$ on $\Omega$ -- thus, the speed of light is $1$ in $\Omega$ -- and all rest masses $m_\alpha$ of a particle of the respective species are at least $1$. Clearly, $\left|\v\right|<1$, that is, the velocity of a particle is bounded by the speed of light (in $\Omega$).
	
	Equation \eqref{eq:WholeBoun} describes the boundary condition on $\partial\Omega$. Typically, one imposes specular boundary conditions. Thus it is natural to consider the following decompositions:
	\begin{align*}
	\tilde\gamma^\pm&:=\left\{\left(x,v\right)\in\partial\Omega\times\R^3\mid v\cdot n\left(x\right)\gtrless 0\right\},\quad\tilde\gamma^0:=\left\{\left(x,v\right)\in\partial\Omega\times\R^3\mid v\cdot n\left(x\right)=0\right\},\\
	\gamma^\pm&:=\left[0,\infty\right[\times\tilde\gamma^\pm,\quad\gamma^0:=\left[0,\infty\right[\times\tilde\gamma^0,\quad\gamma_T^\pm:=I_T\times\tilde\gamma^\pm,\quad\gamma_T^0:=I_T\times\tilde\gamma^0,
	\end{align*}
	where $n\left(x\right)$ is the outer unit normal of $\partial\Omega$ at $x\in\partial\Omega$ and $0<T\leq\infty$. In \eqref{eq:WholeBoun}, $\f_\pm$ are the restrictions of $\f$ to $\gamma_{\T}^\pm$. The operator $\K$ maps functions on $\gamma_{\T}^+$ to functions on $\gamma_{\T}^-$. In Section \ref{sec:existence} we deal with the case that
	\begin{align}\label{eq:specialK}
	\K h=\a(Kh)
	\end{align}
	where
	\begin{align*}
	\left(Kh\right)\left(t,x,v\right)=h\left(t,x,v-2\left(v\cdot n\left(x\right)\right)\right)
	\end{align*}
	describes reflection on the boundary and $\a$, satisfying $0\leq\a\leq 1$, describes how many of the particles hitting the boundary at time $t$ at $x$ with momentum $v$ are reflected (and not absorbed); $\g$ is the source term according to how many particles are added from outside. We will deal with partially absorbing ($\a\leq\a_0$ for some $\a_0<1$) and purely reflecting ($\a=1$, $\g=0$) boundary conditions.
	
	In \eqref{eq:WholeMax1} and \eqref{eq:diveqnD}, $j$ and $\rho$ are the current and charge density. Typically they are the sum of the internal current and charge densities,
	\begin{align*}
	j^\inte:=\suma\e\int_{\R^3}\v\f\,dv,\quad\rho^\inte:=\suma\e\int_{\R^3}\f\,dv
	\end{align*}
	and some external current density $u$ and charge density $\rho^u$ resulting from $u$. Usually, the divergence equations \eqref{eq:diveqn} are known to be redundant if all functions are smooth enough, local conservation of charge is satisfied, i.e.
	\begin{align*}
	\partial_t\rho+\div_xj=0,
	\end{align*}
	and \eqref{eq:diveqn} holds initially, which we then view as a constraint on the initial data. Therefore, in the first sections we ignore \eqref{eq:diveqn} and discuss in Section \ref{sec:diveqn} in what sense \eqref{eq:diveqn} is satisfied in the context of a weak solution concept. Since \eqref{eq:diveqn} has to hold on whole space $\R^3$, the main difficulty will be that we have to \enquote{cross over} $\partial\Omega$.
	
	The paper is organized as follows: In Section \ref{sec:statement} we state our main two theorems. The first regards the existence of weak solutions to \eqref{eq:WholeSystem}. In Section \ref{sec:existence} we prove this theorem. To this end, we state some basic results about linear Vlasov and Maxwell equations (Section \ref{sec:linearVM}), approximate the given functions in a proper way (Section \ref{sec:approx}), consider a cut-off system (Section \ref{sec:cutoff}), and finally remove the cut-off (Section \ref{sec:remove}). The second main result regards the redundancy of the divergence equations in our weak solution concept. We prove this theorem in Section \ref{sec:diveqn} and give some comments on the physical interpretation of the obtained equations.
	
	In the first part, we proceed similarly to Guo \cite{Guo93}, who proved existence of weak solutions in the case that $\varepsilon=\mu=1$, $u=0$, and the electromagnetic fields are subject to perfect conductor boundary conditions on $\partial\Omega$, i.e., $E\times n=0$. However, there is no need of artificially inserting the factor $e^{-t}$ as is done throughout that paper. The more important motivation of our paper is the following: The papers concerning plasma in a domain we are aware of deal with perfect conductor boundary conditions for the electromagnetic fields. Such a set-up can model no interaction between this domain and the exterior. However, considering fusion reactors, there are external currents in the exterior, for example in field coils. These external currents induce electromagnetic fields and thus influence the behavior of the internal plasma. Even more important, the main aim of fusion plasma research is to adjust these external currents \enquote{suitably}. Thus, we impose Maxwell's equations globally in space and model objects like the reactor wall, electric coils, and almost perfect superconductors via $\varepsilon$ and $\mu$.
	
	Vlasov-Maxwell systems have been studied extensively. In case of no reactor wall, i.e., the Vlasov equation is imposed globally in space (as well as Maxwell's equations), global well-posedness of the Cauchy problem is a famous open problem. Global existence and uniqueness of classical solutions has been proved in lower dimensional settings, see Glassey and Schaeffer \cite{GS90,GS97,GS98a,GS98b}. In the full three-dimensional setting, global existence of weak solutions was proved by Di Perna and Lions \cite{DL89}. Their momentum-averaging lemma is fundamental for proving existence of weak solutions in any setting (with or without boundary, with or without perfect conductor boundary conditions and so on), since it handles the nonlinearity in the Vlasov equation. However, uniqueness of these weak solutions is not known. For a more detailed overview we refer to Rein \cite{Rei04}. 
	
	\section{Preliminaries}
	\subsection{Some notation}
	In the following, we denote by $\chi_M$ the characteristic function of some set $M$ (i.e., $\chi_M\left(x\right)=1$ if $x\in M$ and $0$ otherwise) and by $\chi_T$ the characteristic function of $\left[0,T\right]$. For $1\leq p<\infty$ and $\alpha=1,\dots,N$ we define
	\begin{align*}
	L_\kin^p\left(A,da\right):=\left\{u\in L^p\left(A,da\right)\mid\int_A\vo\left|u\right|^p\,da<\infty\right\},
	\end{align*}
	equipped with the corresponding weighted norm. Here, $A\subset\R^3\times\R^3$ or $A\subset\R\times\R^3\times\R^3$ is some Borel set equipped with a measure $a$ and the weight $\vo$ is given by
	\begin{align*}
	\vo:=\sqrt{m_\alpha^2+\left|v\right|^2}.
	\end{align*}
	By $m_\alpha\geq 1$ we have $\vo\geq 1$. Moreover we write
	\begin{align*}
	L_\lt^p\left(A,da\right):=\left\{u\colon A\to\R\mid\chi_Tu\in L^p\left(A,da\right)\ \mathrm{for\ all}\ T>0\right\}
	\end{align*}
	for $1\leq p\leq\infty$. If $a$ is the Lebesgue measure we write $L_\kin^p\left(A\right)$ and $L_\lt^p\left(A\right)$, respectively. A combination $L_{\kin,\lt}^p\left(A,da\right)$ is defined accordingly. Furthermore we abbreviate
	\begin{align*}
	G_\lt\left(I;X\right):=\left\{u\colon I\to X\mid u\in G\left(0,T;X\right)\mathrm{\ for\ all\ }T\in I\right\}
	\end{align*}
	where $0\in I\subset\left[0,\infty\right[$ is some interval, $G$ is some $C^k$ or $L^p$, and $X$ is a normed vector space.
	
	For ease of notation it will be convenient to introduce a surface measure on $\left[0,\infty\right[\times\partial\Omega\times\R^3$, namely
	\begin{align*}
	d\gamm=\left|\v\cdot n\left(x\right)\right|\,dvdS_xdt.
	\end{align*}
	
	Since $\varepsilon$ is already used for the permittivity, the letter $\iota$, and not $\varepsilon$, will always denote a small positive number.
	
	For a matrix $A\in\R^{3\times 3}$ and a positive number $\sigma>0$, we write $A\geq\sigma$ ($A\leq\sigma$) if $Ax\cdot x\geq\sigma\left|x\right|^2$ ($Ax\cdot x\leq\sigma\left|x\right|^2$) for all $x\in\R^3$. For measurable $A\colon\R^3\to\R^{3\times 3}$ and $\sigma>0$ we write $A\geq\sigma$ ($A\leq\sigma$) if $A\left(x\right)\geq\sigma$ ($A\left(x\right)\leq\sigma$) for almost all $x\in\R^3$.
	
	Finally, for a normed space $X$, some $x\in X$ and $r>0$, $B_r\left(x\right)$ denotes the open ball in $X$ with center $x$ and radius $r$. Furthermore we abbreviate $B_r:=B_r\left(0\right)$.
	
	\subsection{Weak formulation}
	The space of test functions for \eqref{eq:WholeVl} to \eqref{eq:WholeInitf} will be
	\begin{align*}
	\Psi_{\T}&:=\left\{\psi\in C^\infty\left(\I\times\overline\Omega\times\R^3\right)\mid\right.&\supp\psi\subset\left[0,\T\right[\times\overline\Omega\times\R^3\ \mathrm{compact},\phantom{\,\}}\nonumber\\
	&&\dist\left(\supp\psi,\gamma_{\T}^0\right)>0,\phantom{\,\}}\nonumber\\
	&&\left.\vphantom{\left(\I\times\overline\Omega\times\R^3\right)}\dist\left(\supp\psi,\left\{0\right\}\times\partial\Omega\times\R^3\right)>0\right\}.
	\end{align*}
	On the other hand,
	\begin{align*}
	\Theta_{\T}:=\left\{\vartheta\in C^\infty\left(\I\times \R^3;\R^3\right)\mid\supp\vartheta\subset\left[0,\T\right[\times\R^3\ \mathrm{compact}\right\}
	\end{align*}
	will be the space of test functions for \eqref{eq:WholeMax1} to \eqref{eq:WholeInitEH}.
	
	We start with the definition of what we call solutions to \eqref{eq:WholeSystem}.
	\begin{definition}\label{def:WeakSolWholeSys}
		Let $0<\T\leq\infty$. We call a tuple $\left(\left(\f,\f_+\right)_\alpha,E,H,j\right)$ a weak solution of \eqref{eq:WholeSystem} on the time interval $\I$ if (for all $\alpha$)
		\begin{enumerate}[label=(\roman*)]
			\item $\f\in L_\loc^1\left(\I\times\overline\Omega\times\R^3\right)$, $\f_+\in L_\loc^1\left(\gamma_{\T}^+,d\gamm\right)$, $E,H,j\in L_\loc^1\left(\I\times\R^3;\R^3\right)$;
			\item for all $\psi\in\Psi_{\T}$ it holds that
			\begin{align}\label{eq:Vlasovweak}
			0&=-\int_0^{\T}\int_\Omega\int_{\R^3}\left(\partial_t\psi+\v\cdot\partial_x\psi+\e\left(E+\v\times H\right)\cdot\partial_v\psi\right)\f\,dvdxdt\nonumber\\
			&\phantom{=\;}+\int_{\gamma_{\T}^+}\f_+\psi\,d\gamm-\int_{\gamma_{\T}^-}\left(\K\f_++\g\right)\psi\,d\gamm-\int_\Omega\int_{\R^3}\mathring\f\psi\left(0\right)\,dvdx
			\end{align}
			(in particular, especially the integral of $\left(E+\v\times H\right)\f\cdot\partial_v\psi$ is supposed to exist);
			\item for all $\vartheta\in\Theta_{\T}$ it holds that
			\begin{subequations}\label{eq:Maxwellweak}
				\begin{align}
				0&=\int_0^{\T}\int_{\R^3}\left(\varepsilon E\cdot\partial_t\vartheta-H\cdot\curl_x\vartheta-4\pi j\cdot\vartheta\right)\,dxdt+\int_{\R^3}\varepsilon\mathring{E}\cdot\vartheta\left(0\right)\,dx,\label{eq:Maxwellweak1}\\
				0&=\int_0^{\T}\int_{\R^3}\left(\mu H\cdot\partial_t\vartheta+E\cdot\curl_x\vartheta\right)\,dxdt+\int_{\R^3}\mu\mathring{H}\cdot\vartheta\left(0\right)\,dx.\label{eq:Maxwellweak2}
				\end{align}
			\end{subequations}
		\end{enumerate}
	\end{definition}
	We easily derive this weak formulation after multiplying the respective equations of \eqref{eq:WholeSystem} with the respective test function and integrating by parts, assuming all functions are smooth enough.
	
	\subsection{Statement of main results}\label{sec:statement}
	We have two main results: The first is about existence of weak solutions in the case of partially absorbing boundary conditions for particle species $1,\dots,N'$ and purely reflecting boundary conditions for particle species $N'+1,\dots,N$. We assume that the following conditions hold:
	\begin{condition}\label{cond:data}\
		\begin{itemize}
			\item $0\leq\mathring\f\in\left(L_\kin^1\cap L^\infty\right)\left(\Omega\times\R^3\right)$ for all $\alpha=1,\dots,N$;
			\item $\K$ is given by \eqref{eq:specialK} for $\alpha=1,\dots,N$;
			\item $0\leq\a\in L^\infty\left(\gamma_{\T}^-\right)$, $\a_0:=\|\a\|_{L^\infty\left(\gamma_{\T}^-\right)}<1$, $0\leq\g\in\left(L_{\kin,\lt}^1\cap L_\lt^\infty\right)\left(\gamma_{\T}^-\right)$ for $\alpha=1,\dots,N'$;
			\item $0\leq\a\in L^\infty\left(\gamma_{\T}^-\right)$, $\|\a\|_{L^\infty\left(\gamma_{\T}^-\right)}=1$, $\g=0$ for $\alpha=N'+1,\dots,N$;
			\item $\mathring E,\mathring H\in L^2\left(\R^3;\R^3\right)$;
			\item $\varepsilon,\mu\in L^\infty\left(\R^3;\R^{3\times 3}\right)$ such that there are $\sigma,\sigma'>0$ satisfying $\sigma\leq\varepsilon,\mu\leq\sigma'$, and $\varepsilon=\mu=1$ on $\Omega$;
			\item $u\in L_{\lt}^1\left(\I;L^2\left(\Gamma;\R^3\right)\right)$.
		\end{itemize}
	\end{condition}
	Then our first main result is (see Section \ref{sec:existence}):
	\begin{theorem}\label{thm:Existence}
		Let $\T\in\left]0,\infty\right]$, $\Omega\subset\R^3$ be bounded domain such that $\partial\Omega$ is of class $C^{1,\kappa}$ for some $0<\kappa\leq 1$, and let Condition \ref{cond:data} hold. Then there exist functions
		\begin{itemize}
			\item $\f\in L_\lt^\infty\left(\I;\left(L_\kin^1\cap L^\infty\right)\left(\Omega\times\R^3\right)\right),\f_+\in\left(L_{\kin,\lt}^1\cap L_\lt^\infty\right)\left(\gamma_{\T}^+,d\gamm\right)$, $\alpha=1,\dots,N'$, all nonnegative,
			\item $\f\in L^\infty\left(\I\times\Omega\times\R^3\right)\cap L_\lt^\infty\left(\I;L_\kin^1\left(\Omega\times\R^3\right)\right),\f_+\in L^\infty\left(\gamma_{\T}^+,d\gamm\right)$, $\alpha=N'+1,\dots,N$, all nonnegative,
			\item $\left(E,H\right)\in L_\lt^\infty\left(\I;L^2\left(\R^3;\R^6\right)\right)$
		\end{itemize}
		such that $\left(\left(\f,\f_+\right)_\alpha,E,H,j\right)$ is a weak solution of \eqref{eq:WholeSystem} on the time interval $\I$ in the sense of Definition \ref{def:WeakSolWholeSys}, where
		\begin{align*}
		j&=j^\inte+u=\suma\e\int_{\R^3}\v\f\,dv+u,\quad j^\inte\in L_\lt^\infty\left(\I;\left(L^1\cap L^{\frac{4}{3}}\right)\left(\Omega;\R^3\right)\right).
		\end{align*}
		Furthermore, we have the following estimates for $1\leq p\leq\infty$ and $0<T\in\I$:\\
		Estimates on $\f,\f_+$:
		\begin{align}
		\left\|\f\right\|_{L^\infty\left(0,T;L^p\left(\Omega\times\R^3\right)\right)}&\leq\left\|\mathring \f\right\|_{L^p\left(\Omega\times\R^3\right)}+\left(1-\a_0\right)^{\frac{1}{p}-1}\left\|\g\right\|_{L^p\left(\gamma_T^-,d\gamm\right)},\label{eq:estf1}\\
		\left\|\f_+\right\|_{L^p\left(\gamma_T^+,d\gamm\right)}&\leq\left(1-\a_0\right)^{-\frac{1}{p}}\left\|\mathring \f\right\|_{L^p\left(\Omega\times\R^3\right)}+\left(1-\a_0\right)^{-1}\left\|\g\right\|_{L^p\left(\gamma_T^-,d\gamm\right)}
		\end{align}
		for $\alpha=1,\dots,N'$ and
		\begin{align}
		\left\|\f\right\|_{L^\infty\left(0,T;L^p\left(\Omega\times\R^3\right)\right)}&\leq\left\|\mathring \f\right\|_{L^p\left(\Omega\times\R^3\right)},\\
		\left\|\f_+\right\|_{L^\infty\left(\gamma_T^+,d\gamm\right)}&\leq\left\|\mathring \f\right\|_{L^\infty\left(\Omega\times\R^3\right)}
		\end{align}
		for $\alpha=N'+1,\dots,N$.\\
		Energy-like estimate:
		\begin{align}
		&\left(\sumb\left(1-\a_0\right)\int_{\gamma_T^+}\vo \f_+\,d\gamm\vphantom{\left\|\suma\int_\Omega\int_{\R^3}\vo\f\left(\cdot\right)\,dvdx+\frac{\sigma}{8\pi}\left\|\left(E,H\right)\left(\cdot\right)\right\|_{L^2\left(\R^3;\R^6\right)}^2\right\|_{L^\infty\left(\left[0,T\right]\right)}}+\left\|\suma\int_\Omega\int_{\R^3}\vo\f\left(\cdot\right)\,dvdx+\frac{\sigma}{8\pi}\left\|\left(E,H\right)\left(\cdot\right)\right\|_{L^2\left(\R^3;\R^6\right)}^2\right\|_{L^\infty\left(\left[0,T\right]\right)}\right)^{\frac{1}{2}}\nonumber\\
		&\leq\left(\suma\int_\Omega\int_{\R^3}\vo\mathring\f\,dvdx+\sumb\int_{\gamma_T^-}\vo\g\,d\gamm+\frac{\sigma'}{8\pi}\left\|\left(\mathring E,\mathring H\right)\right\|_{L^2\left(\R^3;\R^6\right)}^2\right)^{\frac{1}{2}}\nonumber\\
		&\phantom{=\;}+\sqrt{2\pi}\sigma^{-\frac{1}{2}}\left\|u\right\|_{L^1\left(0,T;L^2\left(\Gamma;\R^3\right)\right)}.
		\end{align}
		Estimate on $j^\inte$:
		\begin{align}\label{eq:estjint}
		&\left\|j^\inte\right\|_{L^\infty\left(0,T;L^{\frac{4}{3}}\left(\Omega;\R^3\right)\right)}\nonumber\\
		&\leq\left(\suma\left|\e\right|^4\left(\frac{4\pi}{3}\left\|\mathring\f\right\|_{L^\infty\left(\Omega\times\R^3\right)}+1+\begin{cases}\frac{4\pi}{3\left(1-\a_0\right)}\left\|\g\right\|_{L^\infty\left(\gamma_T^-\right)},&\alpha\leq N'\\0,&\alpha>N'\end{cases}\right)^4\right)^{\frac{1}{4}}\nonumber\\
		&\phantom{=\;}\cdot\left(\left(\suma\int_\Omega\int_{\R^3}\vo\mathring\f\,dvdx+\sumb\int_{\gamma_T^-}\vo\g\,d\gamm+\frac{\sigma'}{8\pi}\left\|\left(\mathring E,\mathring H\right)\right\|_{L^2\left(\R^3;\R^6\right)}^2\right)^{\frac{1}{2}}\right.\nonumber\\
		&\omit\hfill$\displaystyle\left.\vphantom{\left(\suma\int_\Omega\int_{\R^3}\vo\mathring\f\,dvdx+\sumb\int_{\gamma_T^-}\vo\g\,d\gamm+\frac{\sigma'}{8\pi}\left\|\left(\mathring E,\mathring H\right)\right\|_{L^2\left(\R^3;\R^6\right)}^2\right)^{\frac{1}{2}}}+\sqrt{2\pi}\sigma^{-\frac{1}{2}}\left\|u\right\|_{L^1\left(0,T;L^2\left(\Gamma;\R^3\right)\right)}\right)^{\frac{3}{2}}.$
		\end{align}
	\end{theorem}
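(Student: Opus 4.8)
\emph{Proof proposal.} The plan is to follow the now-standard route for weak solutions of the Vlasov-Maxwell system, adapting the argument of Guo \cite{Guo93} to the present boundary data and to the coefficients $\varepsilon,\mu$. First I would regularize all data: mollify $\mathring\f$, $\g$, and $u$ so that they become smooth and, away from the grazing set $\gamma^0$, compactly supported, while keeping the sign and the norms from Condition \ref{cond:data} under control. Simultaneously I would introduce a cut-off parameter $\iota$, replacing the force field $E+\v\times H$ in the Vlasov equation by $\beta_\iota(E+\v\times H)$, where $\beta_\iota$ truncates large field values (equivalently, one may restrict the momentum support to a ball). This renders the transport coefficients bounded, so that the characteristic flow is globally well-defined; note that the force $\e(E+\v\times H)$ is divergence-free in $v$, since $E$ is independent of $v$ and $\v=\partial_v\vo$ is a $v$-gradient, so the free transport is measure-preserving and will propagate $L^p$-bounds.

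For the cut-off system I would solve by an iteration scheme. Given fields $(E,H)\in L_\lt^\infty(\I;L^2(\R^3;\R^6))$, the linear Vlasov equation with bounded coefficients can be solved along characteristics: following a trajectory backward in time, it either reaches $t=0$, where $\mathring\f$ is prescribed, or hits $\partial\Omega$, where the value is read off from the boundary condition $\f_-=\a(K\f_+)+\g$. For the species $\alpha\le N'$ the contraction is guaranteed by $\a_0<1$ (a geometric series in the number of reflections), while the purely reflecting species $\alpha>N'$, with $\a=1$ and $\g=0$, are treated separately. From $\f$ one forms $j^\inte$, adds $u$, and solves the linear Maxwell system; here the positive definiteness $\sigma\le\varepsilon,\mu\le\sigma'$ makes $E\mapsto(\int\varepsilon E\cdot E\,dx)^{1/2}$ an equivalent Hilbert norm, so existence and the basic energy identity for the linear Maxwell evolution follow from standard semigroup or energy methods. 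Iterating produces a solution of the cut-off system.

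Next I would derive the stated estimates uniformly in $\iota$. The $L^p$-bounds on $\f$ and on the trace $\f_+$ follow from the transport structure: $\f$ is conserved along characteristics, and each passage through the boundary multiplies the incoming mass by $\a\le\a_0<1$, producing the geometric factors $(1-\a_0)^{-1}$; for $\alpha>N'$ only the $L^\infty$ and kinetic bounds survive. The energy-like estimate comes from testing the Vlasov equation against $\vo$ and the Maxwell equations against $E,H$: the field energy $\tfrac{\sigma}{8\pi}\|(E,H)\|_{L^2}^2$ and the kinetic mass $\suma\int\vo\f$ balance, the boundary contributes the favorable $(1-\a_0)$-weighted outflow term for $\alpha\le N'$, and the only source is $u$, yielding the $\sqrt{2\pi}\sigma^{-1/2}\|u\|_{L^1(0,T;L^2(\Gamma))}$ term via Cauchy-Schwarz and Gronwall. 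Finally the $L^{4/3}$-bound on $j^\inte$ is the classical moment interpolation: at fixed $x$, splitting $\int\v\f\,dv$ over small and large $|v|$ and trading the $L^\infty$-bound on $\f$ against the kinetic energy density $\int\vo\f\,dv$ gives the pointwise bound by $\|\f\|_\infty^{1/4}(\int\vo\f\,dv)^{3/4}$, whence integration in $x$ produces the exponent $4/3$, the $\|\f\|_\infty$-factor to the power $\tfrac14$, and the energy to the power $\tfrac32$.

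The main obstacle is removing the cut-off, i.e.\ letting $\iota\to0$. The uniform estimates yield weak-$*$ limits $\f,E,H$, but the nonlinearity $(E+\v\times H)\cdot\partial_v\f$ is a product of two merely weakly convergent sequences, so weak convergence alone is insufficient. This is exactly where I would invoke the velocity-averaging lemma of DiPerna and Lions \cite{DL89}: since $\f$ solves a transport equation with the force bounded in a suitable space, the velocity moments $\int\varphi(v)\f\,dv$ -- in particular $\rho^\inte$ and $j^\inte$ -- are strongly compact in $L^p_{\loc}$. Strong convergence of $j^\inte$ lets me pass to the limit in the Maxwell source and identify the limit fields $(E,H)$, after which the Vlasov weak form passes to the limit because the offending product is now of the form (weak)$\times$(strong). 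Care is required at the boundary: one must show that the traces $\f_+$ converge and that the reflection-absorption condition is preserved in the limit, which is precisely why the test functions in $\Psi_{\T}$ are kept a positive distance from $\gamma^0$ and from $\{0\}\times\partial\Omega\times\R^3$. Controlling these boundary traces, rather than the interior passage to the limit, is where I expect the technical heart of the argument to lie.
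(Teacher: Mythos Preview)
Your overall strategy---regularize, solve a cut-off system by iterating linear Vlasov against linear Maxwell, derive uniform estimates, pass to the limit via velocity averaging---is the paper's strategy. Two points, however, are handled differently and the first of them is a genuine gap in your plan as written.

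\emph{The r\^ole of the cut-off.} Truncating the force by $\beta_\iota(E+\v\times H)$ and ``restricting the momentum support to a ball'' are not equivalent, and neither is quite what the paper does. The paper does not truncate the force at all: it smooths $\mathring E,\mathring H$ into $H^3$, smooths $\varepsilon,\mu$, replaces $\a$ by $\frac{k}{k+1}\a$ for the purely reflecting species, and in each step replaces the current by a $C_c^\infty$ approximation $\overline j_{k+1}$; the Maxwell iterates are then automatically $C^{0,1}$, so the force in the next Vlasov step is Lipschitz without any truncation. The actual cut-off sits only in the \emph{definition of the internal current},
\[
j_{k+1}^\inte=\suma\e\int_{B_R}\v\f_{k+1}\,dv,
\]
and its purpose is to put $j^\inte$ into $L^2_x$ (without the cut-off one only has $L^{4/3}_x$), which is what the Maxwell energy identity needs. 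This particular choice is essential for the next point: testing Vlasov with the truncated energy $\beta(v)=\min(\vo,\sqrt{m_\alpha^2+R^2})$ produces exactly $\int_\Omega E\cdot\int_{B_R}\e\v\f\,dv$, i.e.\ the same $B_R$-current that drives Maxwell, so the cross terms can eventually be made to cancel. With your force truncation the Vlasov side would produce $\int\beta_\iota(E+\v\times H)\cdot\v\f$, which has no reason to match the Maxwell term, and I do not see how to close an energy estimate uniformly in $\iota$.

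\emph{The energy estimate along the iteration.} Even with the right cut-off, ``the field energy and the kinetic mass balance'' is too quick. Along the iteration the Vlasov step uses $(E_{k-1},H_{k-1})$ and yields $\int_\Omega E_{k-1}\cdot j_k^\inte$, while the Maxwell identity yields $-\int_\Omega E_k\cdot\overline j_k$; the indices are shifted and the current smoothed, so there is no cancellation. The paper uses the momentum-averaging lemma \emph{a second time} to show $\int(E_{k-1}-E_k)\cdot j_k^\inte\to 0$ as $k\to\infty$, combines this with the construction of $\overline j_k$ and a quadratic Gronwall argument, and only after the limit $k\to\infty$ obtains an energy inequality for the cut-off solution whose right-hand side is independent of $R$. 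Then $R\to\infty$ is a second, separate limit. Your proposal to test the (weak) solution of the cut-off system directly against $\vo$ would require justifying that $\vo$ is an admissible weight for a merely weak solution, which is precisely what working at the iteration level (where the $\f_k$ are strong solutions in the sense of Beals--Protopopescu) avoids.

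Finally, your expectation that controlling the boundary traces is the technical heart is misplaced: once test functions are restricted to $\Psi_{\T}$, the $\gamma_{\T}^\pm$-terms pass to the limit by plain weak or weak-$*$ convergence. The real work is the energy estimate above.
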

	The second main result answers the question whether the divergence equations \eqref{eq:diveqn} are automatically satisfied if we have a weak solution of \eqref{eq:WholeSystem}. To this end, we have to introduce an external charge density $\rho^u$ corresponding to $u$ and assume that local conservation of the external charge holds:
	\begin{condition}\label{cond:extchargedens}
		There are $\rho^u\in L_\loc^1\left(\I\times\Gamma\right)$ and $\mathring\rho^u\in L_\loc^1\left(\Gamma\right)$ such that $\partial_t\rho^u+\div_xu=0$ on $\left]0,\T\right[\times\R^3$ and $\rho^u\left(0\right)=\mathring\rho^u$ on $\Gamma$, which is to be understood in the following weak sense:
		\begin{align*}
		0=\int_0^{\T}\int_{\R^3}\left(\rho^u\partial_t\psi+u\cdot\partial_x\psi\right)\,dxdt+\int_{\R^3}\mathring\rho^u\psi\left(0\right)\,dx
		\end{align*}
		for any $\psi\in C^\infty\left(\I\times\R^3\right)$ with $\supp\psi\subset\left[0,\T\right[\times\R^3$ compact. Here, $\rho^u$ and $\mathring\rho^u$ are extended by zero outside $\Gamma$.
	\end{condition}
	Then our second main result is (see Section \ref{sec:diveqn}):
	\begin{theorem}\label{thm:reddivE}
		Let $\Omega\subset\R^3$ be a bounded domain with boundary $\partial\Omega$ of class $C^1\cap W^{2,\infty}$. Furthermore let, for all $\alpha\in\left\{1,\dots,N\right\}$, $\f\in\left(L_\lt^1\cap L_{\kin,\lt}^2\cap L_\lt^\infty\right)\left(\I\times\Omega\times\R^3\right)$, $\f_+\in L_\lt^\infty\left(\gamma_{\T}^+\right)$, $\left(E,H\right)\in L_\lt^q\left(\I;L^2\left(\R^3;\R^6\right)\right)$ for some $q>2$, $\mathcal K_\alpha\colon L_\lt^\infty\left(\gamma_{\T}^+\right)\to L_\lt^\infty\left(\gamma_{\T}^-\right)$, $\g\in L_\lt^\infty\left(\gamma_{\T}^-\right)$, $\mathring\f\in\left(L^1\cap L^\infty\right)\left(\Omega\times\R^3\right)$, $\left(\mathring E,\mathring H\right)\in L^2\left(\R^3;\R^6\right)$, $\varepsilon,\mu\in L_\loc^\infty\left(\R^3;\R^{3\times 3}\right)$ with $\varepsilon=\mu=1$ on $\Omega$, and $u\in L_\loc^1\left(\I\times\Gamma;\R^3\right)$ such that the tuple $\left(\left(\f,\f_+\right)_\alpha,E,H,j^\inte+u\right)$ is a weak solution of \eqref{eq:WholeSystem} in the sense of Definition \ref{def:WeakSolWholeSys}. Furthermore, assume that Condition \ref{cond:extchargedens} holds. Moreover, let initially
		\begin{align*}
		\div_x\left(\varepsilon\mathring E\right)&=4\pi\left(\mathring\rho^\inte+\mathring\rho^u\right):=4\pi\left(\suma\e\int_{\R^3}\mathring\f\,dv+\mathring\rho^u\right),\\
		\div_x\left(\mu\mathring H\right)&=0
		\end{align*}
		on $\R^3$ be satisfied in the sense of distributions. Then:
		\begin{enumerate}[label=(\roman*)]
			\item There holds
			\begin{align*}
			\div_x\left(\mu H\right)=0
			\end{align*}
			on $\left]0,\T\right[\times\R^3$ in the sense of distributions. (This even holds under the weakest possible assumptions, i.e., all integrals in Definition \ref{def:WeakSolWholeSys} exist.)
			\item\label{thm:reddivEi} We have
			\begin{align*}
			\div_x\left(\varepsilon E\right)=4\pi\left(\rho^\inte+\rho^u\right)
			\end{align*}
			on $\left]0,\T\right[\times\left(\R^3\setminus\partial\Omega\right)$ in the sense of distributions, i.e.,
			\begin{align*}
			0=\int_0^{\T}\int_{\R^3}\left(\varepsilon E\cdot\partial_x\varphi+4\pi\left(\rho^\inte+\rho^u\right)\varphi\right)\,dxdt
			\end{align*}
			for all $\varphi\in C_c^\infty\left(\left]0,\T\right[\times\left(\R^3\setminus\partial\Omega\right)\right)$.
			\item\label{thm:reddivEii} If, additionally to the given assumptions, $\f_+\in L_\lt^1\left(\gamma_{\T}^+,d\gamm\right)$, $\g\in L_\lt^1\left(\gamma_{\T}^-,d\gamm\right)$, and $\mathcal K_\alpha\colon\left(L_\lt^1\cap L_\lt^\infty\right)\left(\gamma_{\T}^+,d\gamm\right)\to\left(L_\lt^1\cap L_\lt^\infty\right)\left(\gamma_{\T}^-,d\gamm\right)$ for all $\alpha\in\left\{1,\dots,N\right\}$, then
			\begin{align}\label{eq:divErho}
			\div_x\left(\varepsilon E\right)=4\pi\left(\rho^\inte+\rho^u+S_{\partial\Omega}\right)
			\end{align}
			on $\left]0,\T\right[\times\R^3$ in the sense of distributions, i.e.,
			\begin{align*}
			0=\int_0^{\T}\int_{\R^3}\left(\varepsilon E\cdot\partial_x\varphi+4\pi\left(\rho^\inte+\rho^u\right)\varphi\right)\,dxdt+4\pi S_{\partial\Omega}\varphi
			\end{align*}
			for all $\varphi\in C_c^\infty\left(\left]0,\T\right[\times\R^3\right)$. Here, the distribution $S_{\partial\Omega}$, satisfying $\supp S_{\partial\Omega}\subset\left]0,\T\right[\times\partial\Omega$, is given by
			\begin{align*}
			S_{\partial\Omega}\varphi&=\int_0^{\T}\int_{\partial\Omega}\varphi\left(t,x\right)\int_0^tn\left(x\right)\cdot\left(\suma\e\int_{\left\{v\in\R^3\mid n\left(x\right)\cdot v>0\right\}}\v\f_+\left(s,x,v\right)\,dv\right.\\
			&\phantom{=\;\;}\left.+\suma\e\int_{\left\{v\in\R^3\mid n\left(x\right)\cdot v<0\right\}}\v\left(\left(\mathcal K_\alpha\f_+\right)\left(s,x,v\right)+\g\left(s,x,v\right)\right)\,dv\right)\,dsdS_xdt.
			\end{align*}
		\end{enumerate}
	\end{theorem}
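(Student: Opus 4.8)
The plan is to handle (i)--(iii) with a single device: test the weak Maxwell equations against a \emph{spatial gradient} $\vartheta=\partial_x\Phi$, so the curl term disappears via $\curl_x\partial_x\Phi=0$, and choose $\Phi$ as a time-antiderivative of the scalar test function at hand. Concretely, for $\varphi\in C_c^\infty(]0,\T[\times\R^3)$ set $\Phi(t,x):=-\int_t^{\T}\varphi(s,x)\,ds$ and $\varphi_\infty(x):=\int_0^{\T}\varphi(s,x)\,ds$; then $\partial_t\Phi=\varphi$, $\Phi$ has compact support in $[0,\T[\times\R^3$ (it vanishes near $t=\T$), $\supp_x\Phi\subset\supp_x\varphi$, and $\Phi(0,\cdot)=-\varphi_\infty\in C_c^\infty(\R^3)$. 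Inserting $\vartheta=\partial_x\Phi\in\Theta_{\T}$ into \eqref{eq:Maxwellweak2} and using $\curl_x\partial_x\Phi=0$ leaves $\int_0^{\T}\int_{\R^3}\mu H\cdot\partial_x\varphi\,dxdt-\int_{\R^3}\mu\mathring H\cdot\partial_x\varphi_\infty\,dx=0$; the second term equals $\langle\div_x(\mu\mathring H),\varphi_\infty\rangle=0$ by the initial hypothesis, which is exactly $\div_x(\mu H)=0$. Since $E$ never enters (the curl kills it), only existence of the integrals is used -- this is the claimed weakest-assumptions statement for (i).

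For (ii) I run the same computation on \eqref{eq:Maxwellweak1}, obtaining, after the initial divergence condition, $\int_0^{\T}\int_{\R^3}\varepsilon E\cdot\partial_x\varphi\,dxdt=4\pi\int_0^{\T}\int_{\R^3}(j^\inte+u)\cdot\partial_x\Phi\,dxdt-4\pi\int_{\R^3}(\mathring\rho^\inte+\mathring\rho^u)\varphi_\infty\,dx$. The $u$-part is rewritten by testing Condition \ref{cond:extchargedens} with $\Phi$. It then remains to establish the internal continuity identity $\int_0^{\T}\int_{\R^3}\rho^\inte\varphi\,dxdt=-\int_0^{\T}\int_{\R^3}j^\inte\cdot\partial_x\Phi\,dxdt+\int_{\R^3}\mathring\rho^\inte\varphi_\infty\,dx$ for $\varphi$ supported in $\R^3\setminus\partial\Omega$: test \eqref{eq:Vlasovweak} with $\psi=\Phi\,\theta_R$, where $\theta_R$ is a velocity cut-off, $\theta_R\equiv1$ on $B_R$, $\supp\theta_R\subset B_{2R}$. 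Because $\supp_x\Phi$ avoids $\partial\Omega$, $\psi\in\Psi_{\T}$ and \emph{both boundary integrals vanish}. Multiplying by $\e$ and summing, the transport terms yield $\rho^\inte$ and $j^\inte$, the initial term yields $\mathring\rho^\inte$, and the field term -- carried by $\partial_v\theta_R$ and hence supported in $R\le|v|\le2R$ -- is bounded, via $|\v|<1$, Cauchy--Schwarz in $v$ and $\vo\ge|v|$, by $C\int(|E|+|H|)\bigl(\int_{|v|\ge R}\vo\f^2\,dv\bigr)^{1/2}\,dxdt$, which tends to $0$ as $R\to\infty$ by the $L^2_{\kin,\lt}$-bound on $\f$ together with $(E,H)\in L^q_{\lt}L^2$, $q>2$. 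Adding the internal and external identities gives (ii).

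For (iii), $\varphi$ may now be nonzero on $\partial\Omega$, so $\psi=\Phi\theta_R$ is no longer admissible: its support meets the grazing set $\gamma^0$. I therefore multiply in a further cut-off $\omega_\delta(x,v)$ vanishing in a $\delta$-tube around $\{v\cdot n(x)=0\}$ and equal to $1$ outside it, so $\psi=\Phi\theta_R\omega_\delta\in\Psi_{\T}$. Now the boundary integrals survive; as $R\to\infty$, $\delta\to0$ they converge -- the grazing set being $d\gamm$-null, using the extra hypothesis $\f_+,\K\f_+,\g\in L^1_{\lt}(d\gamm)$ -- to $\int_{\gamma_{\T}^+}\f_+\Phi\,d\gamm-\int_{\gamma_{\T}^-}(\K\f_++\g)\Phi\,d\gamm$. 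Writing $d\gamm=|\v\cdot n|\,dvdS_xdt$ and recombining the $\pm$ parts rewrites this (after $\suma\e$) as $\int_0^{\T}\int_{\partial\Omega}\Phi(t,x)\,n(x)\cdot\bigl[\suma\e\int_{v\cdot n>0}\v\f_+\,dv+\suma\e\int_{v\cdot n<0}\v(\K\f_++\g)\,dv\bigr]\,dS_xdt$; substituting $\Phi(t,x)=-\int_t^{\T}\varphi\,ds$ and applying Fubini in $(s,t)$ turns it into $-S_{\partial\Omega}\varphi$. Since this boundary contribution enters the internal continuity identity with a $+$ sign, threading it through the same Maxwell/continuity combination as in (ii) yields precisely \eqref{eq:divErho}.

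The main obstacle is the grazing cut-off in (iii). The extra pieces $\Phi\theta_R\,\v\cdot\partial_x\omega_\delta$ and $\e\Phi\theta_R(E+\v\times H)\cdot\partial_v\omega_\delta$ generated when testing the transport and field terms carry factors of order $\delta^{-1}$, and one must show they vanish as $\delta\to0$ so that only the genuine boundary flux remains; this is where the regularity $\partial\Omega\in C^1\cap W^{2,\infty}$ and the trace integrability are spent, and it is the analytically delicate \enquote{crossing $\partial\Omega$} step flagged in the introduction. By contrast, the large-velocity limit $R\to\infty$ and the Maxwell/continuity bookkeeping are routine once the stated integrability is in hand.
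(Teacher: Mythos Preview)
Your treatment of (i) and (ii) is correct and coincides with the paper's: test \eqref{eq:Maxwellweak1}--\eqref{eq:Maxwellweak2} with $\vartheta=\partial_x\Phi$, $\Phi(t,x)=-\int_t^{\T}\varphi(s,x)\,ds$, use the initial divergence hypotheses and Condition~\ref{cond:extchargedens}, and for (ii) obtain the internal continuity identity from \eqref{eq:Vlasovweak} with a velocity cut-off. Your remark that $\Phi\theta_R\in\Psi_{\T}$ directly when $\supp_x\varphi\cap\partial\Omega=\emptyset$ is valid and in fact slightly more direct than the paper, which routes (ii) through its general approximation lemma as well.

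For (iii) there are two genuine gaps. First, $\Psi_{\T}$ requires support away from \emph{two} sets: the grazing set $\gamma_{\T}^0$ \emph{and} $\{0\}\times\partial\Omega\times\R^3$. Your $\omega_\delta$ addresses only the former. Since $\Phi(0,x)=-\int_0^{\T}\varphi(s,x)\,ds$ is generically nonzero on $\partial\Omega$, the function $\Phi\theta_R\omega_\delta$ violates the second condition and is \emph{not} in $\Psi_{\T}$; an additional cut-off near $\{t=0,\,x\in\partial\Omega\}$ is indispensable. Second, you correctly flag the $\delta^{-1}$ blow-up in $\partial_x\omega_\delta$, $\partial_v\omega_\delta$ as the crux, but do not resolve it. This step is substantially more delicate than your sketch indicates. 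The paper (its Lemma~\ref{lma:approxpsi}) works in boundary-flattened coordinates $G=(F(x),A(x)v)$ with an orthogonal frame $A(x)$ aligned to $n(x)$, uses the combined cut-off $\eta_k(t,G)=\eta\bigl(k^2(G_3^2+G_6^2)\bigr)\,\eta\bigl(k^2(t^2+G_3^2)\bigr)$, and proves convergence of the approximants in the tailored mixed norm $W^{1,p_t2_x1_v}$ with $1\le p<2$. That specific norm is exactly what the passage to the limit in \eqref{eq:Vlasovweak} requires under $(E,H)\in L^q_{\lt}L^2$, $q>2$ (H\"older forces $p=q'<2$); for instance the $\partial_t$-piece coming from the time--boundary cut-off contributes $Ck^{1/2-1/p}$, which tends to zero only because $p<2$. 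None of this exponent bookkeeping is present in your proposal. The paper also separates the two approximations --- first remove the $\Psi_{\T}$-support restrictions for $v$-compactly supported test functions (Lemmas~\ref{lma:approxpsi}--\ref{lma:weakVlpsi1}), then remove the $v$-cut-off (Lemma~\ref{lma:weakVlpsi2}) --- rather than combining grazing and velocity cut-offs in a single test function as you attempt.
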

	Note that $\K$ need not necessarily have the structure \eqref{eq:specialK} in Theorem \ref{thm:reddivE}.
	
	\section{Existence of weak solutions}\label{sec:existence}
	In this section, we proceed similarly to Guo \cite{Guo93} with necessary modifications being made, who considered the problem with $\varepsilon=\mu=1$, $u=0$, and perfect conductor boundary conditions for the electromagnetic fields on $\partial\Omega$. Citations of this paper always refer to the relativistic version of the respective lemma, theorem etc., see \cite[Section 5]{Guo93}.
	
	\subsection{Results about linear Vlasov and Maxwell equations}\label{sec:linearVM}
	The strategy is to consider an iteration scheme where we decouple Vlasov's equations from Maxwell's equations in each iteration step and hence only have to solve linear problems. Thus it is natural to consider linear Vlasov and Maxwell equations first. Regarding the Vlasov part, we refer to Beals and Protopopescu \cite{BP87}. Considering the linear problem (on some $\left[0,T\right]$)
	\begin{subequations}\label{eq:linearVlasov}
		\begin{align}
		Yf:=\partial_t f+\v\cdot\partial_x f+F\cdot\partial_v f&=0,\\
		f_-&=\mathcal K f_++g,\\
		f\left(0\right)&=\mathring f,
		\end{align}
	\end{subequations}
	with a Lipschitz continuous, bounded force field $F$, that is divergence free with respect to $v$, they introduced a space of test functions associated to $F$. As in \cite[Lemma 2.1.]{Guo93} we can show that our test function space $\Psi_T$ belongs to that test function space for each $F$ and $T$, where one needs the assumption that $\partial\Omega$ be of class $C^{1,\kappa}$ and that the support of any $\psi\in\Psi_T$ be away from $\gamma_T^0$ and $\left\{0\right\}\times\partial\Omega\times\R^3$. In \cite{BP87}, \enquote{strong} solutions in a set of $L^p$-functions for which a trace on the boundary exists in the sense of the following extended Green's identity were searched for:
	\begin{align*}
	\int_0^T\int_\Omega\int_{\R^3}\left(\phi Yf+fY\phi\right)dvdxdt=\int_{D_T^+}f^+\phi\,d\nu^+-\int_{D_T^-}f^-\phi\,d\nu^-,
	\end{align*}
	which is supposed to hold for all test functions $\phi$. Here, $D_T^\pm$ are the outgoing/incoming sets associated to the characteristic flow of $Y$ and $d\nu^\pm$ are associated measures. In our case, we can split $D_T^+\approx\gamma_T^+\cup\left(\left\{T\right\}\times\Omega\times\R^3\right)$, $D_T^-\approx\gamma_T^-\cup\left(\left\{0\right\}\times\Omega\times\R^3\right)$ up to negligible sets (cf. \cite{BP87}). Then, $d\nu^\pm=d\gamm$ on $\gamma_T^\pm$ and $d\nu^\pm=dvdx$ on $t=0$ and $t=T$, and we decompose $f^+=\left(f_+,f\left(T\right)\right)$, $f^-=\left(f_-,f\left(0\right)\right)$ accordingly.
	
	\begin{proposition}\label{prop:Vleqest}
		Consider $\mathcal K=aK$, where $a\in L^\infty\left(\gamma_{\T}^-\right)$ such that $a_0:=\left\|a\right\|_{L^\infty\left(\gamma_{\T}^-\right)}<1$. Let $F$ be Lipschitz continuous, bounded, and divergence free with respect to $v$, and let $\mathring f\in\left(L^1\cap L^\infty\right)\left(\Omega\times\R^3\right)$, $g\in \left(L_\lt^1\cap L_\lt^\infty\right)\left(\gamma_{\T}^-,d\gamm\right)$ both be nonnegative. Then there is a unique, nonnegative strong solution $f\in L_{\lt}^\infty\left(\I;\left(L^1\cap L^\infty\right)\left(\Omega\times\R^3\right)\right)$ with nonnegative trace $f_\pm\in \left(L_\lt^1\cap L_\lt^\infty\right)\left(\gamma_{\T}^\pm,d\gamm\right)$ of \eqref{eq:linearVlasov} on $\I$. In particular, Definition \ref{def:WeakSolWholeSys} (ii) holds for $\left(f,f_+\right)$, where the Lorentz force is replaced by $F$. Moreover, we have
		\begin{align}\label{eq:Vleqest1}
		\left(1-a_0\right)^{\frac{1}{p}}\left\|f_+\right\|_{L^p\left(\gamma_T^+,d\gamm\right)},\left\|f\left(T\right)\right\|_{L^p\left(\Omega\times\R^3\right)}\leq\left\|\mathring f\right\|_{L^p\left(\Omega\times\R^3\right)}+\left(1-a_0\right)^{\frac{1}{p}-1}\left\|g\right\|_{L^p\left(\gamma_T^-,d\gamm\right)}
		\end{align}
		for any $0<T\in\I$ and $1\leq p\leq\infty$. If additionally $\mathring f\in L_{\kin}^1\left(\Omega\times\R^3\right)$ and $g\in L_{\kin,\lt}^1\left(\gamma_{\T}^-,d\gamm\right)$, then
		\begin{align}\label{eq:Vleqest3}
		&\left(1-a_0\right)\int_{\gamma_T^+\cap\left\{\left|v\right|<R\right\}}\vo f_+\,d\gamm+\int_{\Omega}\int_{B_R}\vo f\left(T\right)\,dvdx\nonumber\\
		&\leq\int_{\Omega}\int_{\R^3}\vo\mathring f\,dvdx+\int_{\gamma_T^-}\vo g\,d\gamm+\int_0^T\int_\Omega\int_{B_R}F\cdot\v f\,dvdxdt
		\end{align}
		and 
		\begin{align}\label{eq:Vleqest4}
		&\left\|\int_{B_R}f\left(T,\cdot,v\right)\,dv\right\|_{L^{\frac{4}{3}}\left(\Omega\right)}\nonumber\\
		&\leq\left(\frac{4\pi}{3}\left\|\mathring f\right\|_{L^\infty\left(\Omega\times\R^3\right)}+\frac{4\pi}{3}\left(1-a_0\right)^{-1}\left\|g\right\|_{L^\infty\left(\gamma_T^-\right)}+1\right)\left(\int_{\Omega}\int_{B_R}\vo f\left(T\right)\,dvdx\right)^{\frac{3}{4}}
		\end{align}
		for any $0<T\in\I$ and $0<R<\infty$.
	\end{proposition}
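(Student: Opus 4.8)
My plan is to construct the solution by the method of characteristics and then read off every estimate from suitable (weighted) versions of the extended Green's identity recalled above, in which the crucial algebraic input is that the reflection $K$ is an isometry and $\mathcal K=aK$ is a strict contraction because $a_0<1$.

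\textbf{Existence, uniqueness, nonnegativity.}
Since $F$ is Lipschitz and bounded, $\div_v F=0$, and $\v$ is independent of $x$, the characteristic system $\dot X=\v(V)$, $\dot V=F$ generates a global, measure-preserving flow, and $f$ is constant along characteristics. Thus $f$ is determined by tracing each point backward to either the initial slice $\{0\}\times\Omega\times\R^3$ (datum $\mathring f$) or to an incoming boundary point on $\gamma_\bullet^-$ (datum $\mathcal K f_++g$). The reflection $K$ maps an outgoing boundary point to an incoming one while preserving $|v\cdot n|$ and $|v|$; hence it is an isometry of $L^p(\gamma_\bullet^\pm,d\gamm)$ and of the $\vo$-weighted spaces, and with $0\le a\le a_0<1$ the operator $\mathcal K=aK$ is a strict contraction. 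I would solve the boundary coupling through the associated reflection series: a trajectory undergoing $n$ reflections carries a factor bounded by $a_0^n$, so the series converges geometrically in $L^1\cap L^\infty$ and defines the unique solution, whose nonnegativity is inherited from $\mathring f,g\ge0$ and the positivity of $a$ and $K$. That $(f,f_+)$ satisfies Definition~\ref{def:WeakSolWholeSys}(ii), and uniqueness, follow from the Green's identity once $\Psi_T$ is known to belong to the Beals--Protopopescu test class \cite{BP87}; this is exactly the input imported from \cite[Lemma 2.1]{Guo93} and uses $\partial\Omega\in C^{1,\kappa}$ together with the support separation from $\gamma_T^0$ and $\{0\}\times\partial\Omega\times\R^3$ built into $\Psi_T$.

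\textbf{The $L^p$-estimate \eqref{eq:Vleqest1}.}
Because $f$ is transported, so is $f^p$ (renormalization), and applying the Green's identity to $f^p$ with the constant test function on $[0,T]$ yields the exact balance $\|f_+\|_{L^p(\gamma_T^+)}^p+\|f(T)\|_{L^p}^p=\|f_-\|_{L^p(\gamma_T^-)}^p+\|\mathring f\|_{L^p}^p$. Writing $f_-=aKf_++g$ and using the triangle inequality with the isometry of $K$ gives $\|f_-\|_{L^p}\le a_0\|f_+\|_{L^p}+\|g\|_{L^p}$, and Jensen's inequality in the form $(a_0x+y)^p\le a_0x^p+(1-a_0)^{1-p}y^p$ converts the balance into
\[
(1-a_0)\|f_+\|_{L^p(\gamma_T^+)}^p+\|f(T)\|_{L^p}^p\le(1-a_0)^{1-p}\|g\|_{L^p(\gamma_T^-)}^p+\|\mathring f\|_{L^p}^p .
\]
Since $u^p+w^p\le(u+w)^p$ for $u,w\ge0$, both halves of \eqref{eq:Vleqest1} drop out. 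The case $p=\infty$ is the maximum principle along characteristics, where the fixed point of $t\mapsto a_0t+\|g\|_\infty$ produces the factor $(1-a_0)^{-1}$.

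\textbf{The kinetic estimate \eqref{eq:Vleqest3} and the $L^{4/3}$-estimate \eqref{eq:Vleqest4}.}
For \eqref{eq:Vleqest3} I use $\partial_v\vo=\v$, so that $Y(\vo f)=(F\cdot\v)f$; integrating this over $[0,T]\times\Omega\times B_R$ and treating the $\partial\Omega$-fluxes through $f_-=aKf_++g$ exactly as above (reflection preserves $\{|v|<R\}$, $\vo$, and $d\gamm$, reproducing the factor $1-a_0$) gives the inequality up to the flux of $\vo\,f\,F$ through the sphere $\{|v|=R\}$. Controlling this velocity-boundary term --- equivalently, the trajectories crossing $|v|=R$ --- is the one genuinely delicate point of the proposition; I would treat it with a radial cutoff $\zeta_R$ and a limiting argument, enlarging the initial and source integrals to all of $\R^3$, which is exactly why those terms appear without the $B_R$-restriction on the right of \eqref{eq:Vleqest3}. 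Finally, \eqref{eq:Vleqest4} is pure interpolation: splitting $\int_{B_R}f\,dv$ at $|v|=P$, bounding the inner part by $\|f(T)\|_\infty$ and the outer part by $\vo$-weighted mass, and optimizing in $P$ gives the pointwise bound $\int_{B_R}f\,dv\lesssim\|f(T)\|_\infty^{1/4}\bigl(\int_{B_R}\vo f\,dv\bigr)^{3/4}$; raising to the power $4/3$, integrating in $x$, and inserting the $p=\infty$ case of \eqref{eq:Vleqest1} to estimate $\|f(T)\|_\infty$ yields the stated constant.
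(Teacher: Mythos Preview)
Your strategy is essentially the paper's: existence via \cite{BP87}, the $L^p$-estimate via the $p$-norm balance plus the convexity inequality $(a_0x+y)^p\le a_0x^p+(1-a_0)^{1-p}y^p$, and the $L^{4/3}$-bound via the standard splitting at $|v|=P$ and optimizing. These parts are correct and match the paper's argument; the paper obtains $p=\infty$ as the limit of the finite-$p$ bounds rather than by a direct maximum principle, but this is cosmetic.

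The one place where you diverge is the kinetic inequality \eqref{eq:Vleqest3}. You integrate $Y(\vo f)$ over $[0,T]\times\Omega\times B_R$ and then face the flux of $\vo f\,F$ through $\{|v|=R\}$, which you call ``genuinely delicate'' and propose to handle by a radial cutoff $\zeta_R$ and a limit. As written this is a gap: with a generic cutoff the term $\int F\cdot\partial_v(\vo\zeta_R)\,f$ contains $\vo\,\partial_v\zeta_R$, which has no sign and does not vanish in any obvious limit, so it is not clear how you absorb it. The paper avoids the issue entirely by never restricting the $v$-integration to $B_R$. Instead it uses the truncated weight
\[
\beta(v)=\begin{cases}\vo,&|v|<R,\\ \sqrt{m_\alpha^2+R^2},&|v|\ge R,\end{cases}
\]
so that $\beta$ is Lipschitz with $\nabla_v\beta=\v\,\chi_{B_R}$, hence $Y(\beta f)=F\cdot\v\,f\,\chi_{B_R}$ \emph{exactly}. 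Applying the $1$-norm balance of \cite[Proposition~1]{BP87} to $\beta f$ over all of $\R^3$ then produces no velocity-boundary term at all; on the left one drops the nonnegative contributions from $\{|v|\ge R\}$, and on the right one uses $\beta\le\vo$ (monotonicity of $\vo$ in $|v|$) to enlarge the $\mathring f$- and $g$-integrals to all of $\R^3$, which is precisely why those terms appear unrestricted in \eqref{eq:Vleqest3}. In effect the ``right'' cutoff is not a multiplicative $\zeta_R$ but the capped weight $\beta=\min(\vo,\sqrt{m_\alpha^2+R^2})$; once you make that choice, no limiting argument is needed.
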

	\begin{proof}
		By \cite[Theorem 1]{BP87}, there is a unique, strong solution of \eqref{eq:linearVlasov} for each $T\in\I$. Since $T$ is arbitrary, we get $f\in L_\lt^p\left(\I\times\Omega\times\R^3\right)$ and $f_\pm\in L_\lt^p\left(\gamma_{\T}^\pm,d\gamm\right)$ for all $1\leq p<\infty$. By \cite[Proposition 1]{BP87}, we have the following $p$-norm estimate for $T\in\I$:
		\begin{align*}
		&\int_{\gamma_T^+}f_+^p\,d\gamm+\int_\Omega\int_{\R^3}f\left(T\right)^p\,dvdx\leq\int_{\Omega\times\R^3}\mathring f^p\,dvdx+\int_{\gamma_T^-}\left(aKf_++g\right)^p\,d\gamm\\
		&\leq\int_{\Omega\times\R^3}\mathring f^p\,dvdx+a_0\int_{\gamma_T^+}f_+^p\,d\gamm+\left(1-a_0\right)^{1-p}\int_{\gamma_T^-}g^p\,d\gamm
		\end{align*}
		using the convexity of the $p$-th power. This yields
		\begin{align*}
		\left(1-a_0\right)\int_{\gamma_T^+}f_+^p\,d\gamm+\int_\Omega\int_{\R^3}f\left(T\right)^p\,dvdx\leq\int_\Omega\int_{\R^3}\mathring f^p\,dvdx+\left(1-a_0\right)^{1-p}\int_{\gamma_T^-}g^p\,d\gamm
		\end{align*}
		and therefore \eqref{eq:Vleqest1} for $1\leq p<\infty$. Letting $p\to\infty$ we deduce \eqref{eq:Vleqest1} also for $p=\infty$. For this, note that $n\left(x\right)\cdot\v\gtrless 0$ on $\tilde\gamma^\pm$ which is why $L^\infty\left(\gamma_{\T}^\pm\right)=L^\infty\left(\gamma_{\T}^\pm,d\gamm\right)$ and the respective norms coincide.
		
		To prove the second estimate, let
		\begin{align*}
		\beta\colon\R^3\to\R,\quad\beta\left(v\right)=\begin{cases}\vo,&\left|v\right|<R,\\\sqrt{m_\alpha^2+R^2},&\left|v\right|\geq R.\end{cases}
		\end{align*}
		Noticing that $Y\left(\beta f\right)=F\cdot\beta'f$ and using the $1$-norm balance of \cite[Proposition 1]{BP87} we get by $\beta\geq 0$:
		\begin{align*}
		&\int_{\gamma_T^+}\beta f_+\,d\gamm+\int_\Omega\int_{\R^3}\beta f\left(T\right)\,dvdx\\
		&\leq\int_{\Omega}\int_{\R^3}\beta\mathring f\,dvdx+\int_{\gamma_T^-}\beta\left(aKf_++g\right)\,d\gamm+\int_0^T\int_\Omega\int_{\R^3}F\cdot\beta'f\,dvdxdt\\
		&\leq\int_{\Omega}\int_{\R^3}\beta\mathring f\,dvdx+a_0\int_{\gamma_T^+}\beta f_+\,d\gamm+\int_{\gamma_T^-}\beta g\,d\gamm+\int_0^T\int_\Omega\int_{\R^3}F\cdot\beta'f\,dvdxdt
		\end{align*}
		Writing the terms explicitly and using the fact that $\vo$ is monotonically increasing in $\left|v\right|$, we arrive at \eqref{eq:Vleqest3}.
		
		For \eqref{eq:Vleqest4}, we have
		\begin{align}\label{eq:jest43}
		\int_{B_R}f\,dv&\leq\int_{B_r}f\,dv+\int_{r\leq \left|v\right|<R}f\,dv\leq\frac{4\pi}{3}r^3\left\|f\left(T\right)\right\|_{L^\infty\left(\Omega\times\R^3\right)}+\frac{1}{r}\int_{B_R}\vo f\,dvdx\nonumber\\
		&\leq\left(\int_{B_R}\vo f\,dv\right)^{\frac{3}{4}}\left(\frac{4\pi}{3}\left\|\mathring f\right\|_{L^\infty\left(\Omega\times\R^3\right)}+\frac{4\pi}{3}\left(1-a_0\right)^{-1}\left\|g\right\|_{L^\infty\left(\gamma_T^-\right)}+1\right),
		\end{align}
		where we optimize $r:=\left(\int_{B_R}\vo f\,dv\right)^{\frac{1}{4}}$ in the standard manner. This yields \eqref{eq:Vleqest4}.
	\end{proof}
	Regarding the linear Maxwell part
	\begin{subequations}\label{eq:MaxwellPart}
		\begin{align}
		\varepsilon\partial_tE-\curl_xH&=-4\pi j,\\
		\mu\partial_tH+\curl_xE&=0,\\
		\left(E,H\right)\left(0\right)&=\left(\mathring E,\mathring H\right),
		\end{align}
	\end{subequations}
	on $\I$, there holds the following basic result:
	\begin{proposition}\label{prop:Maxwell}
		Let $\varepsilon,\mu\in H^3_{\mathrm{ul}}\left(\R^3;\R^{3\times 3}\right)$ have the following properties: $\varepsilon\left(x\right)$, $\mu\left(x\right)$ are symmetric for each $x\in\R^3$ and there is a $\sigma>0$ such that $\varepsilon\left(x\right),\mu\left(x\right)\geq\sigma$ for all $x\in\R^3$. Moreover let $j\in L_\lt^1\left(\I;H^3\left(\R^3;\R^3\right)\right)\cap C_\lt\left(\I;H^2\left(\R^3;\R^3\right)\right)$ and $\mathring E,\mathring H\in H^3\left(\R^3;\R^3\right)$. Then there is a unique solution $\left(E,H\right)\in C_\lt\left(\I;H^3\left(\R^3;\R^6\right)\right)\cap C_\lt^1\left(\I;H^2\left(\R^3;\R^6\right)\right)$ of \eqref{eq:MaxwellPart}. Furthermore we have
		\begin{align}\label{eq:Maxwelliden}
		\frac{1}{8\pi}\int_{\R^3}\left(\varepsilon E\cdot E+\mu H\cdot H\right)\left(T\right)\,dx=\frac{1}{8\pi}\int_{\R^3}\left(\varepsilon\mathring E\cdot\mathring E+\mu\mathring H\cdot\mathring H\right)\,dx-\int_0^T\int_{\R^3}E\cdot j\,dxdt
		\end{align} 
		and
		\begin{align}\label{eq:Maxwellest}
		\left\|\left(E,H\right)\left(T\right)\right\|_{L^2\left(\R^3;\R^6\right)}&:=\left(\left\|E\left(T\right)\right\|_{L^2\left(\R^3;\R^3\right)}^2+\left\|H\left(T\right)\right\|_{L^2\left(\R^3;\R^3\right)}^2\right)^{\frac{1}{2}}\nonumber\\
		&\leq\sigma^{-\frac{1}{2}}\left(\int_{\R^3}\left(\varepsilon\mathring E\cdot\mathring E+\mu\mathring H\cdot\mathring H\right)\,dx\right)^{\frac{1}{2}}+4\pi\sigma^{-1}\left\|j\right\|_{L^1\left(0,T;L^2\left(\R^3;\R^3\right)\right)}
		\end{align}
		for any $0<T\in\I$.
	\end{proposition}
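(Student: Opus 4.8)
The plan is to recognize \eqref{eq:MaxwellPart} as a linear symmetric hyperbolic system and to treat existence and uniqueness by the standard theory for such systems, while deriving the identity \eqref{eq:Maxwelliden} and the estimate \eqref{eq:Maxwellest} directly by the energy method. Writing $w=(E,H)$ and collecting the two equations, the system takes the form $A_0\partial_t w+\sum_{k=1}^3 A_k\partial_{x_k}w=f$, where $A_0=\mathrm{diag}(\varepsilon,\mu)$ is symmetric and, by $\varepsilon,\mu\geq\sigma$, uniformly positive definite, the constant matrices $A_k$ encoding the two curl operators are symmetric (the skew-symmetry of the individual curl-matrices combines with the opposite signs in $-\curl_x H$ and $+\curl_x E$ to make the coupled $6\times6$ blocks symmetric), and $f=(-4\pi j,0)$. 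Since $\varepsilon,\mu\in H^3_{\mathrm{ul}}$ are time-independent and, by Sobolev embedding in three dimensions, bounded (indeed $C^1$), this is a linear symmetric hyperbolic system whose coefficients have exactly the regularity needed to run $H^3$ energy estimates.

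For existence and uniqueness I would equip $L^2(\R^3;\R^6)$ with the equivalent inner product $\langle w,\tilde w\rangle_{A_0}:=\int_{\R^3}A_0 w\cdot\tilde w\,dx$ and observe that the spatial operator $L:=-A_0^{-1}\sum_k A_k\partial_{x_k}$ is skew-symmetric for this inner product: integrating by parts and using that the $A_k$ are constant and symmetric while $A_0$ is symmetric, one finds $\langle Lw,\tilde w\rangle_{A_0}=-\langle w,L\tilde w\rangle_{A_0}$ for Schwartz $w,\tilde w$. Hence $L$ is essentially skew-adjoint and, by Stone's theorem, generates a unitary group on the energy space, which yields a unique $L^2$-solution of the homogeneous problem and, via Duhamel, for the inhomogeneity $f$. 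Propagating regularity by differentiating the equation in $x$ and controlling the commutators of $\partial_x^\gamma$ with $A_0^{-1}$ through the $H^3_{\mathrm{ul}}$-bounds on $\varepsilon,\mu$ (using $H^3_{\mathrm{ul}}\hookrightarrow C^1$ together with Moser-type product estimates) places the solution in $C_\lt(\I;H^3)\cap C^1_\lt(\I;H^2)$ for data and current of the stated regularity; alternatively one may simply invoke a ready-made well-posedness theorem for linear symmetric hyperbolic systems.

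The identity \eqref{eq:Maxwelliden} then follows by the multiplier method: pairing the first equation with $E$ and the second with $H$ and adding, the time-independence and symmetry of $\varepsilon,\mu$ turn the left-hand side into $\tfrac12\partial_t(\varepsilon E\cdot E+\mu H\cdot H)$, while the vector identity $\div_x(E\times H)=H\cdot\curl_x E-E\cdot\curl_x H$ shows that the two curl terms combine into a pure divergence. Integrating over $\R^3$ (the term at infinity vanishes since $E,H\in H^3$ decay) and then over $[0,T]$ gives exactly \eqref{eq:Maxwelliden}; the regularity $(E,H)\in C^1_\lt(\I;H^2)$ justifies every manipulation, and one may first argue for smooth data and pass to the limit if desired.

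Finally, writing $\mathcal E(t):=\int_{\R^3}(\varepsilon E\cdot E+\mu H\cdot H)(t)\,dx$, the identity \eqref{eq:Maxwelliden} gives $\tfrac{d}{dt}\mathcal E=-8\pi\int_{\R^3}E\cdot j\,dx$, and the crucial step is to differentiate $\sqrt{\mathcal E+\iota}$ rather than $\mathcal E$ itself: since $\bigl|\int_{\R^3}E\cdot j\bigr|\leq\|E\|_{L^2}\|j\|_{L^2}\leq\sigma^{-1/2}\sqrt{\mathcal E}\,\|j\|_{L^2}$, the square root in the denominator cancels and one obtains $\bigl|\tfrac{d}{dt}\sqrt{\mathcal E+\iota}\bigr|\leq 4\pi\sigma^{-1/2}\|j\|_{L^2}$. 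Integrating in time and letting $\iota\to0$ yields $\sqrt{\mathcal E(T)}\leq\sqrt{\mathcal E(0)}+4\pi\sigma^{-1/2}\|j\|_{L^1(0,T;L^2)}$, and the lower bound $\|(E,H)(T)\|_{L^2}^2\leq\sigma^{-1}\mathcal E(T)$ converts this into \eqref{eq:Maxwellest} with the stated constants. I expect the existence-and-regularity step to be the main obstacle: unlike the constant-coefficient Maxwell system, the matrix-valued, non-decaying coefficients $\varepsilon,\mu$ rule out a direct Fourier treatment, so the real work lies in establishing symmetric-hyperbolic well-posedness and, above all, in propagating the $H^3$ regularity through the $x$-dependent coefficients; by contrast, the two energy computations are routine once a sufficiently regular solution is in hand.
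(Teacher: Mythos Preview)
Your proposal is correct and follows essentially the same route as the paper: the paper simply cites Kato's theory of symmetric hyperbolic systems (in $H^3_{\mathrm{ul}}$) for existence and regularity, obtains \eqref{eq:Maxwelliden} by the same energy computation you describe, and derives \eqref{eq:Maxwellest} via a quadratic Gronwall lemma (Lemma~\ref{lma:QuadraticGronwall}) whose proof is precisely your $\sqrt{\mathcal E+\iota}$ trick. The only cosmetic difference is that the paper packages the square-root differentiation into a separate lemma rather than doing it inline.
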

	\begin{proof}
		For the existence theory (and a definition of uniform local Sobolev spaces $H_{\mathrm{ul}}^k$) we refer to \cite{Kat75}. Equation \eqref{eq:Maxwelliden} is derived straightforwardly by differentiating both sides and using the symmetry of $\varepsilon$ and $\mu$. We then get \eqref{eq:Maxwellest} by applying Lemma \ref{lma:QuadraticGronwall} using the uniform positive definiteness of $\varepsilon$ and $\mu$.
	\end{proof}
	Here and later, we need the following version of the quadratic Gronwall lemma, which is a slight improvement of \cite[Theorem 5]{Dra03}:
	\begin{lemma}\label{lma:QuadraticGronwall}
		Let $a,b\in\R$, $a<b$, $x,u\colon\left[a,b\right]\to\left[0,\infty\right[$ be continuous, $g\colon\left[a,b\right]\to\R$ be differentiable, and $\overline x\colon\left[a,b\right]\to\R$. Assume that the following inequality holds for all $t\in\left[a,b\right]$:
		\begin{align*}
		\frac{1}{2}\overline x\left(t\right)^2+\frac{1}{2}x\left(t\right)^2\leq\frac{1}{2}g\left(t\right)^2+\int_a^tu\left(s\right)x\left(s\right)\,ds.
		\end{align*}
		Then we have
		\begin{align*}
		\sqrt{\overline x\left(t\right)^2+x\left(t\right)^2}\leq\left|g\left(t\right)\right|+\int_a^tu\left(s\right)\,ds
		\end{align*}
		for all $t\in\left[a,b\right]$.
	\end{lemma}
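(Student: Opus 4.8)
The plan is to discard the (possibly irregular) term $\overline x$, reduce to a scalar inequality for a single square root, regularize so that the relevant quantity is strictly positive, and then integrate a first-order differential inequality. The structure is that of a Gronwall--Bihari argument for the quadratic nonlinearity.

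First I would note that, since $\overline x(t)^2\ge 0$, the hypothesis yields both $\frac12 x(t)^2\le\phi(t)$ and $\frac12\left(\overline x(t)^2+x(t)^2\right)\le\phi(t)$, where $\phi(t):=\frac12 g(t)^2+\int_a^t u(s)x(s)\,ds$. Consequently $\sqrt{\overline x(t)^2+x(t)^2}\le\sqrt{2\phi(t)}$ and $x(t)\le\sqrt{2\phi(t)}$, so it suffices to prove the single scalar bound $\sqrt{2\phi(t)}\le|g(t)|+\int_a^t u(s)\,ds$. The function $\phi$ is differentiable because $g$ is differentiable and $ux$ is continuous.

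Next, for $\iota>0$ I would set $\phi_\iota:=\phi+\iota>0$, so that $\sqrt{2\phi_\iota}$ is differentiable, and, using $x\le\sqrt{2\phi}\le\sqrt{2\phi_\iota}$ to absorb the transport term,
\[
\frac{d}{dt}\sqrt{2\phi_\iota(t)}=\frac{g(t)g'(t)+u(t)x(t)}{\sqrt{2\phi_\iota(t)}}\le u(t)+\frac{g(t)g'(t)}{\sqrt{2\phi_\iota(t)}}.
\]
Integrating from $a$ to $t$, noting $\sqrt{2\phi_\iota(a)}=\sqrt{g(a)^2+2\iota}$, and finally letting $\iota\to 0$ should produce the claim, once the $g$-term has been dealt with.

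Controlling that $g$-term is the step I expect to be the main obstacle. In the situations where the lemma is actually applied (in particular the energy identity \eqref{eq:Maxwelliden}, where $g^2$ records the \emph{fixed} initial energy) $g$ is constant, so $g'\equiv 0$, the offending term vanishes outright, and one immediately obtains $\sqrt{2\phi_\iota(t)}\le\sqrt{g^2+2\iota}+\int_a^t u$, hence the assertion after $\iota\to 0$. For genuinely time-dependent $g$ I would try to compare against $\frac{d}{dt}\sqrt{g(t)^2+2\iota}=g(t)g'(t)/\sqrt{g(t)^2+2\iota}$; since $2\phi_\iota\ge g^2+2\iota$, the estimate $gg'/\sqrt{2\phi_\iota}\le gg'/\sqrt{g^2+2\iota}$ holds whenever $gg'\ge 0$, i.e. whenever $|g|$ is nondecreasing, and integration then yields $\sqrt{2\phi_\iota(t)}\le\sqrt{g(t)^2+2\iota}+\int_a^t u$. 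The comparison reverses when $gg'<0$, so some monotonicity of $|g|$ (in particular the constant $g$ that occurs in the intended applications) appears to be what makes the square-root argument close; this is the delicate point to pin down when writing the full proof.
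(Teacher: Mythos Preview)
Your outline is essentially the paper's proof. The paper regularises with $y_\iota=\tfrac12(g^2+\iota^2)+\int_a^t ux$ rather than your $\phi+\iota$, then observes that $gg'/\sqrt{g^2+\iota^2}=\tfrac{d}{dt}\sqrt{g^2+\iota^2}$, so after integrating from $a$ to $t$ the term $\sqrt{g(a)^2+\iota^2}$ coming from the initial value cancels against the antiderivative and only $\sqrt{g(t)^2+\iota^2}\le|g(t)|+\iota$ survives; apart from this cosmetic difference the two arguments coincide.

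The reservation you raise about the sign of $gg'$ is not a loose end to be tied up later but a genuine obstruction, and the paper does not address it either: it simply writes
$\dfrac{gg'+ux}{\sqrt{2y_\iota}}\le\dfrac{gg'}{\sqrt{g^2+\iota^2}}+u$,
which fails precisely when $gg'<0$ for the reason you give. In fact the lemma as stated is false for general differentiable $g$. Take $a=0$, $\overline x\equiv0$, $u\equiv1$, $g(t)=2-2t$, and let $x$ solve the equality $\tfrac12 x^2=\tfrac12 g^2+\int_0^t x$; then $x(0)=2$ and $x'=gg'/x+1$ gives $x'(0)=-1$, $x''(0)=1$, whereas $z(t):=|g(t)|+\int_0^t u=2-t$ satisfies $z(0)=2$, $z'(0)=-1$, $z''(0)=0$, so $x(t)>z(t)$ for all small $t>0$, contradicting the asserted conclusion. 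Under the additional hypothesis $gg'\ge0$ (equivalently $|g|$ nondecreasing, in particular the constant $g$ occurring in the Maxwell energy identity) both your argument and the paper's go through verbatim; without it the statement cannot be saved.
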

	\begin{proof}
		Let $\iota>0$ and consider
		\begin{align*}
		y_\iota\colon\left[a,b\right]\to\left]0,\infty\right[,y_\iota\left(t\right)=\frac{1}{2}\left(g\left(t\right)^2+\iota^2\right)+\int_a^tu\left(s\right)x\left(s\right)\,ds.
		\end{align*}
		By assumption we have $x\left(t\right)\leq\sqrt{\overline x\left(t\right)^2+x\left(t\right)^2}\leq\sqrt{2y_\iota\left(t\right)}$. Furthermore, $\sqrt{2y_\iota}$ is differentiable with
		\begin{align*}
		\frac{d}{dt}\sqrt{2y_\iota\left(t\right)}=\frac{g\left(t\right)g'\left(t\right)+u\left(t\right)x\left(t\right)}{\sqrt{2y_\iota\left(t\right)}}\leq\frac{g\left(t\right)g'\left(t\right)}{\sqrt{g\left(t\right)^2+\iota^2}}+u\left(t\right).
		\end{align*}
		Integrating this estimate from $a$ to $t$ yields
		\begin{align*}
		&\sqrt{\overline x\left(t\right)^2+x\left(t\right)^2}\leq\sqrt{2y_\iota\left(t\right)}\leq\sqrt{2y_\iota\left(a\right)}+\int_a^t\frac{g\left(s\right)g'\left(s\right)}{\sqrt{g\left(s\right)^2+\iota^2}}\,ds+\int_a^tu\left(s\right)\,ds\\
		&=\sqrt{g\left(a\right)^2+\iota^2}+\sqrt{g\left(t\right)^2+\iota^2}-\sqrt{g\left(a\right)^2+\iota^2}+\int_a^tu\left(s\right)\,ds\leq\left|g\left(t\right)\right|+\iota+\int_a^tu\left(s\right)\,ds.
		\end{align*}
		Since $\iota>0$ is arbitrary, the proof is finished.
	\end{proof}
	
	\subsection{Approximations of the data}\label{sec:approx}
	Throughout this section we assume that Condition \ref{cond:data} is satisfied. We have to modify the data as follows to be able to apply the statements of Section \ref{sec:linearVM}: For $\alpha=1,\dots,N$ we define $\a_k:=\a$ and for $\alpha=N'+1,\dots,N$ we define $\a_k:=\frac{k}{k+1}\a$. Hence all $\a_k$ are bounded away from $1$. Furthermore, choose approximating sequences $\left(\mathring E_k\right)$, $\left(\mathring H_k\right)\subset H^3\left(\R^3;\R^3\right)$ with $\mathring E_k\to\mathring E$, $\mathring H_k\to\mathring H$ in $L^2\left(\R^3;\R^3\right)$ for $k\to\infty$. Additionally, we have to smooth $\varepsilon$ and $\mu$. In the following, have in mind that for a symmetric, positive definite matrix $A\in\R^{3\times 3}$ and some $C\geq 0$ we have the equivalence
	\begin{align*}
	A\leq C\Leftrightarrow\left\|A\right\|_{\R^{3\times 3}}\leq C
	\end{align*}
	where we use the norm
	\begin{align*}
	\left\|A\right\|_{\R^{3\times 3}}=\sup_{\left|x\right|\leq 1}\left|Ax\right|=\max\left\{\lambda\in\R\mid\lambda\mathrm{\ eigenvalue\ of\ }A\right\}
	\end{align*}
	where the last equality holds for symmetric, positive definite $A$. Thus, for some measurable $A\colon\R^3\to\R^{3\times 3}$ such that $A\left(x\right)$ is symmetric and positive definite for almost all $x\in\R^3$, the property $A\left(x\right)\leq C$ for almost all $x\in\R^3$ is equivalent to $\left\|A\right\|_{L^\infty\left(\R^3;\R^{3\times 3}\right)}\leq C$.
	
	We want to construct sequences of smooth $\varepsilon_k,\mu_k$ with $\sigma\leq\varepsilon_k,\mu_k\leq\sigma'$ in such a way that these sequences converge to $\varepsilon$, $\mu$ in a certain sense. We perform the construction of $\left(\varepsilon_k\right)$, the one for $\left(\mu_k\right)$ works totally analogously. Let $\omega\in C_c^\infty\left(\R^3\right)$, $\omega\geq 0$, $\supp\omega\subset\overline{B_1}$, $\int_{\R^3}\omega\,dx=1$ be a Friedrich's mollifier and define $\omega_s:=s^{-3}\omega\left(\frac{\cdot}{s}\right)$ for $s>0$. Now let
	\begin{align*}
	\tilde\varepsilon_k\left(x\right):=\begin{cases}\varepsilon\left(x\right)-\sigma I_3,&x\in B_k,\\0,&x\notin B_k\end{cases}
	\end{align*}
	for $k\in\N$, $I_3$ denoting the $3\times 3$-identity matrix. Clearly, $\tilde\varepsilon_k\in L^\infty\left(\R^3;\R^{3\times 3}\right)$ and $\tilde\varepsilon_k$ vanishes on $\R^3\setminus B_k$. This implies $\omega_s\ast\tilde\varepsilon_k\in C_c^\infty\left(\R^3;\R^{3\times 3}\right)$ (the convolution understood component-wise) for any $s>0$. By $\tilde\varepsilon_k\in L^2\left(B_k;\R^{3\times 3}\right)$ we know $\omega_s\ast\tilde\varepsilon_k\to\tilde\varepsilon_k$ in $L^2\left(B_k;\R^{3\times 3}\right)$ for $s\to 0$. Hence we can choose $s_k>0$ such that 
	\begin{align*}
	\left\|\omega_{s_k}\ast\tilde\varepsilon_k-\tilde\varepsilon_k\right\|_{L^2\left(B_k;\R^{3\times 3}\right)}<\frac{1}{k}.
	\end{align*}
	Finally define $\varepsilon_k:=\omega_{s_k}\ast\tilde\varepsilon_k+\sigma I_3$. Note that $\varepsilon_k$ is smooth and constant for $\left|x\right|$ large (and hence of class $H_{\mathrm{ul}}^3$). By construction, $\varepsilon_k\left(x\right)$ is symmetric for all $x\in\R^3$ and
	\begin{align}\label{eq:ApproxVareps}
	\left\|\varepsilon-\varepsilon_k\right\|_{L^2\left(B_k;\R^{3\times 3}\right)}<\frac{1}{k}.
	\end{align}
	Furthermore, for any $E,x\in\R^3$ it holds that
	\begin{align*}
	\varepsilon_k\left(x\right)E\cdot E&=\int_{\R^3}\omega_{s_k}\left(x-y\right)\tilde\varepsilon_k\left(y\right)E\cdot E\,dy+\sigma\left|E\right|^2\\
	&=\int_{B_k}\omega_{s_k}\left(x-y\right)\varepsilon\left(y\right)E\cdot E\,dy-\sigma\left|E\right|^2\int_{B_k}\omega_{s_k}\left(x-y\right)\,dy+\sigma\left|E\right|^2\\
	&\begin{cases}\displaystyle\geq\sigma\left|E\right|^2\int_{B_k}\omega_{s_k}\left(x-y\right)\,dy-\sigma\left|E\right|^2\int_{B_k}\omega_{s_k}\left(x-y\right)\,dy+\sigma\left|E\right|^2=\sigma\left|E\right|^2,\\\displaystyle\leq\sigma'\left|E\right|^2\int_{B_k}\omega_{s_k}\left(x-y\right)\,dy-\sigma\left|E\right|^2\int_{B_k}\omega_{s_k}\left(x-y\right)\,dy+\sigma\left|E\right|^2\leq\sigma'\left|E\right|^2.\end{cases}
	\end{align*}
	Note that for the last line we used the fact that the integral of $\omega_s$ over whole $\R^3$ equals $1$ for any $s>0$.
	\subsection{A cut-off problem}\label{sec:cutoff}
	In order to construct a weak solution of \eqref{eq:WholeSystem}, we first turn to a cut-off problem where we consider bounded time and momentum domains. Whereas the cut-off in time is no real drawback, the cut-off in momentum space is on the one hand unpleasant, but on the other hand necessary. To understand this necessity, we should recall \eqref{eq:Maxwellest}. Consider there $j$ to be the sum of some external current and the current $j^\inte$ induced by the particle densities. In an iteration scheme we would like to have an estimate like \eqref{eq:Maxwellest} for the fields where the right hand side is uniformly bounded along the iteration. Then we could extract some weakly converging subsequence. However, for this uniformity, we would need that $j^\inte$ is uniformly bounded in $L^1\left(0,T;L^2\left(\R^3;\R^3\right)\right)$ along the iteration. This would require a better estimate than \eqref{eq:Vleqest4} where we only can put our hands on the $L^{\frac{4}{3}}\left(\R^3;\R^3\right)$-norm of $j^\inte$ (at each time). Moreover, in an energy balance along the iteration, the crucial terms describing the energy transfer due to the internal system will not cancel out; this would only be the case if we solve $\eqref{eq:WholeSystem}$ simultaneously along an iteration.
	
	Now if we consider a cut-off problem (the cut-off referring to momentum space) we can simply estimate the $L^1$-norm of $j^\inte$ by the $L^2$-norm in momentum space and then use \eqref{eq:Vleqest1} for $p=2$, so we get the desired uniform boundedness along the iteration. Later, adding the limit versions of \eqref{eq:Vleqest3} and \eqref{eq:Maxwelliden}, we observe that the problematic terms on the right hand side, that is to say the terms $\pm E\cdot j^\inte$, cancel out. Thus, now (after a Gronwall argument) having a full energy estimate with only expressions of the data on the right hand side, we find that a posteriori the cut-off does not substantially enter this estimate, so we will be able to get a solution of the system without a cut-off by considering a sequence of solutions due to larger and larger cut-off domains.
	
	We differ from \cite{Guo93} as follows: Firstly, we do not have to cut off $\Omega$, since we only consider a bounded $\Omega$. Secondly, we solve the linear Vlasov equation on whole momentum space $\R^3$ and not only on a cut-off domain. Our cut-off only appears in the definition of the internal current $j_k^\inte$. Thirdly, as already said in the introduction, there is no need of the factor $e^{-t}$, and without this factor the estimates are more \enquote{natural}.
	
	To make things more precise, let $0<R<\infty$, define $R^*:=\min\left\{R,\T\right\}$, and start the iteration with $E_0,H_0\colon\left[0,R^*\right]\times\R^3\to\R^3$, $\left(E_0,H_0\right)\left(t,x,v\right)=\left(\mathring E_0,\mathring H_0\right)\left(x,v\right)$. We assume that we already have iterates of the $k$-th satisfying $E_k,H_k\in L^\infty\left(0,R^*;L^2\left(\R^3;\R^3\right)\right)\cap C^{0,1}\left(\left[0,R^*\right]\times\overline\Omega;\R^3\right)$.
	We first solve the Vlasov part
	\begin{subequations}\label{eq:VlasovIter}
		\begin{align}
		\partial_t\f_{k+1}+\v\cdot\partial_x\f_{k+1}+F_k^\alpha\cdot\partial_v\f_{k+1}&=0& \mathrm{on}\ \left[0,R^*\right]\times\Omega\times\R^3,\\
		\f_{k+1,-}&=\a_{k+1}K\f_{k+1,+}+\g&\mathrm{on}\ \gamma_{R^*}^-,\\
		\f_{k+1}\left(0\right)&=\mathring\f&\mathrm{on}\ \Omega\times\R^3
		\end{align}
	\end{subequations}
	with given force field $F_k^\alpha:=\e\left(E_k+\v\times H_k\right)$, which is Lipschitz continuous and bounded on $\left[0,R^*\right]\times\Omega\times\R^3$, and divergence free with respect to $v$. Indeed, we can solve \eqref{eq:VlasovIter} applying Proposition \ref{prop:Vleqest} and noticing that $\a_{k+1}$ is bounded away from $1$ on $\gamma_{R^*}^-$. Therefore we have $0\leq\f_{k+1}\in L^\infty\left(0,R^*;\left(L_{\kin}^1\cap L^\infty\right)\left(\Omega\times\R^3\right)\right)$ and $0\leq\f_{k+1,\pm}\in\left(L_{\kin}^1\cap L^\infty\right)\left(\gamma_{R^*}^\pm,d\gamm\right)$.
	
	Next we want to solve the Maxwell part. Now the cut-off appears: We define the current
	\begin{align}\label{eq:TotCurrIter}
	j_{k+1}:=j_{k+1}^\inte+u:=\suma\e\int_{B_R}\v\f_{k+1}\,dv+u
	\end{align}
	where we integrate only over the cut-off domain $B_R$ rather than over the whole momentum space. Note that $j_{k+1}^\inte$ ($u$) is defined to be $0$ outside $\Omega$ ($\Gamma$). By
	\begin{align}\label{eq:jintest}
	\left(\int_{\Omega}\left|j_{k+1}^\inte\right|^2\,dx\right)^{\frac{1}{2}}\leq\sqrt{\frac{4\pi}{3}R^3}\suma\left|\e\right|\left(\int_{\Omega}\int_{\R^3}\left|\f_{k+1}\right|^2\,dvdx\right)^{\frac{1}{2}}
	\end{align}
	and $\f_{k+1}\in L^\infty\left(0,R^*;L^2\left(\Omega\times\R^3\right)\right)$ we have $j_{k+1}\in L^1\left(0,R^*;L^2\left(\R^3\right)\right)$. In order to apply Proposition \ref{prop:Maxwell}, we approximate $j_{k+1}$ by a $\overline j_{k+1}\in C_c^\infty\left(\left]0,R^*\right[\times\R^3\right)$ such that 
	\begin{align}\label{eq:approxjk}
	4\pi\left\|j_{k+1}-\overline j_{k+1}\right\|_{L^1\left(0,R^*;L^2\left(\R^3;\R^3\right)\right)}<\frac{1}{k+1}.
	\end{align}
	With this smoothed current as the source term in the Maxwell system we solve
	\begin{subequations}\label{eq:MaxwellIter}
		\begin{align}
		\varepsilon_{k+1}\partial_tE_{k+1}-\curl_xH_{k+1}&=-4\pi\overline j_{k+1}&\mathrm{on}\ \left[0,R^*\right]\times\R^3,\\
		\mu_{k+1}\partial_tH_{k+1}+\curl_xE_{k+1}&=0&\mathrm{on}\ \left[0,R^*\right]\times\R^3,\\
		\left(E_{k+1},H_{k+1}\right)\left(0\right)&=\left(\mathring E_{k+1},\mathring H_{k+1}\right)&\mathrm{on}\ \R^3.
		\end{align}
	\end{subequations}
	Indeed, applying Proposition \ref{prop:Maxwell}, we see that there is a unique solution $\left(E_{k+1},H_{k+1}\right)\in C\left(0,R^*;H^3\left(\R^3;\R^6\right)\right)\cap C^1\left(0,R^*;H^2\left(\R^3;\R^6\right)\right)$. By Sobolev's embedding theorems it holds that $E_{k+1},H_{k+1}\in C^{0,1}\left(\left[0,R^*\right]\times\overline\Omega;\R^3\right)$. Altogether, the induction hypothesis is satisfied so that we can proceed with the next iteration step.
	
	In order to extract some weakly converging subsequence, we have to establish suitable estimates. To this end, consider \eqref{eq:Vleqest1} and \eqref{eq:Maxwellest} applied to \eqref{eq:VlasovIter} and \eqref{eq:MaxwellIter}:
	\begin{align}\label{eq:Vliterest}
	&\left(1-\left\|\a_{k+1}\right\|_{L^\infty\left(\gamma_{\T}^-\right)}\right)^{\frac{1}{p}}\left\|\f_{k+1,+}\right\|_{L^p\left(\gamma_T^+,d\gamm\right)},\left\|\f_{k+1}\left(T\right)\right\|_{L^p\left(\Omega\times\R^3\right)}\nonumber\\
	&\leq\left\|\mathring \f\right\|_{L^p\left(\Omega\times\R^3\right)}+\left(1-\left\|\a_{k+1}\right\|_{L^\infty\left(\gamma_{\T}^-\right)}\right)^{\frac{1}{p}-1}\left\|\g\right\|_{L^p\left(\gamma_T^-,d\gamm\right)}
	\end{align}
	and
	\begin{align}\label{eq:Maxwelliterest}
	\left\|\left(E_{k+1},H_{k+1}\right)\left(T\right)\right\|_{L^2\left(\R^3;\R^6\right)}&\leq\sigma^{-\frac{1}{2}}\left(\int_{\R^3}\left(\varepsilon_{k+1}\mathring E_{k+1}\cdot\mathring E_{k+1}+\mu_{k+1}\mathring H_{k+1}\cdot\mathring H_{k+1}\right)\,dx\right)^{\frac{1}{2}}\nonumber\\
	&\phantom{\leq\;}+4\pi\sigma^{-1}\left\|\overline j_{k+1}\right\|_{L^1\left(0,T;L^2\left(\R^3;\R^3\right)\right)}.
	\end{align}
	Note that we need $\varepsilon_k\left(x\right),\mu_k\left(x\right)\geq\sigma$ uniformly in $x$ and $k$ to get \eqref{eq:Maxwelliterest}.
	
	For $\alpha=1,\dots,N'$, \eqref{eq:Vliterest} reduces to
	\begin{align}\label{eq:Vliterest1}
	\left(1-\a_0\right)^{\frac{1}{p}}\left\|\f_{k+1,+}\right\|_{L^p\left(\gamma_T^+,d\gamm\right)},\left\|\f_{k+1}\left(T\right)\right\|_{L^p\left(\Omega\times\R^3\right)}\leq\left\|\mathring \f\right\|_{L^p\left(\Omega\times\R^3\right)}+\left(1-\a_0\right)^{\frac{1}{p}-1}\left\|\g\right\|_{L^p\left(\gamma_T^-,d\gamm\right)}
	\end{align}
	and to
	\begin{align}\label{eq:Vliterest2}
	\left(k+1\right)^{-\frac{1}{p}}\left\|\f_{k+1,+}\right\|_{L^p\left(\gamma_T^+,d\gamm\right)},\left\|\f_{k+1}\left(T\right)\right\|_{L^p\left(\Omega\times\R^3\right)}\leq\left\|\mathring \f\right\|_{L^p\left(\Omega\times\R^3\right)}
	\end{align}
	for $\alpha=N'+1,\dots,N$. Thus we conclude that any sequence $\left(\f_k\right)$ is bounded in any $L^p\left(\left[0,R^*\right]\times\Omega\times\R^3\right)$, $1\leq p\leq\infty$, so that we may extract a subsequence (also denoted by $\left(\f_k\right)$) that converges weakly in $L^p\left(\left[0,R^*\right]\times\Omega\times\R^3\right)$ for $1<p<\infty$ and weakly-* in $L^\infty\left(\left[0,R^*\right]\times\Omega\times\R^3\right)$ to some nonnegative $\f_R$. As in \eqref{eq:TotCurrIter} we define
	\begin{align*}
	j_R:=j_R^\inte+u:=\suma\e\int_{B_R}\v\f_R\,dv+u.
	\end{align*}
	As for the boundary values, we have to distinct absorbing and reflecting boundary conditions. For $\alpha=1,\dots,N'$, \eqref{eq:Vliterest1} yields the boundedness of $\left(\f_{k,+}\right)$ in any $L^p\left(\gamma_{R^*}^+,d\gamm\right)$, $1\leq p\leq\infty$, so we may extract a subsequence that converges weakly in $L^p\left(\gamma_{R^*}^+,d\gamm\right)$ for $1<p<\infty$ and weakly-* in $L^\infty\left(\gamma_{R^*}^+,d\gamm\right)$ to some nonnegative $\f_{R,+}$. For $\alpha=N'+1,\dots,N$, \eqref{eq:Vliterest2} delivers a uniform estimate only for $p=\infty$ so here we may extract a subsequence that only converges weakly-* to some nonnegative $\f_{R,+}$ in $L^\infty\left(\gamma_{R^*}^+,d\gamm\right)$.
	
	Letting $k\to\infty$, we deduce for $1\leq p\leq\infty$
	\begin{align}
	\left\|\f_R\right\|_{L^\infty\left(0,T;L^p\left(\Omega\times\R^3\right)\right)}&\leq\left\|\mathring \f\right\|_{L^p\left(\Omega\times\R^3\right)}+\begin{cases}\left(1-\a_0\right)^{\frac{1}{p}-1}\left\|\g\right\|_{L^p\left(\gamma_T^-,d\gamm\right)},&\alpha\leq N'\\0,&\alpha>N'\end{cases}\label{eq:fRest}\\
	\left\|\f_{R,+}\right\|_{L^\infty\left(\gamma_T^+,d\gamm\right)}&\leq\left\|\mathring\f\right\|_{L^\infty\left(\Omega\times\R^3\right)}+\begin{cases}\left(1-\a_0\right)^{-1}\left\|\g\right\|_{L^\infty\left(\gamma_T^-,d\gamm\right)},&\alpha\leq N'\\0,&\alpha>N'\end{cases}
	\end{align}
	and for $\alpha=N'+1,\dots,N$ additionally
	\begin{align}\label{eq:f+Restp}
	\left\|\f_{R,+}\right\|_{L^p\left(\gamma_T^+,d\gamm\right)}\leq\left(1-\a_0\right)^{-\frac{1}{p}}\left\|\mathring \f\right\|_{L^p\left(\Omega\times\R^3\right)}+\left(1-\a_0\right)^{-1}\left\|\g\right\|_{L^p\left(\gamma_T^-,d\gamm\right)}.
	\end{align}
	
	Next we turn to an estimate on the electromagnetic fields. To examine \eqref{eq:Maxwelliterest} further, we insert the properties of $\overline j_{k+1}$ on the right hand side to get
	\begin{align*}
	&\left\|\overline j_{k+1}\right\|_{L^1\left(0,T;L^2\left(\R^3;\R^3\right)\right)}\leq\frac{1}{4\pi\left(k+1\right)}+\left\|j_{k+1}\right\|_{L^1\left(0,R^*;L^2\left(\R^3;\R^3\right)\right)}\\
	&\leq 1+\sqrt{\frac{4\pi}{3}R^3}\suma\left|\e\right|\int_0^{R^*}\left\|\f_{k+1}\left(t\right)\right\|_{L^2\left(\Omega\times\R^3\right)}\,dt+\left\|u\right\|_{L^1\left(0,R^*;L^2\left(\Gamma;\R^3\right)\right)}
	\end{align*}
	for $0<T\leq R^*$ using \eqref{eq:jintest}. The right hand side is bounded uniformly in $k$. Moreover, the first term on the right hand side of \eqref{eq:Maxwelliterest} is bounded uniformly in $k$ by $\varepsilon_k,\mu_k\leq\sigma'$ and the $L^2$-convergence of the approximating initial data. Thus, we may extract a subsequence $\left(E_k,H_k\right)$ that converges weakly in $L^2\left([0,R^*]\times\R^3;\R^6\right)$ to some $\left(E_R,H_R\right)$.
	
	We now show that $\left(\left(\f_R,\f_{R,+}\right)_\alpha,E_R,H_R,j_R\right)$ is a weak solution of \eqref{eq:WholeSystem} on the time interval $\left[0,R^*\right]$ in the sense of Definition \ref{def:WeakSolWholeSys}. Clearly, all functions are of class $L_\loc^1$. The main task is to show that we may pass to the limit in \eqref{eq:Vlasovweak} and \eqref{eq:Maxwellweak} applied to the iterates: We have for all $\psi\in\Psi_{R^*}$, $\vartheta\in\Theta_{R^*}$, and $k\geq 1$
	\begin{align}
	0&=-\int_0^{R^*}\int_\Omega\int_{\R^3}\left(\partial_t\psi+\v\cdot\partial_x\psi+\e\left(E_k+\v\times H_k\right)\cdot\partial_v\psi\right)\f_{k+1}\,dvdxdt\nonumber\\
	&\phantom{=\;}+\int_{\gamma_{R^*}^+}\f_{k+1,+}\psi\,d\gamm-\int_{\gamma_{R^*}^-}\left(\K\f_{k+1,+}+\g\right)\psi\,d\gamm-\int_\Omega\int_{\R^3}\mathring\f\psi\left(0\right)\,dvdx,\label{eq:Vlasovweakiter}\\
	0&=\int_0^{R^*}\int_{\R^3}\left(\varepsilon_kE_k\cdot\partial_t\vartheta-H_k\cdot\curl_x\vartheta-4\pi\overline j_k\cdot\vartheta\right)\,dxdt+\int_{\R^3}\varepsilon_k\mathring{E}_k\cdot\vartheta\left(0\right)\,dx,\label{eq:Maxwellweakiter1}\\
	0&=\int_0^{R^*}\int_{\R^3}\left(\mu_kH_k\cdot\partial_t\vartheta+E_k\cdot\curl_x\vartheta\right)\,dxdt+\int_{\R^3}\mu_k\mathring{H}_k\cdot\vartheta\left(0\right)\,dx.\label{eq:Maxwellweakiter2}
	\end{align}
	We can pass to the limit in \eqref{eq:Maxwellweakiter1} and \eqref{eq:Maxwellweakiter2}: Whereas the terms including the curl are easy to handle by weak convergence of $E_k$, $H_k$, we have to take more care about the terms including $\varepsilon_k,\mu_k$, and $\overline j_k$. For the first ones, let $K\in\N$ such that $\vartheta$ vanishes for $\left|x\right|\geq K$ so that we in fact only integrate over $B_K$. For $k\geq K$ we have
	\begin{align*}
	\left\|\varepsilon-\varepsilon_k\right\|_{L^2\left(B_K;\R^{3\times 3}\right)}\leq\left\|\varepsilon-\varepsilon_k\right\|_{L^2\left(B_k;\R^{3\times 3}\right)}<\frac{1}{k}
	\end{align*}
	by \eqref{eq:ApproxVareps} so that $\varepsilon_k\to\varepsilon$ in $L^2\left(B_K;\R^{3\times 3}\right)$. This is enough for passing to the limit in the terms including $\varepsilon_k$ since we additionally have $E_k\rightharpoonup E_R$ in $L^2\left([0,R^*]\times\R^3;\R^3\right)$, even strong convergence of the approximating initial data, and the boundedness of the time interval $\left[0,R^*\right]$. Similarly, we argue for the terms with $\mu_k$. So there only remains the term including $\overline j_k$. To tackle this one, we estimate
	\begin{align*}
	\left|\int_0^{R^*}\int_{\R^3}\left(\overline j_k-j_R\right)\cdot\vartheta\,dxdt\right|&\leq\left\|\overline j_k-j_k\right\|_{L^1\left(0,R^*;L^2\left(\R^3;\R^3\right)\right)}\left\|\vartheta\right\|_{L^\infty\left(0,R^*;L^2\left(\R^3;\R^3\right)\right)}\\
	&\phantom{\leq\;}+\suma\left|\e\right|\left|\int_0^{R^*}\int_{\R^3}\int_{B_R}\v\left(\f_k-\f_R\right)\,dv\cdot\vartheta\,dxdt\right|
	\end{align*}
	where the first term on the right hand side converges to $0$ for $k\to\infty$ by construction of $\overline j_k$ and each summand of the second term by weak convergence of the $\f_k$. Note that for the latter limit our cut-off plays an important role since $\v\cdot\vartheta\chi_{\left\{\left|v\right|\leq R\right\}}\in L^2\left(\left[0,R^*\right]\times\R^3\times\R^3\right)$.
	
	Passing to the limit in \eqref{eq:Vlasovweakiter} is more complicated, especially because of the nonlinear product term including $E_k$, $H_k$, and $\f_k$. The other terms are easy to handle due to weak convergence of $\f_k$ and weak (or weakly-*) convergence of $\f_{k,+}$. The nonlinear term is handled as in \cite[Proof of Lemma 3.1.]{Guo93} by a highly nontrivial tool, namely the momentum-averaging lemma (see \cite{DL89}, or \cite{Rei04} for a shortened proof). For this, it is important that the sequences $\left(\f_k\right)$ are bounded in the $L^2$- and $L^\infty$-norm and $\left(E_k,H_k\right)$ is bounded in the $L^2$-norm.
	
	Altogether, $\left(\left(\f_R,\f_{R,+}\right)_\alpha,E_R,H_R,j_R\right)$ is a weak solution of \eqref{eq:WholeSystem} on the time interval $\left[0,R^*\right]$ in the sense of Definition \ref{def:WeakSolWholeSys}.
	
	In order to have good estimates for $R\to\infty$, the right hand side of an energy inequality should not depend on $R$. To this end, consider \eqref{eq:Vleqest3} and \eqref{eq:Maxwelliden} applied to the $k$-iterated functions. Note that the estimate on the term on the left hand side of \eqref{eq:Vleqest3} including the boundary values is only worth anything for $k\to\infty$ for $\alpha=1,\dots,N'$. Therefore, it is convenient to introduce
	\begin{align*}
	\b_k\left(T\right):=\begin{cases}\displaystyle\left(1-\a_0\right)\int_{\gamma_T^+\cap\left\{\left|v\right|<R\right\}}\vo \f_{k,+}\,d\gamm,&\alpha=1,\dots,N'\\0,&\alpha=N'+1,\dots,N\end{cases}
	\end{align*}
	and similarly $\b_R\left(T\right)$ where $k$ is replaced by $R$. Now we have
	\begin{align}\label{eq:Vlenergyineqiter}
	&\b_k\left(T\right)+\int_\Omega\int_{B_R}\vo\f_k\left(T\right)\,dvdx\nonumber\\
	&\leq\int_\Omega\int_{\R^3}\vo\mathring\f\,dvdx+\int_{\gamma_T^-}\vo\g\,d\gamm+\int_0^T\int_\Omega\int_{B_R}\e\left(E_{k-1}+\v\times H_{k-1}\right)\cdot\v\f_k\,dvdxdt\nonumber\\
	&=\int_\Omega\int_{\R^3}\vo\mathring\f\,dvdx+\int_{\gamma_T^-}\vo\g\,d\gamm+\int_0^T\int_\Omega E_{k-1}\cdot\int_{B_R}\e\v\f_k\,dvdxdt
	\end{align}
	and
	\begin{align}\label{eq:Maxenergyiter}
	&\frac{1}{8\pi}\int_{\R^3}\left(\varepsilon_kE_k\cdot E_k+\mu_kH_k\cdot H_k\right)\left(T\right)\,dx\nonumber\\
	&=\frac{1}{8\pi}\int_{\R^3}\left(\varepsilon_k\mathring E_k\cdot\mathring E_k+\mu_k\mathring H_k\cdot\mathring H_k\right)\,dx-\int_0^T\int_{\R^3}E_k\cdot\overline j_k\,dxdt
	\end{align}
	for $k\geq 1$ and any $T\in \left]0,R^*\right]$. We consider the right hand sides of \eqref{eq:Vlenergyineqiter} and \eqref{eq:Maxenergyiter} further. The term including the initial data of the electromagnetic fields is bounded uniformly in $k$ due to
	\begin{align*}
	\int_{\R^3}\left(\varepsilon_k\mathring E_k\cdot\mathring E_k+\mu_k\mathring H_k\cdot\mathring H_k\right)\,dx\leq\sigma'\int_{\R^3}\left(\left|\mathring E_k\right|^2+\left|\mathring H_k\right|^2\right)\,dx\overset{k\to\infty}\to\sigma'\int_{\R^3}\left(\left|\mathring E\right|^2+\left|\mathring H\right|^2\right)\,dx.
	\end{align*}
	After approximating $\e\widehat{\cdot}_\alpha$ in $L^2\left(B_R;\R^3\right)$ by $C_c^\infty\left(B_R;\R^3\right)$-functions and using the momentum averaging lemma again we have, up to a subsequence,
	\begin{align}\label{eq:LimEnerTrans}
	\lim_{k\to\infty}\int_0^T\int_\Omega E_{k-1}\cdot\int_{B_R}\e\v\f_k\,dvdxdt=\int_0^T\int_\Omega E_R\cdot\int_{B_R}\e\v\f_R\,dvdxdt.
	\end{align}
	Summing \eqref{eq:LimEnerTrans} over $\alpha$ yields
	\begin{align*}
	\lim_{k\to\infty}\int_0^T\int_\Omega E_{k-1}\cdot j_k^\inte\,dvdxdt=\int_0^T\int_\Omega E_R\cdot j_R^\inte\,dvdxdt.
	\end{align*}
	Similarly,
	\begin{align*}
	\lim_{k\to\infty}\int_0^T\int_\Omega E_k\cdot j_k^\inte\,dvdxdt=\int_0^T\int_\Omega E_R\cdot j_R^\inte\,dvdxdt,
	\end{align*}
	whence we have
	\begin{align}\label{eq:EnergyTransIter}
	\lim_{k\to\infty}\int_0^T\int_\Omega\left(E_{k-1}\cdot j_k^\inte-E_k\cdot j_k^\inte\right)\,dvdxdt=0.
	\end{align}
	Unfortunately, this is not enough since we in fact have to consider $E_{k-1}\cdot j_k^\inte-E_k\cdot\overline j_k$. To get hands on this term, choose $\varphi_k^1,\varphi_k^2\in C_c^\infty\left(\left]0,R^*\right[\times\R^3\right)$ with
	\begin{align}\label{eq:approxEnerTransIter}
	\left\|E_{k-1}\cdot j_k^\inte-\varphi_k^1\right\|_{L^1\left(\left]0,R^*\right[\times\R^3\right)},\left\|E_k\cdot j_k^\inte-\varphi_k^2\right\|_{L^1\left(\left]0,R^*\right[\times\R^3\right)}<\frac{1}{k}
	\end{align}
	and choose $u_k\in C_c^\infty\left(\left]0,R^*\right[\times\Gamma;\R^3\right)$ such that
	\begin{align*}
	\left\|u-u_k\right\|_{L^1\left(0,R^*;L^2\left(\Gamma;\R^3\right)\right)}<\frac{1}{k}.
	\end{align*}
	Using these approximations and \eqref{eq:TotCurrIter} and \eqref{eq:approxjk} we estimate
	\begin{align}\label{eq:EnerTranIterEst}
	&\left|\int_0^T\int_{\R^3}\left(E_{k-1}\cdot j_k^\inte-E_k\cdot\overline j_k\right)\,dxdt\right|\nonumber\\
	&\leq\left|\int_0^T\int_{\R^3}E_k\cdot u_k\,dxdt\right|+\left|\int_0^T\int_{\R^3}E_k\cdot\left(u-u_k\right)\,dxdt\right|+\left|\int_0^T\int_{\R^3}\left(\varphi_k^1-\varphi_k^2\right)\,dxdt\right|\nonumber\\
	&\phantom{=\;}+\left|\int_0^T\int_{\R^3}\left(E_{k-1}\cdot j_k^\inte-\varphi_k^1\right)\,dxdt\right|+\left|\int_0^T\int_{\R^3}\left(\varphi_k^2-E_k\cdot j_k^\inte\right)\,dxdt\right|\nonumber\\
	&\phantom{=\;}+\left|\int_0^T\int_{\R^3}E_k\cdot\left(j_k-\overline j_k\right)\,dxdt\right|\nonumber\\
	&\leq\int_0^T\left\|E_k\left(t\right)\right\|_{L^2\left(\R^3;\R^3\right)}\left\|u_k\left(t\right)\right\|_{L^2\left(\Gamma;\R^3\right)}\,dt+\left|\int_0^T\int_{\R^3}\left(\varphi_k^1-\varphi_k^2\right)\,dxdt\right|+\frac{C}{k}\nonumber\\
	&=:\int_0^T\left\|E_k\left(t\right)\right\|_{L^2\left(\R^3;\R^3\right)}\left\|u_k\left(t\right)\right\|_{L^2\left(\Gamma;\R^3\right)}\,dt+h_k\left(T\right)
	\end{align}
	where $C>0$ does not depend on $k$ since we already have a uniform bound on the $E_k$ in $L^\infty\left(0,R^*;L^2\left(\R^3;\R^3\right)\right)$. Furthermore, $h_k$ is continuous with respect to $T$ and
	\begin{align*}
	h_k\left(T\right)\to 0\mathrm{\ for\ }k\to\infty\mathrm{\ for\ each\ }T\in\left[0,R^*\right]
	\end{align*}
	by \eqref{eq:EnergyTransIter} and \eqref{eq:approxEnerTransIter}. Moreover, we have
	\begin{align*}
	&\left|h_k\left(T\right)\right|\leq\frac{C+2}{k}+\left\|E_{k-1}\cdot j_k^\inte\right\|_{L^1\left(\left]0,R^*\right[\times\Omega\right)}+\left\|E_k\cdot j_k^\inte\right\|_{L^1\left(\left]0,R^*\right[\times\Omega\right)}\\
	&\leq\frac{C}{k}+\left(\left\|E_{k-1}\right\|_{L^\infty\left(0,R^*;L^2\left(\R^3;\R^3\right)\right)}+\left\|E_k\right\|_{L^\infty\left(0,R^*;L^2\left(\R^3;\R^3\right)\right)}\right)\left\|j_k^\inte\right\|_{L^1\left(0,R^*;L^2\left(\Omega;\R^3\right)\right)}\leq C
	\end{align*}
	where $C>0$ does not depend on $k$ (and $T$) by the uniform boundedness of the $E_k$ in $L^\infty\left(0,R^*;L^2\left(\R^3;\R^3\right)\right)$ and \eqref{eq:jintest} (combined with \eqref{eq:Vliterest1} and \eqref{eq:Vliterest2}, respectively). Therefore we can choose $l_k\in C^1\left(\left[0,R^*\right]\right)$ such that
	\begin{align}\label{eq:approxhk}
	\left\|\sqrt{h_k}-l_k\right\|_{C\left(\left[0,R^*\right]\right)}<\frac{1}{k}.
	\end{align}
	Then there also holds
	\begin{align}\label{eq:lklebesgue}
	l_k\left(T\right)\to 0\mathrm{\ for\ }k\to\infty\mathrm{\ and\ }\left|l_k\left(T\right)\right|\leq C+1\mathrm{\ for\ each\ }T\in\left[0,R^*\right].
	\end{align}
	Now let $0<T\leq T'\leq R^*$. Exploiting $\sigma\leq\varepsilon_k,\mu_k\leq\sigma'$, summing \eqref{eq:Vlenergyineqiter} over $\alpha$, adding \eqref{eq:Maxenergyiter}, and then using \eqref{eq:EnerTranIterEst} and \eqref{eq:approxhk} yields
	\begin{align*}
	&\suma\b_k\left(T\right)+\suma\int_\Omega\int_{B_R}\vo\f_k\left(T\right)\,dvdx+\frac{\sigma}{8\pi}\left\|\left(E_k,H_k\right)\left(T\right)\right\|_{L^2\left(\R^3;\R^6\right)}^2\\
	&\leq\suma\b_k\left(T\right)+\suma\int_\Omega\int_{B_R}\vo\f_k\left(T\right)\,dvdx+\frac{1}{8\pi}\int_{\R^3}\left(\varepsilon_kE_k\cdot E_k+\mu_kH_k\cdot H_k\right)\left(T\right)\,dx\\
	&\leq\suma\int_\Omega\int_{\R^3}\vo\mathring\f\,dvdx+\suma\int_{\gamma_T^-}\vo\g\,d\gamm+\frac{1}{8\pi}\int_{\R^3}\left(\varepsilon_k\mathring E_k\cdot\mathring E_k+\mu_k\mathring H_k\cdot\mathring H_k\right)\,dx\\
	&\phantom{=\;}+\int_0^T\int_\Omega\left(E_{k-1}\cdot j_k^\inte-E_k\cdot\overline j_k\right)\,dxdt\\
	&\leq\suma\int_\Omega\int_{\R^3}\vo\mathring\f\,dvdx+\suma\int_{\gamma_{T'}^-}\vo\g\,d\gamm+\frac{\sigma'}{8\pi}\left\|\left(\mathring E_k,\mathring H_k\right)\right\|_{L^2\left(\R^3;\R^6\right)}^2\\
	&\phantom{=\;}+\int_0^T\left\|E_k\left(t\right)\right\|_{L^2\left(\R^3;\R^3\right)}\left\|u_k\left(t\right)\right\|_{L^2\left(\Gamma;\R^3\right)}\,dt+h_k\left(T\right)\\
	&\leq\suma\int_\Omega\int_{\R^3}\vo\mathring\f\,dvdx+\suma\int_{\gamma_{T'}^-}\vo\g\,d\gamm+\frac{\sigma'}{8\pi}\left\|\left(\mathring E_k,\mathring H_k\right)\right\|_{L^2\left(\R^3;\R^6\right)}^2\\
	&\phantom{=\;}+\sqrt{4\pi}\sigma^{-\frac{1}{2}}\int_0^T\frac{\sqrt{\sigma}}{\sqrt{4\pi}}\left\|\left(E_k,H_k\right)\left(t\right)\right\|_{L^2\left(\R^3;\R^6\right)}\left\|u_k\left(t\right)\right\|_{L^2\left(\Gamma;\R^3\right)}\,dt+2l_k\left(T\right)^2+\frac{2}{k^2}.
	\end{align*}
	By $E_k,H_k\in C\left(0,R^*;L^2\left(\R^3;\R^3\right)\right)$, $u_k\in C\left(0,R^*;L^2\left(\Gamma;\R^3\right)\right)$, and by differentiability of $l_k$ we can apply Lemma \ref{lma:QuadraticGronwall} and thus obtain
	\begin{align}\label{eq:EnerIneqIter}
	&\suma\b_k\left(T\right)+\suma\int_\Omega\int_{B_R}\vo\f_k\left(T\right)\,dvdx+\frac{\sigma}{8\pi}\left\|\left(E_k,H_k\right)\left(T\right)\right\|_{L^2\left(\R^3;\R^6\right)}^2\nonumber\\
	&\leq\left(\left(\suma\int_\Omega\int_{\R^3}\vo\mathring\f\,dvdx+\suma\int_{\gamma_{T'}^-}\vo\g\,d\gamm+\frac{\sigma'}{8\pi}\left\|\left(\mathring E_k,\mathring H_k\right)\right\|_{L^2\left(\R^3;\R^6\right)}^2\right.\right.\nonumber\\
	&\omit\hfill$\displaystyle\left.\left.\vphantom{\suma\int_\Omega\int_{\R^3}\vo\mathring\f\,dvdx+\suma\int_{\gamma_{T'}^-}\vo\g\,d\gamm+\frac{\sigma'}{2}\left\|\left(\mathring E_k,\mathring H_k\right)\right\|_{L^2\left(\R^3;\R^6\right)}^2}+2l_k\left(T\right)^2+\frac{2}{k^2}\right)^{\frac{1}{2}}+\sqrt{2\pi}\sigma^{-\frac{1}{2}}\left\|u\right\|_{L^1\left(0,T';L^2\left(\Gamma;\R^3\right)\right)}+\sqrt{2\pi}\sigma^{-\frac{1}{2}}\frac{1}{k}\right)^2$
	\end{align}
	altogether. For $k\to\infty$, let $A\subset\left[0,T'\right]$ be measurable and integrate \eqref{eq:EnerIneqIter} over $A$. As for $\suma\b_k\left(T\right)$, we note that $\suma\b_R\left(T\right)$ is the pointwise limit of $\suma\b_k\left(T\right)$ by weak convergence and we have a pointwise bound uniformly in $T$ and $k$ by \eqref{eq:EnerIneqIter}. Additionally exploiting weak convergence and weak lower semi-continuity, respectively, the strong convergence of the initial electromagnetic fields, and \eqref{eq:lklebesgue} we may pass to the limit and conclude, since $A$ was arbitrary, that
	\begin{align}\label{eq:EnerEstR}
	&\left(\sumb\left(1-\a_0\right)\int_{\gamma_T^+\cap\left\{\left|v\right|<R\right\}}\vo \f_{R,+}\,d\gamm\vphantom{\left\|\suma\int_\Omega\int_{B_R}\vo\f_R\left(\cdot\right)\,dvdx+\frac{\sigma}{8\pi}\left\|\left(E_R,H_R\right)\left(\cdot\right)\right\|_{L^2\left(\R^3;\R^6\right)}^2\right\|_{L^\infty\left(\left[0,T\right]\right)}}\right.\nonumber\\
	&\omit\hfill$\displaystyle\left.+\left\|\suma\int_\Omega\int_{B_R}\vo\f_R\left(\cdot\right)\,dvdx+\frac{\sigma}{8\pi}\left\|\left(E_R,H_R\right)\left(\cdot\right)\right\|_{L^2\left(\R^3;\R^6\right)}^2\right\|_{L^\infty\left(\left[0,T\right]\right)}\right)^{\frac{1}{2}}$\nonumber\\
	&\leq\left(\suma\int_\Omega\int_{\R^3}\vo\mathring\f\,dvdx+\sumb\int_{\gamma_T^-}\vo\g\,d\gamm+\frac{\sigma'}{8\pi}\left\|\left(\mathring E,\mathring H\right)\right\|_{L^2\left(\R^3;\R^6\right)}^2\right)^{\frac{1}{2}}\nonumber\\
	&\phantom{=\;}+\sqrt{2\pi}\sigma^{-\frac{1}{2}}\left\|u\right\|_{L^1\left(0,T;L^2\left(\Gamma;\R^3\right)\right)}
	\end{align}
	for all $T\in\left]0,R^*\right]$, after taking $T=T'$. This is exactly the energy estimate we wanted to derive since $R$ does no longer appear on the right hand side.
	
	Lastly, we show that, up to a subsequence, $j_k^\inte\rightharpoonup j_R^\inte$ in $L^{\frac{4}{3}}\left(\left[0,R^*\right]\times\Omega;\R^3\right)$ for $k\to\infty$ and derive an $L^\infty\left(0,R^*;L^{\frac{4}{3}}\left(\Omega;\R^3\right)\right)$-bound for $j_R^\inte$. To this end, applying \eqref{eq:Vleqest4} yields
	\begin{align*}
	&\left\|j_k^\inte\left(T\right)\right\|_{L^{\frac{4}{3}}\left(\Omega;\R^3\right)}\leq\suma\left|\e\right|\left\|\int_{B_R}\f_k\left(T,\cdot,v\right)\,dv\right\|_{L^{\frac{4}{3}}\left(\Omega\right)}\\
	&\leq\suma\left(\frac{4\pi}{3}\left\|\mathring\f\right\|_{L^\infty\left(\Omega\times\R^3\right)}+1+\begin{cases}\frac{4\pi}{3}\left(1-\a_0\right)^{-1}\left\|\g\right\|_{L^\infty\left(\gamma_T^-\right)},&\alpha=1,\dots,N'\\0,&\alpha=N'+1,\dots,N\end{cases}\right)\\
	&\omit\hfill$\displaystyle\cdot\left|\e\right|\left(\int_{\Omega}\int_{B_R}\vo\f_k\left(T\right)\,dvdx\right)^{\frac{3}{4}}$
	\end{align*} 
	for $0\leq T\leq R^*$ and the right hand side is bounded in $L^{\frac{4}{3}}\left(\left[0,R^*\right]\right)$ uniformly in $k$ by virtue of \eqref{eq:EnerEstR}. Therefore we may assume that $j_k^\inte$ converges weakly in $L^{\frac{4}{3}}\left(\left[0,R^*\right]\times\Omega;\R^3\right)$. It is easy to see that the weak limit has to be $j_R^\inte$. As for the desired bound, we proceed similarly to \eqref{eq:Vleqest4} and \eqref{eq:jest43}, respectively, sum over $\alpha$, apply a Hölder estimate for the sum, and use the known estimates to get
	\begin{align}
	&\left\|j_R^\inte\right\|_{L^\infty\left(0,T;L^{\frac{4}{3}}\left(\Omega;\R^3\right)\right)}\nonumber\\
	&\leq\left(\suma\left|\e\right|^4\left(\frac{4\pi}{3}\left\|\mathring\f\right\|_{L^\infty\left(\Omega\times\R^3\right)}+1+\begin{cases}\frac{4\pi}{3\left(1-\a_0\right)}\left\|\g\right\|_{L^\infty\left(\gamma_T^-\right)},&\alpha\leq N'\\0,&\alpha>N'\end{cases}\right)^4\right)^{\frac{1}{4}}\nonumber\\
	&\phantom{=\;}\cdot\left(\left(\suma\int_\Omega\int_{\R^3}\vo\mathring\f\,dvdx+\sumb\int_{\gamma_T^-}\vo\g\,d\gamm+\frac{\sigma'}{8\pi}\left\|\left(\mathring E,\mathring H\right)\right\|_{L^2\left(\R^3;\R^6\right)}^2\right)^{\frac{1}{2}}\right.\nonumber\\
	&\omit\hfill$\displaystyle\left.\vphantom{\left(\suma\int_\Omega\int_{\R^3}\vo\mathring\f\,dvdx+\sumb\int_{\gamma_T^-}\vo\g\,d\gamm+\frac{\sigma'}{8\pi}\left\|\left(\mathring E,\mathring H\right)\right\|_{L^2\left(\R^3;\R^6\right)}^2\right)^{\frac{1}{2}}}+\sqrt{2\pi}\sigma^{-\frac{1}{2}}\left\|u\right\|_{L^1\left(0,T;L^2\left(\Gamma;\R^3\right)\right)}\right)^{\frac{3}{2}}$\label{eq:jRest}
	\end{align}
	for any $0<T\leq R^*$.
	
	\subsection{Removing the cut-off}\label{sec:remove}
	Finally we obtain a solution of \eqref{eq:WholeSystem} on the time Interval $\I$ by letting $R\to\infty$. To this end, it is crucial that the right hand sides of the obtained estimates of the previous section do not depend on $R$; see \eqref{eq:fRest} to \eqref{eq:f+Restp}, \eqref{eq:EnerEstR}, and \eqref{eq:jRest}. Take the sequence $\left(R_m\right)_m=\left(m\right)_m$, then we see by a diagonal sequence argument that, for certain limit functions, $\f_m\overset{(\ast)}\rightharpoonup\f$ in $L^p\left(\left[0,M^*\right]\times\Omega\times\R^3\right)$, $\f_{m,+}\overset{*}\rightharpoonup\f_+$ in $L^\infty\left(\left[0,M^*\right]\times\Omega\times\R^3\right)$, $\left(E_m,H_m\right)\rightharpoonup\left(E,H\right)$ in $L^2\left(\left[0,M^*\right]\times\R^3;\R^6\right)$, and $j_m^\inte\rightharpoonup j^\inte$ in $L^{\frac{4}{3}}\left(\left[0,M^*\right]\times\Omega;\R^3\right)$ for each $1<p\leq\infty$, $M>0$ (where $M^*=\min\left\{M,\T\right\}$). For $\alpha=1,\dots,N'$ we additionally have $\f_{m,+}\rightharpoonup\f_+$ in $L^p\left(\left[0,M^*\right]\times\Omega\times\R^3\right)$ for $1<p<\infty$. We may pass to the limit in the respective estimates to obtain \eqref{eq:estf1} to \eqref{eq:estjint}. Passage to the limit in the weak formulation of \eqref{eq:WholeSystem} works in the same way as in \cite[Theorem 4.1.]{Guo93}. That the weak limit of the $j_m^\inte$ is indeed the current density $j^\inte$ induced by the $\f$ is proved in the same way as in \cite[Proposition 4]{Rei04} exploiting the energy estimate.
	
	Altogether, Theorem \ref{thm:Existence} is proved.
	
	\section{The redundant divergence equations and the charge balance}\label{sec:diveqn}
	In this section, we want to deduce in what sense the divergence equations \eqref{eq:diveqn} hold for a solution of \eqref{eq:WholeSystem} in the sense of Definition \ref{def:WeakSolWholeSys}. This is much more difficult than in \cite[Lemma 4.2.]{Guo93} since we consider these divergence equations on whole $\R^3$ instead of $\Omega$. The weak formulation of \eqref{eq:diveqn} is
	\begin{subequations}\label{eq:diveqnweak}
		\begin{align}
		0&=\int_0^{\T}\int_{\R^3}\left(\varepsilon E\cdot\partial_x\varphi+4\pi\rho\varphi\right)\,dxdt,\label{eq:diveqnweakD}\\
		0&=\int_0^{\T}\int_{\R^3}\mu H\cdot\partial_x\varphi\,dxdt\label{eq:diveqnweakB}
		\end{align}
	\end{subequations}
	for all $\varphi\in C_c^\infty\left(\left]0,\T\right[\times\R^3\right)$. Obviously, \eqref{eq:diveqnweak} is equivalent to \eqref{eq:diveqn} be satisfied on $\I\times\R^3$ in the sense of distributions.
	
	For \eqref{eq:diveqn} should propagate in time, we have to demand that \eqref{eq:diveqn} holds initially as a constraint on the initial data, that is to say
	\begin{align*}
	\div\left(\varepsilon\mathring E\right)=4\pi\mathring\rho,\quad\div\left(\mu\mathring H\right)=0
	\end{align*}
	on $\R^3$ in the sense of distributions, or, equivalently,
	\begin{subequations}
		\begin{align}
		0&=\int_{\R^3}\left(\varepsilon\mathring E\cdot\partial_x\xi+4\pi\mathring\rho\xi\right)\,dx,\\
		0&=\int_{\R^3}\mu\mathring H\cdot\partial_x\xi\,dx\label{eq:diveqninitweakB}
		\end{align}
	\end{subequations}
	for all $\xi\in C_c^\infty\left(\R^3\right)$.
	
	Now let $\left(\left(\f,\f_+\right)_\alpha,E,H,j\right)$ be a weak solution of \eqref{eq:WholeSystem} on the time interval $\I$. It is easy to see that \eqref{eq:diveqnweakB} holds: Define
	\begin{align*}
	\vartheta\colon\I\times\R^3\to\R^3,\quad\vartheta\left(t,x\right)=-\int_t^{\T}\partial_x\varphi\left(s,x\right)\,ds.
	\end{align*}
	Clearly, $\vartheta\in\Theta_{\T}$. Hence \eqref{eq:Maxwellweak2} and $\xi=\int_0^{\T}\varphi\left(s,\cdot\right)\,ds$ in \eqref{eq:diveqninitweakB} delivers
	\begin{align*}
	0&=\int_0^{\T}\int_{\R^3}\left(\mu H\cdot\partial_t\vartheta+E\cdot\curl_x\vartheta\right)\,dxdt+\int_{\R^3}\mu\mathring{H}\cdot\vartheta\left(0\right)\,dx\\
	&=\int_0^{\T}\int_{\R^3}\left(\mu H\cdot\partial_x\varphi-E\cdot\int_t^{\T}\curl_x \partial_x\varphi\left(s,x\right)\,ds\right)\,dxdt-\int_{\R^3}\mu\mathring{H}\cdot\partial_x\xi\,dx\\
	&=\int_0^{\T}\int_{\R^3}\mu H\cdot\partial_x\varphi\,dxdt
	\end{align*}
	and we are done.
	
	As for \eqref{eq:diveqnweakD}, we have to exploit local conservation of charge. Consequently, we have to determine what $\rho$ is and have to use the Vlasov equations (their weak form, more precisely). Therefore, we have to make use of \eqref{eq:Vlasovweak} in order to put the internal charge density into play. However, the test functions there have to satisfy $\psi\in\Psi_{\T}$ but a test function of \eqref{eq:diveqnweakD} does not depend on $v$. Consequently, we, on the one hand, have to consider a cut-off in momentum space, and, on the other hand, have to show that \eqref{eq:Vlasovweak} also holds if the support of $\psi$ is not away from $\gamma_{\T}^0$ or $\left\{0\right\}\times\partial\Omega\times\R^3$. For the latter one, the following technical lemma is useful. There and throughout the rest of this section, we assume that $\Omega\subset\R^3$ is a bounded domain such that $\partial\Omega$ is of class $C^1\cap W^{2,\infty}$. Here, $\partial\Omega$ being of class $C^1\cap W^{2,\infty}$ means that it is of class $C^1$ and all local flattenings are locally of class $W^{2,\infty}$.
	\begin{lemma}\label{lma:approxpsi}
		Let $1\leq p<2$ and $\psi\in C^1\left(\I\times\R^3\times\R^3\right)$ with $\supp\psi\subset\left[0,\T\right[\times\R^3\times\R^3$ compact. Then there is a sequence $\left(\psi_k\right)\subset\Psi_{\T}$ such that
		\begin{align}\label{eq:approxpsi}
		\left\|\psi_k-\psi\right\|_{W^{1,p_t2_x1_v}\left(\I\times\Omega\times\R^3\right)}\to 0
		\end{align}
		for $k\to\infty$ and there is $0<r<\infty$ such that $\psi$ and all $\psi_k$ vanish for $t\geq r$. Here,
		\begin{align*}
		\left\|h\right\|_{W^{1,p_t2_x1_v}\left(\I\times\Omega\times\R^3\right)}:=\left(\int_0^{\T}\left(\int_{\Omega}\left(\int_{\R^3}\left(\left|h\right|+\left|\partial_th\right|+\left|\partial_xh\right|+\left|\partial_vh\right|\right)\,dv\right)^2\,dx\right)^{\frac{p}{2}}\,dt\right)^{\frac{1}{p}}.
		\end{align*}
	\end{lemma}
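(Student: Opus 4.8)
The plan is to approximate $\psi$ from within $\Psi_{\T}$ by first multiplying by smooth cut-offs that remove the support near the two problematic sets --- the lateral grazing set $\gamma_{\T}^0$ and the bottom edge $\{0\}\times\partial\Omega\times\R^3$ --- and then mollifying to restore $C^\infty$-regularity. The geometry is governed by the signed distance $d$ to $\partial\Omega$ (negative inside $\Omega$); since $\partial\Omega\in C^1\cap W^{2,\infty}$, $d$ is of class $W^{2,\infty}$ in a tubular neighborhood of $\partial\Omega$, so that $\tilde n:=\nabla d$ is a \emph{Lipschitz} extension of the outer normal $n$ with $|\tilde n|=1$ on $\partial\Omega$. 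The Lipschitz bound on $\tilde n$ (equivalently the $W^{2,\infty}$-hypothesis) will be exactly what keeps the spatial derivatives of the grazing cut-off under control. Fix a smooth $\lambda\colon[0,\infty[\to[0,1]$ with $\lambda\equiv0$ on $[0,1]$ and $\lambda\equiv1$ on $[2,\infty[$.

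For the grazing set I would set, for small $\iota>0$,
\[
\zeta_\iota(x,v):=1-\left(1-\lambda\left(d(x)^2/\iota^2\right)\right)\left(1-\lambda\left((v\cdot\tilde n(x))^2/\iota^2\right)\right),
\]
so that $\zeta_\iota$ vanishes on a full neighborhood $\{|d(x)|<\iota,\ |v\cdot\tilde n(x)|<\iota\}$ of $\tilde\gamma^0$ and equals $1$ once $|d(x)|\gtrsim\iota$ or $|v\cdot\tilde n(x)|\gtrsim\iota$; note that the factor $\lambda(d^2/\iota^2)$ confines the cut to a thin boundary layer, so that interior points and non-grazing boundary points are untouched as $\iota\to0$. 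The derivatives satisfy $|\partial_x\zeta_\iota|,|\partial_v\zeta_\iota|\lesssim1/\iota$ (here $\partial_x(v\cdot\tilde n)=v\cdot\nabla\tilde n$ is bounded precisely because $\tilde n$ is Lipschitz), and they are supported where $|d|\lesssim\iota$ \emph{and} $|v\cdot\tilde n|\lesssim\iota$. The crucial point is that for fixed $(t,x)$ the inner $v$-integral runs over a slab of thickness $\sim\iota$, so $\int_{\R^3}|\psi\,\partial\zeta_\iota|\,dv\lesssim(1/\iota)\cdot\iota=1$, supported in $\{|d(x)|\lesssim\iota\}$; squaring, integrating over the layer of $x$-measure $\sim\iota$, and taking the $t$-integral then gives a contribution $\lesssim\iota^{1/2}\to0$ to the mixed norm. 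Thus $\|\psi-\psi\zeta_\iota\|_{W^{1,p_t2_x1_v}}\to0$.

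For the bottom edge I would analogously use $\eta_\iota(t,x):=\lambda\left((t^2+d(x)^2)/\iota^2\right)$, which vanishes near $\{0\}\times\partial\Omega$ and has $|\partial_t\eta_\iota|,|\partial_x\eta_\iota|\lesssim1/\iota$ supported in $\{t\lesssim\iota,\ |d(x)|\lesssim\iota\}$. Here no $v$-slab helps, so the inner $v$-integral only contributes the (bounded) size of the $v$-support and $\int_{\R^3}|\psi\,\partial\eta_\iota|\,dv\lesssim1/\iota$; the $L^2$-integral over the $x$-layer of measure $\sim\iota$ yields $\sim\iota^{-1/2}$, and finally the $L^p$-integral over $t\lesssim\iota$ produces a factor $\iota^{1/p}$, for a total $\sim\iota^{1/p-1/2}$. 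This tends to $0$ \emph{exactly because $p<2$}, which is where the hypothesis is used; for $p=2$ the construction would fail. Hence $\|\psi-\psi\zeta_\iota\eta_\iota\|_{W^{1,p_t2_x1_v}}\to0$ as $\iota\to0$, and $\psi_\iota:=\psi\,\zeta_\iota\,\eta_\iota$ is Lipschitz, compactly supported in $[0,\T[\times\R^3\times\R^3$ with support at positive distance from both $\gamma_{\T}^0$ and $\{0\}\times\partial\Omega\times\R^3$.

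It remains to restore smoothness. Because $\tilde n$ is merely Lipschitz, $\psi_\iota$ is only Lipschitz, so I would mollify it in $(t,x,v)$ at a scale $\delta\ll\iota$; after a $C^1$-extension (or even reflection) in $t$ across $t=0$ the mollification is legitimate near the bottom face, and, since $\psi_\iota\in W^{1,\infty}$ has compact support and all exponents of the mixed norm are finite, standard mollification converges in $W^{1,p_t2_x1_v}$ as $\delta\to0$. Choosing $\delta$ small relative to the $\iota$-sized gaps keeps the mollified function vanishing near both bad sets, so it lies in $\Psi_{\T}$, and the $t$-support grows by at most $\delta$, giving a uniform $r<\T$ for all members of the sequence. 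A diagonal choice $\iota_k,\delta_k\to0$ then produces $\psi_k\in\Psi_{\T}$ with $\psi_k\to\psi$ in the required norm. I expect the main obstacle to be the two derivative--measure estimates above: verifying that the $1/\iota$ blow-up of the cut-off gradients is absorbed --- for the grazing cut by the $L^1_v$-thinness of the normal-velocity slab, and for the edge cut by the exponent $p<2$ in the $L^p_t$-norm together with the thinness of the boundary layer in $L^2_x$.
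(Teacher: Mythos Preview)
Your proposal is correct and follows essentially the same strategy as the paper: multiply $\psi$ by two cut-offs---one removing a neighborhood of $\tilde\gamma^0$ (controlled by the pair ``distance to $\partial\Omega$, normal component of $v$''), one removing a neighborhood of $\{0\}\times\partial\Omega$ (controlled by the pair ``$t$, distance to $\partial\Omega$'')---derive the same rates $\iota^{1/2}$ and $\iota^{1/p-1/2}$ from the thin $v$-slab, the thin $x$-layer, and the short $t$-interval (the latter being where $p<2$ enters), and finally mollify. The only difference is cosmetic: the paper builds these cut-offs via local boundary charts and a partition of unity, whereas you use the global signed distance and its gradient $\tilde n=\nabla d$; both rely on the $C^1\cap W^{2,\infty}$ hypothesis in exactly the same way, and the resulting estimates coincide.
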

	\begin{proof}
		First, we extend $\psi$ to a $C^1$-function on $\R\times\R^3\times\R^3$ such that $\supp\psi\subset\left]-\T,\T\right[\times\R^3\times\R^3$ is compact (which can be achieved since the hyperplane where $t=0$ is smooth).
		
		By assumption about $\partial\Omega$, for each $x\in\partial\Omega$ there exist open sets $\tilde U_x,\tilde U_x'\subset\R^3$ with $x\in\tilde U_x$ and a $C^1$-diffeomorphism $F^x\colon\tilde U_x\to\tilde U_x'$, that has the property $F^x\in W_\loc^{2,\infty}\left(\tilde U_x;\tilde U_x'\right)$, such that $F^x\left(\tilde U_x\cap\partial\Omega\right)=\tilde U_x'\cap\left(\R^2\times\left\{0\right\}\right)$. For any $x\in\partial\Omega$ we choose an open set $U_x\subset\R^3$ such that $x\in U_x$ and $U_x\subset\subset\tilde U_x$ (here, $A\subset\subset B$ is shorthand for $A$ bounded and $\overline A\subset B$). Then $\partial\Omega\subset\bigcup_{x\in\partial\Omega}U_x$, whence there are a finite number of points, say $x_i\in\partial\Omega$, $i=1,\dots m$, such that $\partial\Omega\subset\bigcup_{i=1}^mU_i$, since $\partial\Omega$ is compact. Here and in the following, we write $U_i:=U_{x_i}$, $\tilde U_i:=\tilde U_{x_i}$, and $F^i:=F^{x_i}$. Since it holds that $\overline\Omega\setminus\bigcup_{i=1}^mU_i\subset\subset\Omega$, there is an open set $U_0\subset\R^3$ satisfying $\overline\Omega\setminus\bigcup_{i=1}^mU_i\subset\subset U_0\subset\subset\Omega$. Therefore we have $\overline\Omega\subset\bigcup_{i=0}^mU_i$. Finally, we choose an open set $M\subset\R^3$ such that $\overline\Omega\subset M\subset\subset\bigcup_{i=0}^mU_i$.
		
		Now let $\zeta_i$, $i=0,\dots,m$, be a partition of unity on $M$ subordinate to $U_i$, $i=0,\dots,m$, i.e., the $\zeta_i$ are of class $C^\infty$, $0\leq\zeta_i\leq 1$, $\supp\zeta_i\subset U_i$, and $\sum_{i=0}^m\zeta_i=1$ on $M$ (and hence on $\overline\Omega$, in particular). Furthermore, let $\eta\in C^\infty\left(\R\right)$ such that $0\leq\eta\leq 1$, $\eta\left(y\right)=0$ for $\left|y\right|\leq\frac{1}{2}$, and $\eta\left(y\right)=1$ for $\left|y\right|\geq 1$.
		
		Next, for $i=1,\dots,m$ define $G^i\colon U_i\times\R^3\to\R^6$, $G^i\left(x,v\right)=\left(F^i\left(x\right),A^i\left(x\right)v\right)$, where the rows $A_j^i\left(x\right)$, $j=1,2,3$, of $A^i\left(x\right)$ are given by
		\begin{align*}
		A_1^i\left(x\right)=\frac{\nabla F_1^i\left(x\right)\times\nabla F_3^i\left(x\right)}{\left|\nabla F_1^i\left(x\right)\times\nabla F_3^i\left(x\right)\right|},\; A_2^i\left(x\right)=\frac{\nabla F_3^i\left(x\right)\times\left(\nabla F_1^i\left(x\right)\times\nabla F_3^i\left(x\right)\right)}{\left|\nabla F_3^i\left(x\right)\times\left(\nabla F_1^i\left(x\right)\times\nabla F_3^i\left(x\right)\right)\right|},\;
		A_3^i\left(x\right)=\frac{\nabla F_3^i\left(x\right)}{\left|\nabla F_3^i\left(x\right)\right|}.
		\end{align*}
		Note that the rows are orthogonal and have length one, and that $A^i$ is of class $C\cap W^{1,\infty}$ on $U_i$ since $F^i$ is of class $C^1\cap W^{2,\infty}$ on $U_i$, $\det DF^i\neq 0$ on $\tilde U_i$, and hence the denominators in $A^i\left(x\right)$ are bounded away from zero on $U_i$ because of $U_i\subset\subset\tilde U_i$. Therefore, $G^i$ is of class $C\cap W^{1,\infty}$ on $U_i\times B_R$ for any $R>0$.
		
		The key idea is that, for any $\left(x,v\right)\in U_i\times\R^3$, $x\in\partial\Omega$ is equivalent to $G_3^i\left(x,v\right)=0$, and, moreover, $\left(x,v\right)\in\tilde\gamma^0$ is equivalent to $G_3^i\left(x,v\right)=G_6^i\left(x,v\right)=0$, since $n\left(x\right)$ and $\nabla F_3^i\left(x\right)$ are parallel (and both non-zero). Thus, since the supports of the approximating functions $\psi_k$ shall be away from $\gamma_{\T}^0$ and $\left\{0\right\}\times\partial\Omega\times\R^3$, it is natural to consider the following $C^\infty$-function in the variables $\left(t,G\right)$, that cuts off a region near the two sets where $G_3=G_6=0$ and where $t=G_3=0$:
		\begin{align*}
		\eta_k\colon\R\times\R^6\to\R,\quad\eta_k\left(t,G\right)=\eta\left(k^2\left(G_3^2+G_6^2\right)\right)\eta\left(k^2\left(t^2+G_3^2\right)\right).
		\end{align*}
		For $k\in\N$ we then define
		\begin{align*}
		\tilde\psi_k\colon\R\times\R^3\times\R^3\to\R,\quad\tilde\psi_k\left(t,x,v\right)=\zeta_0\left(x\right)\psi\left(t,x,v\right)+\sum_{i=1}^m\zeta_i\left(x\right)\psi\left(t,x,v\right)\eta_k^{G^i}\left(t,x,v\right),
		\end{align*}
		where
		\begin{align*}
		\eta_k^{G^i}\colon\R\times U_i\times\R^3\to\R,\quad\eta_k^{G^i}\left(t,x,v\right)=\eta_k\left(t,G^i\left(x,v\right)\right).
		\end{align*}
		We should mention that, because of $\zeta_i\in C_c^\infty\left(U_i\right)$, $i=0,\dots,m$, the $i$-th summand is (by definition) zero if $x\notin U_i$. Note that we can apply the chain rule for $\eta_k^{G^i}$ since $\eta_k$ is smooth and $G^i\in W^{1,1}\left(U_i\times B_R\right)$ for any $R>0$. Therefore, $\tilde\psi_k$ is of class $C\cap W^{1,\infty}$.
		
		First we show that \eqref{eq:approxpsi} holds for $\tilde\psi_k$ (instead of $\psi_k$). By $\sum_{i=0}^m\zeta_i=1$ on $\overline\Omega$ we have
		\begin{align}\label{eq:esttildepsik}
		\left\|\tilde\psi_k-\psi\right\|_{W^{1,p_t2_x1_v}\left(\I\times\Omega\times\R^3\right)}&\leq\sum_{i=1}^m\left\|\zeta_i\psi\left(\eta_k^{G^i}-1\right)\right\|_{W^{1,p_t2_x1_v}\left(\left]0,R\right[\times U_i\times B_R\right)}\nonumber\\
		&\leq C\sum_{i=1}^m\left\|\eta_k^{G^i}-1\right\|_{W^{1,p_t2_x1_v}\left(\left]0,R\right[\times U_i\times B_R\right)},
		\end{align}
		where $C>0$ depends on the (finite) $C_b^1$-norms of $\psi$ (and $\zeta_i$) and where $R>0$ is chosen such that $\psi$ vanishes for $t\geq R$ or $\left|v\right|\geq R$. For fixed $i\in\left\{1,\dots,m\right\}$ and $\left(t,x,v\right)\in\R\times U_i\times\R^3$ there hold the implications
		\begin{align*}
		\eta_k^{G^i}\left(t,x,v\right)\neq 1&\Rightarrow k^2\left(G_3^i\left(x,v\right)^2+G_6^i\left(x,v\right)^2\right)\leq 1\lor k^2\left(t^2+G_3^i\left(x,v\right)^2\right)\leq 1\\
		&\Rightarrow\left|F_3^i\left(x\right)\right|\leq k^{-1}\land\left(\left|G_6^i\left(x,v\right)\right|\leq k^{-1}\lor\left|t\right|\leq k^{-1}\right).
		\end{align*}
		Therefore we have, recalling that $0\leq\eta\leq 1$,
		\begin{align*}
		&\left(\int_0^R\left(\int_{U_i}\left(\int_{B_R}\left|\eta_k^{G^i}-1\right|\,dv\right)^2\,dx\right)^{\frac{p}{2}}\,dt\right)^{\frac{1}{p}}\\
		&\leq\left(\int_0^R\left(\int_{\left\{x\in U_i\mid\left|F_3^i\left(x\right)\right|\leq k^{-1}\right\}}\left(\int_{\left\{v\in B_R\mid\left|G_6^i\left(x,v\right)\right|\leq k^{-1}\right\}}dv\right)^2\,dx\right)^{\frac{p}{2}}\,dt\right)^{\frac{1}{p}}\\
		&\phantom{=\;}+\left(\int_0^{k^{-1}}\left(\int_{\left\{x\in U_i\mid\left|F_3^i\left(x\right)\right|\leq k^{-1}\right\}}\left(\frac{4\pi}{3}R^3\right)^2\,dx\right)^{\frac{p}{2}}\,dt\right)^{\frac{1}{p}}\\
		&=:I_1^k+I_2^k.
		\end{align*}
		In the following we will heavily make use of the facts that $A^i\left(x\right)$ is orthogonal for any $x\in U_i$, $\left|\det DF^i\right|$ is bounded away from zero on $U_i$, and $F^i\left(U_i\right)$ is bounded. Thus
		\begin{align*}
		I_1^k\leq C\left(\int_0^R\left(\int_{\left\{y\in F^i\left(U_i\right)\mid\left|y_3\right|\leq k^{-1}\right\}}\left(\int_{\left\{w\in B_R\mid\left|w_3\right|\leq k^{-1}\right\}}dw\right)^2\,dy\right)^{\frac{p}{2}}\,dt\right)^{\frac{1}{p}}\leq Ck^{-\frac{3}{2}}\to 0
		\end{align*}
		for $k\to\infty$. Here and in the following, $C$ denotes a positive, finite constant that may depend on $p$, $R$, and $F^i$. Similarly,
		\begin{align*}
		I_2^k\leq C\left(\int_0^{k^{-1}}\left(\int_{\left\{y\in F^i\left(U_i\right)\mid\left|y_3\right|\leq k^{-1}\right\}}dy\right)^{\frac{p}{2}}\,dt\right)^{\frac{1}{p}}\leq Ck^{-\frac{1}{2}-\frac{1}{p}}\to 0
		\end{align*}
		for $k\to\infty$. Next we turn to the derivatives and start with the $t$-derivative. By
		\begin{align*}
		\partial_t\eta_k^{G^i}\left(t,x,v\right)=2k^2t\eta\left(k^2\left(G_3^i\left(x,v\right)^2+G_6^i\left(x,v\right)^2\right)\right)\eta'\left(k^2\left(t^2+G_3^i\left(x,v\right)^2\right)\right)
		\end{align*}
		we have
		\begin{align*}
		\left|\partial_t\eta_k^{G^i}\left(t,x,v\right)\right|\leq Ck^2t
		\end{align*}
		and
		\begin{align*}
		\partial_t\eta_k^{G^i}\left(t,x,v\right)\neq 0\Rightarrow k^2\left(t^2+G_3^i\left(x,v\right)^2\right)\leq 1\Rightarrow\left|t\right|\leq k^{-1}\land\left|F_3^i\left(x\right)\right|\leq k^{-1}.
		\end{align*}
		Hence
		\begin{align*}
		&\left(\int_0^R\left(\int_{U_i}\left(\int_{B_R}\left|\partial_t\eta_k^{G^i}\right|\,dv\right)^2\,dx\right)^{\frac{p}{2}}\,dt\right)^{\frac{1}{p}}\\
		&\leq Ck^2\left(\int_0^{k^{-1}}\left(\int_{\left\{x\in U_i\mid\left|F_3^i\left(x\right)\right|\leq k^{-1}\right\}}\left(\int_{B_R}t\,dv\right)^2\,dx\right)^{\frac{p}{2}}\,dt\right)^{\frac{1}{p}}\\
		&\leq Ck^2\left(\int_0^{k^{-1}}\left(\int_{\left\{y\in F^i\left(U_i\right)\mid\left|y_3\right|\leq k^{-1}\right\}}t^2\,dy\right)^{\frac{p}{2}}\,dt\right)^{\frac{1}{p}}\leq Ck^{\frac{3}{2}}\left(\int_0^{k^{-1}}t^p\,dt\right)^{\frac{1}{p}}=Ck^{\frac{1}{2}-\frac{1}{p}},
		\end{align*}
		which converges to $0$ for $k\to\infty$ by $p<2$. This procedure can be performed for the $x$- and $v$-derivatives accordingly, where one needs that $G^i$ is of class $W^{1,\infty}$ on $U_i\times B_R$, resulting in
		\begin{align*}
		\left(\int_0^R\left(\int_{U_i}\left(\int_{B_R}\left|\partial_{x_j}\eta_k^{G^i}\right|\,dv\right)^2\,dx\right)^{\frac{p}{2}}\,dt\right)^{\frac{1}{p}}&\leq Ck^{\frac{1}{2}-\frac{1}{p}}+Ck^{-\frac{1}{2}},\\
		\left(\int_0^R\left(\int_{U_i}\left(\int_{B_R}\left|\partial_{v_j}\eta_k^{G^i}\right|\,dv\right)^2\,dx\right)^{\frac{p}{2}}\,dt\right)^{\frac{1}{p}}&\leq Ck^{-\frac{1}{2}}
		\end{align*}
		for $j=1,2,3$. Altogether we have shown that
		\begin{align*}
		\lim_{k\to\infty}\left\|\eta_k^{G^i}-1\right\|_{W^{1,p_t2_x1_v}\left(\left]0,R\right[\times U_i\times B_R\right)}=0
		\end{align*}
		for any $i=1,\dots,m$ and thus
		\begin{align}\label{eq:tildepsiklimit}
		\lim_{k\to\infty}\left\|\tilde\psi_k-\psi\right\|_{W^{1,p_t2_x1_v}\left(\I\times\Omega\times\R^3\right)}=0
		\end{align}
		by \eqref{eq:esttildepsik}.
		
		The next step is to show that, for each $k\in\N$, the support of $\tilde\psi_k$ is away from $\gamma_{\T}^0$ and $\left\{0\right\}\times\partial\Omega\times\R^3$. As for $\gamma_{\T}^0$, assume the contrary, i.e., $\dist\left(\supp\tilde\psi_k,\gamma_{\T}^0\right)=0$. Then we find sequences $\left(\tilde t_l,\tilde x_l,\tilde v_l\right)_l\subset\gamma_{\T}^0$ and $\left(t_l,x_l,v_l\right)_l\subset\R\times\R^3\times\R^3$ such that $\tilde\psi_k\left(t_l,x_l,v_l\right)\neq 0$ for all $l\in\N$ and
		\begin{align*}
		\lim_{l\to\infty}\left|\left(\tilde t_l,\tilde x_l,\tilde v_l\right)-\left(t_l,x_l,v_l\right)\right|=0.
		\end{align*}
		By compactness of $\supp\tilde\psi_k\subset\supp\psi$, both sequences are bounded, whence we may assume without loss of generality that both sequences converge to the same limit, say $\left(t,x,v\right)\in\R\times\R^3\times\R^3$. Since $\tilde\gamma^0$ is closed and $\tilde t_l\geq 0$ for $l\in\N$, we have $\left(x,v\right)\in\tilde\gamma^0$ and $t\geq 0$. By $\dist\left(x,U_0\right)>0$ and since $\bigcup_{i=1}^mU_i$ is an open cover of $\partial\Omega$, we may also assume that
		\begin{align}\label{eq:xlinUi}
		x_l\in\bigcup_{i\in I}U_i\setminus\bigcup_{i\in\left\{0,\dots,m\right\}\setminus I}U_i,
		\end{align}
		where $I:=\left\{i\in\left\{1,\dots,m\right\}\mid x\in U_i\right\}$ (for $l$ large, at least). Now take $i\in I$. Since $G^i$ is continuous and since $G_3^i\left(x,v\right)=G_6^i\left(x,v\right)=0$ by $\left(x,v\right)\in\tilde\gamma^0$, we have
		\begin{align*}
		\lim_{l\to\infty}G_3^i\left(x_l,v_l\right)=\lim_{l\to\infty}G_6^i\left(x_l,v_l\right)=0
		\end{align*}
		and hence
		\begin{align*}
		G_3^i\left(x_l,v_l\right)^2+G_6^i\left(x_l,v_l\right)^2\leq\frac{1}{2}
		\end{align*}
		for $l$ large. But then $\eta_k^{G^i}\left(t_l,x_l,v_l\right)=0$ and therefore by \eqref{eq:xlinUi}
		\begin{align*}
		0\neq\tilde\psi_k\left(t_l,x_l,v_l\right)=\sum_{i\in I}\zeta_i\left(x_l\right)\psi\left(t_l,x_l,v_l\right)\eta_k^{G^i}\left(t_l,x_l,v_l\right)=0,
		\end{align*}
		which is a contradiction. As for $\left\{0\right\}\times\partial\Omega\times\R^3$, the proof works completely analogously.
		
		There only remains one problem: The approximating functions are only of class $C\cap W^{1,\infty}$ with compact support and not of class $C^\infty$ as desired (which corresponds to the fact that $\partial\Omega$ is only of class $C^1\cap W^{2,\infty}$ and not necessarily smooth). To this end, take a Friedrich's mollifier $\omega\in C_c^\infty\left(\R^7\right)$, $\supp\omega\subset B_1$, $\int_{\R^7}\omega\,dvdxdt=1$, and denote $\omega_\delta:=\delta^{-7}\omega\left(\frac{\cdot}{\delta}\right)$ for $\delta>0$. By $\tilde\psi_k\in H^1\left(\R^7\right)$, we know that $\omega_\delta\ast\tilde\psi_k$ converges to $\tilde\psi_k$ for $\delta\to 0$ in $H^1\left(\R^7\right)$. Moreover, since $\supp\tilde\psi_k\subset\left]-\T,\T\right[\times\R^3\times\R^3$, $\dist\left(\supp\tilde\psi_k,\gamma_{\T}^0\right),\dist\left(\supp\tilde\psi_k,\left\{0\right\}\times\partial\Omega\times\R^3\right)>0$, these properties also hold for $\omega_\delta\ast\tilde\psi_k$ instead of $\tilde\psi_k$ if $\delta$ is small enough. Choose $0<\delta_k\leq 1$ such small and such that 
		\begin{align*}
		\left\|\omega_{\delta_k}\ast\tilde\psi_k-\tilde\psi_k\right\|_{H^1\left(\R^7\right)}\leq\frac{1}{k}.
		\end{align*}
		By $p<2$, this implies
		\begin{align*}
		\left\|\omega_{\delta_k}\ast\tilde\psi_k-\tilde\psi_k\right\|_{W^{1,p_t2_x1_v}\left(\left]0,R\right[\times\Omega\times B_{R+1}\right)}\leq\frac{C}{k}
		\end{align*}
		where $C>0$ depends on $p$, $\Omega$, and $R$. After combining this with \eqref{eq:tildepsiklimit}, noting that $\tilde\psi_k$ and $\psi$ vanish for $t\geq R$ or $\left|v\right|\geq R$ and $\omega_{\delta_k}\ast\tilde\psi_k$ for $t\geq R+1$ (which implies the existence of $r$ as asserted) or $\left|v\right|\geq R+1$, and setting
		\begin{align*}
		\psi_k:=\omega_{\delta_k}\ast\tilde\psi_k\Big|_{\I\times\overline\Omega\times\R^3}\in\Psi_{\T},
		\end{align*}
		we are finally done.
	\end{proof}
	With this lemma, we can extend \eqref{eq:Vlasovweak} to test functions $\psi$ whose supports do not necessarily have to be away from $\gamma_{\T}^0$ and $\left\{0\right\}\times\partial\Omega\times\R^3$ under a condition on the integrability of the solution.
	\begin{lemma}\label{lma:weakVlpsi1}
		For fixed $\alpha\in\left\{1,\dots,N\right\}$ let $\f\in L_\lt^\infty\left(\I\times\Omega\times\R^3\right)$, $\f_+\in L_\lt^\infty\left(\gamma_{\T}^+\right)$, $\left(E,H\right)\in L_\lt^q\left(\I;L^2\left(\R^3;\R^6\right)\right)$ for some $q>2$, $\mathcal K_\alpha\colon L_\lt^\infty\left(\gamma_{\T}^+\right)\to L_\lt^\infty\left(\gamma_{\T}^-\right)$, $\g\in L_\lt^\infty\left(\gamma_{\T}^-\right)$, $\mathring\f\in L^\infty\left(\Omega\times\R^3\right)$ such that Definition \ref{def:WeakSolWholeSys} (ii) is satisfied. Furthermore, let $\psi\in C^\infty\left(\I\times\R^3\times\R^3\right)$ with $\supp\psi\subset\left[0,\T\right[\times\R^3\times\R^3$ compact. Then \eqref{eq:Vlasovweak} still holds for $\psi$.
	\end{lemma}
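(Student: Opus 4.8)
The plan is to approximate $\psi$ by the sequence $\left(\psi_k\right)\subset\Psi_{\T}$ furnished by Lemma \ref{lma:approxpsi}, to invoke the weak Vlasov identity \eqref{eq:Vlasovweak} for each $\psi_k$ (which is legitimate because $\psi_k\in\Psi_{\T}$ and Definition \ref{def:WeakSolWholeSys} (ii) is assumed), and then to pass to the limit $k\to\infty$ term by term. The crucial point is to tune the exponent in Lemma \ref{lma:approxpsi} to the integrability $q>2$ of the fields: I would fix $p:=q/(q-1)$, the Hölder conjugate of $q$, which satisfies $1\leq p<2$ precisely because $q>2$, and obtain $\psi_k\to\psi$ in $\left\|\cdot\right\|_{W^{1,p_t2_x1_v}\left(\I\times\Omega\times\R^3\right)}$ together with a common $r>0$ beyond which (in $t$) all $\psi_k$ and $\psi$ vanish. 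Since $\Omega$ and $\left[0,r\right]$ are bounded, a Cauchy--Schwarz estimate in $x$ followed by Hölder in $t$ shows that this mixed norm dominates the plain integral $\int_0^r\int_\Omega\int_{\R^3}\left(\left|\psi_k-\psi\right|+\left|\partial_t\left(\psi_k-\psi\right)\right|+\left|\partial_x\left(\psi_k-\psi\right)\right|\right)\,dvdxdt$, so the latter tends to $0$.

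This bound disposes of the transport terms and the initial term. For $\int\partial_t\psi_k\,\f$ and $\int\v\cdot\partial_x\psi_k\,\f$ I would bound $\f$ by its $L^\infty$-norm on $\left[0,r\right]\times\Omega\times\R^3$ (using $\left|\v\right|\leq 1$) and apply the $L^1$-convergence just noted. For the initial term, the fundamental theorem of calculus together with $\left(\psi_k-\psi\right)\left(r\right)=0$ gives $\left|\left(\psi_k-\psi\right)\left(0,x,v\right)\right|\leq\int_0^r\left|\partial_t\left(\psi_k-\psi\right)\right|\,dt$, whence $\int_\Omega\int_{\R^3}\left|\left(\psi_k-\psi\right)\left(0\right)\right|\,dvdx$ is controlled by the same space-time-velocity integral and tends to $0$; pairing with $\mathring\f\in L^\infty$ finishes it.

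The main obstacle is the Lorentz-force term $\int_0^r\int_\Omega\int_{\R^3}\e\left(E+\v\times H\right)\cdot\partial_v\psi_k\,\f\,dvdxdt$, since it pairs the low-regularity fields against $\partial_v\psi_k$. Using $\left|\f\right|\leq M$ and $\left|\v\right|\leq 1$, I would estimate its difference for $\psi_k$ and $\psi$ by $\left|\e\right|M\int_0^r\int_\Omega\left(\left|E\right|+\left|H\right|\right)g_k\,dxdt$, where $g_k\left(t,x\right):=\int_{\R^3}\left|\partial_v\left(\psi_k-\psi\right)\right|\,dv$ vanishes for $t\geq r$. Cauchy--Schwarz in $x$ and then Hölder in $t$ bound this by $\left|\e\right|M\left(\left\|E\right\|_{L^q\left(0,r;L^2\left(\Omega\right)\right)}+\left\|H\right\|_{L^q\left(0,r;L^2\left(\Omega\right)\right)}\right)\left\|g_k\right\|_{L^p\left(0,r;L^2\left(\Omega\right)\right)}$. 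The last factor is controlled by the $\partial_v$-part of the mixed norm and tends to $0$ exactly because $p$ is the conjugate of $q$ and $1\leq p<2$ is admissible in Lemma \ref{lma:approxpsi}; this is where the hypotheses $q>2$ and the restriction $p<2$ there are used in tandem. The same estimate applied to $\psi$ alone shows that the force integral in \eqref{eq:Vlasovweak} exists for $\psi$.

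It remains to transfer control from the interior to $\partial\Omega$ for the two boundary terms. Here I would use the trace inequality $\int_{\partial\Omega}\left|w\right|\,dS\leq C_\Omega\int_\Omega\left(\left|w\right|+\left|\partial_xw\right|\right)\,dx$, valid for the $C^1$ domain $\Omega$, applied for almost every fixed $\left(t,v\right)$ to $w=\left(\psi_k-\psi\right)\left(t,\cdot,v\right)$. Bounding $\left|\v\cdot n\right|\leq 1$ in $d\gamm$ and integrating in $\left(t,v\right)$ then yields $\int_{\gamma_r^\pm}\left|\psi_k-\psi\right|\,d\gamm\leq C_\Omega\int_0^r\int_\Omega\int_{\R^3}\left(\left|\psi_k-\psi\right|+\left|\partial_x\left(\psi_k-\psi\right)\right|\right)\,dvdxdt\to 0$. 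Pairing this with the $L^\infty$-bounds on $\f_+$, on $\K\f_+$ (which lies in $L_\lt^\infty\left(\gamma_{\T}^-\right)$ by the assumption on $\K$), and on $\g$ (recalling that the $d\gamm$- and Lebesgue-$L^\infty$-norms on $\gamma_{\T}^\pm$ coincide because $\left|\v\cdot n\right|>0$ there) kills both boundary terms. Collecting all limits recovers \eqref{eq:Vlasovweak} for $\psi$, which is the assertion.
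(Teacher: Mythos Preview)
Your proposal is correct and follows essentially the same approach as the paper: choose $p$ conjugate to $q$ so that $1\leq p<2$, apply Lemma \ref{lma:approxpsi}, and pass to the limit term by term using the $L^\infty$-bounds on $\f$, $\f_+$, $\K\f_+$, $\g$, $\mathring\f$, the Cauchy--Schwarz--H\"older splitting for the Lorentz term, the trace inequality for the boundary integrals, and the fundamental theorem of calculus at $t=r$ for the initial term. The paper's proof differs only in presentation, not in substance.
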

	\begin{proof}
		Let $1\leq p<2$ satisfy $\frac{1}{p}+\frac{1}{q}=1$. In accordance with Lemma \ref{lma:approxpsi}, let $\left(\psi_k\right)\subset\Psi_{\T}$ approximate $\psi$ with respect to the $W^{1,p_t2_x1_v}$-norm, $0<r<\infty$ such that $\psi$ and all $\psi_k$ vanish for $t\geq r$, and define $R:=\min\left\{r,\T\right\}$. By assumption, \eqref{eq:Vlasovweak} holds for $\psi_k$ for all $k\in\N$. Hence there remains to show that we can pass to the limit $k\to\infty$ in \eqref{eq:Vlasovweak}. First, we have
		\begin{align*}
		&\left|\int_0^{\T}\int_\Omega\int_{\R^3}\left(\partial_t\psi_k-\partial_t\psi\right)\f\,dvdxdt\right|\leq\left\|\psi_k-\psi\right\|_{W^{1,1}\left(\left[0,R\right]\times\Omega\times\R^3\right)}\left\|\f\right\|_{L^\infty\left(\left[0,R\right]\times\Omega\times\R^3\right)}\\
		&\leq C\left(R,\Omega,p,\f\right)\left\|\psi_k-\psi\right\|_{W^{1,p_t2_x1_v}\left(\left[0,R\right]\times\Omega\times\R^3\right)}\to 0
		\end{align*}
		for $k\to\infty$, since $R$ is finite and $\Omega$ is bounded. Similarly,
		\begin{align*}
		\lim_{k\to\infty}\left|\int_0^{\T}\int_\Omega\int_{\R^3}\left(\v\cdot\partial_x\psi_k-\v\cdot\partial_x\psi\right)\f\,dvdxdt\right|=0
		\end{align*}
		by $\left|\v\right|\leq 1$. Next,
		\begin{align*}
		&\left|\int_0^{\T}\int_\Omega\int_{\R^3}\left(E+\v\times H\right)\cdot\left(\partial_v\psi_k-\partial_v\psi\right)\f\,dvdxdt\right|\\
		&\leq\left\|\f\right\|_{L^\infty\left(\left[0,R\right]\times\Omega\times\R^3\right)}\int_0^R\int_\Omega\left(\left|E\right|+\left|H\right|\right)\int_{\R^3}\left|\partial_v\psi_k-\partial_v\psi\right|\,dvdxdt\\
		&\leq C\left(\f\right)\int_0^R\left(\int_\Omega\left(\left|E\right|^2+\left|H\right|^2\right)\,dx\right)^{\frac{1}{2}}\left(\left(\int_{\R^3}\left|\partial_v\psi_k-\partial_v\psi\right|\,dv\right)^2\,dx\right)^{\frac{1}{2}}\,dt\\
		&\leq C\left(\f\right)\left\|\left(E,H\right)\right\|_{L^q\left(\left[0,R\right];L^2\left(\R^3;\R^6\right)\right)}\left(\int_0^R\left(\left(\int_{\R^3}\left|\partial_v\psi_k-\partial_v\psi\right|\,dv\right)^2\,dx\right)^{\frac{p}{2}}\,dt\right)^{\frac{1}{p}}\to 0
		\end{align*}
		for $k\to\infty$. Note that this was the crucial estimate, for which we essentially needed the convergence of $\psi_k$ to $\psi$ in the $W^{1,p_t2_x1_v}$-norm. As for the boundary terms on $\gamma_{\T}^\pm$, we first have
		\begin{align*}
		\int_{\partial\Omega}\left|\psi_k-\psi\right|\,dS_x\leq C\left(\Omega\right)\int_\Omega\left(\left|\psi_k-\psi\right|+\left|\partial_x\psi_k-\partial_x\psi\right|\right)\,dx
		\end{align*}
		for any $t\in\I$, $v\in\R^3$, since $\Omega$ is bounded and $\partial\Omega$ of class $C^1$. Therefore by $\left|n\left(x\right)\cdot\v\right|\leq 1$,
		\begin{align*}
		\left|\int_{\gamma_{\T}^+}\left(\psi_k-\psi\right)\f_+\,d\gamm\right|\leq C\left(\Omega\right)\left\|\psi_k-\psi\right\|_{W^{1,1}\left(\left[0,R\right]\times\Omega\times\R^3\right)}\left\|\f_+\right\|_{L^\infty\left(\gamma_R^+\right)}\to 0
		\end{align*}
		for $k\to\infty$. Similarly,
		\begin{align*}
		&\left|\int_{\gamma_{\T}^-}\left(\psi_k-\psi\right)\left(\mathcal K_\alpha\f_++\g\right)\,d\gamm\right|\\
		&\leq C\left(\Omega\right)\left\|\psi_k-\psi\right\|_{W^{1,1}\left(\left[0,R\right]\times\Omega\times\R^3\right)}\left(\left\|\mathcal K_\alpha\f_+\right\|_{L^\infty\left(\gamma_R^-\right)}+\left\|\g\right\|_{L^\infty\left(\gamma_R^-\right)}\right)\to 0
		\end{align*}
		for $k\to\infty$. Lastly, by
		\begin{align*}
		0=\psi_k\left(R,x,v\right)-\psi\left(R,x,v\right)=\psi_k\left(0,x,v\right)-\psi\left(0,x,v\right)+\int_0^R\left(\partial_t\psi_k\left(t,x,v\right)-\partial_t\psi\left(t,x,v\right)\right)\,dt
		\end{align*}
		for any $x\in\Omega$, $v\in\R^3$, there holds
		\begin{align*}
		\left|\int_\Omega\int_{\R^3}\left(\psi_k\left(0\right)-\psi\left(0\right)\right)\mathring\f\,dvdxdt\right|\leq\left\|\psi_k-\psi\right\|_{W^{1,1}\left(\left[0,R\right]\times\Omega\times\R^3\right)}\left\|\mathring\f\right\|_{L^\infty\left(\Omega\times\R^3\right)}\to 0
		\end{align*}
		for $k\to\infty$, and the proof is complete.
	\end{proof}
	The next step is to show that \eqref{eq:Vlasovweak} still holds if $\psi$ does not depend on $v$. This is done via a cut-off procedure in $v$. Note that in the following lemma it is essential that $\f$ is of class $L^1\cap L_\kin^2$ locally in time.
	\begin{lemma}\label{lma:weakVlpsi2}
		Let $\alpha\in\left\{1,\dots,N\right\}$, $\f\in\left(L_\lt^1\cap L_{\kin,\lt}^2\cap L_\lt^\infty\right)\left(\I\times\Omega\times\R^3\right)$, $\f_+\in L_\lt^\infty\left(\gamma_{\T}^+\right)$, $\left(E,H\right)\in L_\lt^q\left(\I;L^2\left(\R^3;\R^6\right)\right)$ for some $q>2$, $\mathcal K_\alpha\colon L_\lt^\infty\left(\gamma_{\T}^+\right)\to L_\lt^\infty\left(\gamma_{\T}^-\right)$, $\g\in L_\lt^\infty\left(\gamma_{\T}^-\right)$, and $\mathring\f\in\left(L^1\cap L^\infty\right)\left(\Omega\times\R^3\right)$ such that Definition \ref{def:WeakSolWholeSys} (ii) is satisfied. Furthermore, let $\psi\in C^\infty\left(\I\times\R^3\right)$ with $\supp\psi\subset\left[0,\T\right[\times\R^3$ compact.
		\begin{enumerate}[label=(\roman*)]
			\item If $\supp\psi\subset\left[0,\T\right[\times\left(\R^3\setminus\partial\Omega\right)$, we have
			\begin{align}\label{eq:Vlasovweakpsinovdepi}
			0&=\int_0^{\T}\int_\Omega\left(\partial_t\psi\int_{\R^3}\f\,dv+\partial_x\psi\cdot\int_{\R^3}\v\f\,dv\right)dxdt+\int_\Omega\psi\left(0\right)\int_{\R^3}\mathring\f\,dvdx.
			\end{align}
			\item\label{lma:weakVlpsi2ii} If, additionally to the given assumptions, $\f_+\in L_\lt^1\left(\gamma_{\T}^+,d\gamm\right)$, $\g\in L_\lt^1\left(\gamma_{\T}^-,d\gamm\right)$, and $\mathcal K_\alpha\colon\left(L_\lt^1\cap L_\lt^\infty\right)\left(\gamma_{\T}^+,d\gamm\right)\to\left(L_\lt^1\cap L_\lt^\infty\right)\left(\gamma_{\T}^-,d\gamm\right)$, but $\psi$ may not vanish on $\partial\Omega$, then \eqref{eq:Vlasovweak} is still satisfied for $\psi$, i.e.,
			\begin{align}\label{eq:Vlasovweakpsinovdepii}
			0&=-\int_0^{\T}\int_\Omega\left(\partial_t\psi\int_{\R^3}\f\,dv+\partial_x\psi\cdot\int_{\R^3}\v\f\,dv\right)dxdt+\int_{\gamma_{\T}^+}\f_+\psi\,d\gamm\nonumber\\
			&\phantom{=\;}-\int_{\gamma_{\T}^-}\left(\K\f_++\g\right)\psi\,d\gamm-\int_\Omega\psi\left(0\right)\int_{\R^3}\mathring\f\,dvdx.
			\end{align}
		\end{enumerate}
	\end{lemma}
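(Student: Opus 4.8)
The plan is to test the weak Vlasov equation \eqref{eq:Vlasovweak} with the $v$-independent $\psi$ after introducing a cut-off in momentum, and then to let the cut-off parameter tend to infinity. Fix $\zeta\in C_c^\infty\left(\R^3\right)$ with $0\le\zeta\le1$, $\zeta=1$ on $B_1$, and $\supp\zeta\subset B_2$, and set for $R>0$
\[
\psi_R\left(t,x,v\right):=\psi\left(t,x\right)\zeta\!\left(\tfrac{v}{R}\right).
\]
Then $\psi_R\in C^\infty\left(\I\times\R^3\times\R^3\right)$ has compact support in $\left[0,\T\right[\times\R^3\times\R^3$, so Lemma \ref{lma:weakVlpsi1} applies and \eqref{eq:Vlasovweak} holds for $\psi_R$ for every $R$. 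In case (i), since $\supp\psi$ avoids $\partial\Omega$, the function $\psi_R$ vanishes on $\gamma_{\T}^\pm$, so both boundary integrals are identically zero; this is exactly why the additional $L^1\left(d\gamm\right)$-assumptions of (ii) are not needed there. It then remains to pass to the limit $R\to\infty$ in each term, where all time integrals effectively run over a finite interval $\left[0,T_0\right]$ on which $\psi$ is supported.

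For the transport and initial terms this is routine dominated convergence. Since $\partial_t\psi_R=\zeta\left(v/R\right)\partial_t\psi$ and $\partial_x\psi_R=\zeta\left(v/R\right)\partial_x\psi$ with $\zeta\left(v/R\right)\to1$ pointwise and $\left|\zeta\right|\le1$, the bounds $\left|\v\right|\le1$ together with $\f\in L_\lt^1$ and $\mathring\f\in L^1$ let me pass to the limit and pull the $v$-independent factor $\psi$ out of the $v$-integral, producing the moments $\int_{\R^3}\f\,dv$ and $\int_{\R^3}\v\f\,dv$. In case (ii), the two boundary integrals converge by dominated convergence as well, now using the additional hypotheses $\f_+,\K\f_+,\g\in L_\lt^1\left(\cdot,d\gamm\right)$ and again $\zeta\left(v/R\right)\to1$. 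Collecting the limits and multiplying by $-1$ yields \eqref{eq:Vlasovweakpsinovdepi} in case (i) and \eqref{eq:Vlasovweakpsinovdepii} in case (ii).

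The only delicate point, and the reason $\f\in L_{\kin,\lt}^2$ is required, is the Lorentz-force term. Because $\partial_v\psi_R=\tfrac{1}{R}\psi\left(\nabla\zeta\right)\left(v/R\right)$ is supported in the shell $\left\{R\le\left|v\right|\le 2R\right\}$ and carries a prefactor $1/R$, I expect to show that
\[
\int_0^{\T}\int_\Omega\int_{\R^3}\e\left(E+\v\times H\right)\cdot\partial_v\psi_R\,\f\,dvdxdt\to 0.
\]
The estimate I have in mind bounds the modulus of the inner $v$-integral pointwise in $\left(t,x\right)$ by Cauchy--Schwarz with the weight $\vo$: on the shell one has $\vo\ge\left|v\right|\ge R$, so $\int_{R\le\left|v\right|\le 2R}\vo^{-1}\,dv\le CR^2$, and hence
\[
\frac{1}{R}\int_{R\le\left|v\right|\le 2R}\left|\f\right|\,dv\le C\left(\int_{R\le\left|v\right|\le 2R}\vo\left|\f\right|^2\,dv\right)^{\frac{1}{2}}.
\]
Pairing this against $E+\v\times H$ using $\left|\v\right|\le1$ and Cauchy--Schwarz in $x$, the force term is dominated by
\[
C\int_0^{T_0}\left\|\left(E,H\right)\left(t\right)\right\|_{L^2\left(\R^3;\R^6\right)}\left(\int_\Omega\int_{R\le\left|v\right|\le 2R}\vo\left|\f\left(t\right)\right|^2\,dvdx\right)^{\frac{1}{2}}dt.
\]
Here the shell-integral tends to $0$ for almost every $t$ (it is the tail of the $L_\kin^2$-integrable $\f\left(t\right)$) and is bounded by $\left\|\f\left(t\right)\right\|_{L_\kin^2}^2$; since $\left(E,H\right)\in L_\lt^q$ with $q>2$ and $\left\|\f\left(\cdot\right)\right\|_{L_\kin^2}\in L_\lt^2\subset L_\lt^{q'}$ (with $q'<2$ the conjugate exponent, on the finite interval $\left[0,T_0\right]$), the integrand is dominated by the fixed $L^1$-function $t\mapsto\left\|\left(E,H\right)\left(t\right)\right\|_{L^2}\left\|\f\left(t\right)\right\|_{L_\kin^2}$. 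Dominated convergence then gives the claim. This is precisely the step where pairing with the merely $L^2$-integrable fields forces the use of $L_\kin^2$-control of $\f$ in momentum, rather than the weaker $L_\kin^1$, so it constitutes the main obstacle of the proof; the remaining passages are the routine limits described above.
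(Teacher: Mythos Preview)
Your proposal is correct and follows essentially the same approach as the paper: a momentum cut-off $\psi_R=\psi\,\zeta(v/R)$, application of Lemma~\ref{lma:weakVlpsi1}, and passage to the limit via dominated convergence in the transport, initial, and boundary terms, with the Lorentz-force term handled by the shell-support of $\partial_v\psi_R$ together with a weighted Cauchy--Schwarz using the $L_{\kin}^2$-bound on $\f$. The only cosmetic difference is that the paper applies Cauchy--Schwarz jointly in $(t,x)$ (so that only $(E,H)\in L^2_{t,x}$ is used at this stage and the remaining factor $\bigl(\int_0^{T_0}\!\int_\Omega\!\int_{R\le|v|\le 2R}\vo|\f|^2\bigr)^{1/2}$ tends to $0$ directly), whereas you split off the $x$-integral first and then run dominated convergence in $t$ via the H\"older pairing $L^q\times L^{q'}$; both routes are equivalent here.
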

	\begin{proof}
		The proof works similarly to the proof of \cite[Lemma 4.2.]{Guo93}. First, consider a test function $\psi$ that may have support on $\partial\Omega$. Take $\eta\in C_c^\infty\left(\R^3\right)$, $0\leq\eta\leq 1$, $\eta=1$ on $B_1$, $\supp\eta\subset B_2$, and let $\eta_m\left(v\right):=\eta\left(\frac{v}{m}\right)$ for $m\in\N$, $v\in\R^3$. Then $\psi_m:=\psi\eta_m\in C^\infty\left(\I\times\R^3\times\R^3\right)$ with $\supp\psi\subset\left[0,\T\right[\times\R^3\times\R^3$ compact. Therefore, \eqref{eq:Vlasovweak} holds for $\psi_m$ by Lemma \ref{lma:weakVlpsi1}. Now we can show that we may pass to the limit $m\to\infty$ in all terms of \eqref{eq:Vlasovweak} but the terms including integrals over $\gamma_{\T}^\pm$. Let $R>0$ such that $\psi$ vanishes for $t\geq R$. First,
		\begin{align*}
		&\left|\int_0^{\T}\int_\Omega\int_{\R^3}\f\partial_t\psi_m\,dvdxdt-\int_0^{\T}\int_\Omega\partial_t\psi\int_{\R^3}\f\,dvdxdt\right|\\
		&\leq\left\|\partial_t\psi\right\|_{L^\infty\left(\I\times\R^3\right)}\int_0^R\int_\Omega\int_{\R^3}\left|\eta_m-1\right|\left|\f\right|\,dvdxdt\overset{m\to\infty}\to 0
		\end{align*}
		by dominated convergence since $\eta_m\to 1$ pointwise for $m\to\infty$ and $\left|\eta_m-1\right|\left|\f\right|\leq\left|\f\right|\in L^1\left(\left[0,R\right]\times\Omega\times\R^3\right)$. Similarly by $\left|\v\right|\leq 1$,
		\begin{align*}
		\lim_{m\to\infty}\int_0^{\T}\int_\Omega\int_{\R^3}\partial_x\psi_m\cdot\v\f\,dvdxdt=\int_0^{\T}\int_\Omega\partial_x\psi\cdot\int_{\R^3}\v\f\,dvdxdt.
		\end{align*}
		By
		\begin{align*}
		\partial_v\psi_m\left(t,x,v\right)=\frac{1}{m}\psi\left(t,x\right)\eta'\left(\frac{v}{m}\right)
		\end{align*}
		and
		\begin{align*}
		\partial_v\psi_m\left(t,x,v\right)\neq 0\Rightarrow m\leq\left|v\right|\leq 2m
		\end{align*}
		for $\left(t,x,v\right)\in\I\times\Omega\times\R^3$, we get the following estimate, which is again the crucial one:
		\begin{align*}
		&\left|\int_0^{\T}\int_\Omega\int_{\R^3}\left(E+\v\times H\right)\f\cdot\partial_v\psi_m\,dvdxdt\right|\\
		&\leq\left\|\psi\right\|_{L^\infty\left(\I\times\Omega\right)}\left\|\eta'\right\|_{L^\infty\left(B_2\right)}\int_0^R\int_\Omega\left(\left|E\right|+\left|H\right|\right)\int_{\left\{v\in\R^3\mid m\leq\left|v\right|\leq 2m\right\}}\frac{1}{m}\left|\f\right|\,dvdxdt\\
		&\leq C\left(\psi,\eta\right)\left\|\left(E,H\right)\right\|_{L^2\left(\left[0,R\right]\times\Omega;\R^6\right)}\left(\int_0^R\int_\Omega\left(\int_{\left\{v\in\R^3\mid m\leq\left|v\right|\leq 2m\right\}}\frac{1}{m}\left|\f\right|\,dv\right)^2\,dxdt\right)^{\frac{1}{2}}\\
		&\leq C\left(\psi,\eta,E,H\right)\left(\int_0^R\int_\Omega\int_{\left\{v\in\R^3\mid m\leq\left|v\right|\leq 2m\right\}}\frac{\frac{4\pi}{3}\left(8m^3-m^3\right)}{m^2}\left|\f\right|^2\,dv\,dxdt\right)^{\frac{1}{2}}\\
		&\leq C\left(\psi,\eta,E,H\right)\left(\int_0^R\int_\Omega\int_{\left\{v\in\R^3\mid m\leq\left|v\right|\leq 2m\right\}}\vo\left|\f\right|^2\,dv\,dxdt\right)^{\frac{1}{2}}\to 0
		\end{align*}
		for $m\to\infty$, since the last integral converges to zero by $\f\in L_\kin^2\left(\left[0,R\right]\times\Omega\times\R^3\right)$. As for the term including the initial data, we see that
		\begin{align*}
		\left|\int_\Omega\int_{\R^3}\psi_m\left(0\right)\mathring\f\,dvdx-\int_\Omega\psi\left(0\right)\int_{\R^3}\mathring\f\,dvdx\right|\leq\left\|\psi\left(0\right)\right\|_{L^\infty\left(\Omega\right)}\int_\Omega\int_{\R^3}\left|\eta_m-1\right|\left|\mathring\f\right|\,dvdx\to 0
		\end{align*}
		for $m\to\infty$ as well by dominated convergence and $\mathring\f\in L^1\left(\Omega\times\R^3\right)$.
		
		Now if $\supp\psi\subset\left[0,\T\right[\times\left(\R^3\setminus\partial\Omega\right)$, then $\psi_m$ vanishes on $\partial\Omega$, too, and there vanish the integrals over $\gamma_{\T}^\pm$ for $\psi_m$ appearing in \eqref{eq:Vlasovweak}. Hence, \eqref{eq:Vlasovweakpsinovdepi} is satisfied.
		
		If the additional assumptions of \ref{lma:weakVlpsi2ii} hold, but $\psi$ may not vanish on $\partial\Omega$, we consider the integrals over $\gamma_{\T}^\pm$:
		\begin{align*}
		\left|\int_{\gamma_{\T}^+}\f_+\psi_m\,d\gamm-\int_{\gamma_{\T}^+}\f_+\psi\,d\gamm\right|\leq\left\|\psi\right\|_{L^\infty\left(\I\times\partial\Omega\right)}\int_{\gamma_R^+}\left|\eta_m-1\right|\left|\f_+\right|\,d\gamm\overset{m\to\infty}\to 0
		\end{align*}
		and similarly
		\begin{align*}
		&\left|\int_{\gamma_{\T}^-}\left(\mathcal K_\alpha\f_++\g\right)\psi_m\,d\gamm-\int_{\gamma_{\T}^-}\left(\mathcal K_\alpha\f_++\g\right)\psi\,d\gamm\right|\\
		&\leq\left\|\psi\right\|_{L^\infty\left(\I\times\partial\Omega\right)}\int_{\gamma_R^-}\left|\eta_m-1\right|\left(\left|\mathcal K_\alpha\f_+\right|+\left|\g\right|\right)\,d\gamm\overset{m\to\infty}\to 0
		\end{align*}
		by dominated convergence and $\f_+\in L^1\left(\gamma_R^+,d\gamm\right)$, $\mathcal K_\alpha\f_+,\g\in L^1\left(\gamma_R^-,d\gamm\right)$. Therefore we obtain \eqref{eq:Vlasovweakpsinovdepii}.
	\end{proof}
	In the following, we denote
	\begin{align*}
	\rho^\inte:=\suma\e\int_{\R^3}\f\,dv,\quad j^\inte:=\suma\e\int_{\R^3}\v\f\,dv
	\end{align*}
	and extend these functions by zero for $x\notin\Omega$.
	
	Equations \eqref{eq:Vlasovweakpsinovdepi} and \eqref{eq:Vlasovweakpsinovdepii} reflect the principle of local conservation of the internal charge and imply a global charge balance after an integration:
	\begin{corollary}\label{cor:conservcharge}
		Let the assumptions of Lemma \ref{lma:weakVlpsi2} hold for all $\alpha\in\left\{1,\dots,N\right\}$.
		\begin{enumerate}[label=(\roman*)]
			\item\label{cor:conservchargei} We have 
			\begin{align*}
			\partial_t\rho^\inte+\div_xj^\inte=0
			\end{align*}
			on $\left]0,\T\right[\times\Omega$ in the sense of distributions.
		\end{enumerate}
		If moreover the additional assumptions of Lemma \ref{lma:weakVlpsi2} (ii) are satisfied for all $\alpha\in\left\{1,\dots,N\right\}$, then:
		\begin{enumerate}[label=(\roman*),resume]
			\item\label{cor:conservchargeii} There holds
			\begin{align}\label{eq:locconservintcharge}
			\partial_t\rho^\inte+T_{\partial\Omega}+\div_xj^\inte=0
			\end{align}
			on $\left]0,\T\right[\times\R^3$ in the sense of distributions. Here, the distribution $T_{\partial\Omega}$ describes the boundary processes via
			\begin{align*}
			T_{\partial\Omega}\psi=\suma\e\left(\int_{\gamma_{\T}^+}\f_+\psi\,d\gamm-\int_{\gamma_{\T}^-}\left(\K\f_++\g\right)\psi\,d\gamm\right).
			\end{align*}
			\item\label{cor:conservchargeiii} For almost all $t\in\I$ we have
			\begin{align*}
			\int_\Omega\rho^\inte\left(t,x\right)\,dx=\int_\Omega\mathring\rho^\inte\,dx-\suma\e\left(\int_{\gamma_t^+}\f_+\,d\gamm-\int_{\gamma_t^-}\left(\K\f_++\g\right)\,d\gamm\right).
			\end{align*}
		\end{enumerate}
	\end{corollary}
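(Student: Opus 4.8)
The plan is to derive all three statements directly from the weak Vlasov identities of Lemma \ref{lma:weakVlpsi2}, which already contain the analytic substance; the corollary is then a matter of choosing suitable test functions, multiplying by $\e$, and summing over $\alpha$. For \ref{cor:conservchargei} I would take any $\psi\in C_c^\infty\left(\left]0,\T\right[\times\Omega\right)$, apply \eqref{eq:Vlasovweakpsinovdepi} for each $\alpha$, multiply the $\alpha$-th identity by $\e$, and sum over $\alpha=1,\dots,N$. Since $\supp\psi$ stays away from both $\partial\Omega$ and $\left\{t=0\right\}$, the initial term $\int_\Omega\psi\left(0\right)\int_{\R^3}\mathring\f\,dv\,dx$ drops out, and recognizing $\rho^\inte=\suma\e\int_{\R^3}\f\,dv$ and $j^\inte=\suma\e\int_{\R^3}\v\f\,dv$ leaves exactly
\begin{align*}
0=\int_0^{\T}\int_\Omega\left(\rho^\inte\partial_t\psi+j^\inte\cdot\partial_x\psi\right)\,dxdt,
\end{align*}
which is the asserted distributional identity on $\left]0,\T\right[\times\Omega$.

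For \ref{cor:conservchargeii} I proceed identically but invoke part \ref{lma:weakVlpsi2ii} of Lemma \ref{lma:weakVlpsi2}, whose additional integrability hypotheses permit test functions $\psi$ that need not vanish on $\partial\Omega$. Multiplying \eqref{eq:Vlasovweakpsinovdepii} by $\e$, summing over $\alpha$, and restricting to $\psi\in C_c^\infty\left(\left]0,\T\right[\times\R^3\right)$ (so the initial term again vanishes), I obtain, after using that $\rho^\inte,j^\inte$ are extended by zero outside $\Omega$ and collecting the boundary integrals into $T_{\partial\Omega}\psi$,
\begin{align*}
0=-\int_0^{\T}\int_{\R^3}\left(\rho^\inte\partial_t\psi+j^\inte\cdot\partial_x\psi\right)\,dxdt+T_{\partial\Omega}\psi,
\end{align*}
which is precisely \eqref{eq:locconservintcharge}. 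One should also record that $T_{\partial\Omega}$ is genuinely a distribution: the hypotheses $\f_+\in L_\lt^1\left(\gamma_{\T}^+,d\gamm\right)$ and $\K\f_+,\g\in L_\lt^1\left(\gamma_{\T}^-,d\gamm\right)$ bound $\left|T_{\partial\Omega}\psi\right|$ by $\left\|\psi\right\|_{L^\infty}$ times the relevant $L^1\left(d\gamm\right)$-norms on the time interval containing $\supp\psi$, so $T_{\partial\Omega}$ is in fact a measure; and since $d\gamm$ only sees the values of $\psi$ on $\partial\Omega$, we get $\supp T_{\partial\Omega}\subset\left]0,\T\right[\times\partial\Omega$.

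For the integrated balance \ref{cor:conservchargeiii} the idea is to feed \emph{time-only} test functions into the summed form of \eqref{eq:Vlasovweakpsinovdepii}. Fixing $t_0\in\I$ and choosing $\psi\left(t,x\right)=\phi_\iota\left(t\right)$ with $\phi_\iota\in C^\infty$, $\phi_\iota\equiv 1$ on $\left[0,t_0\right]$, supported in $\left[0,t_0+\iota\right[$ and decreasing on $\left[t_0,t_0+\iota\right]$, the $\partial_x\psi$-term disappears and $\phi_\iota\left(0\right)=1$. As $\iota\to 0$, the boundary integrals $\int_{\gamma_{\T}^+}\f_+\phi_\iota\,d\gamm$ and $\int_{\gamma_{\T}^-}\left(\K\f_++\g\right)\phi_\iota\,d\gamm$ converge by dominated convergence to the corresponding integrals over $\gamma_{t_0}^\pm$, while $-\int_0^{\T}\left(\int_\Omega\rho^\inte\,dx\right)\partial_t\phi_\iota\,dt$ converges to $\int_\Omega\rho^\inte\left(t_0,\cdot\right)\,dx$ at every Lebesgue point $t_0$ of the map $t\mapsto\int_\Omega\rho^\inte\left(t,x\right)\,dx$ (which lies in $L_\lt^1\left(\I\right)$ by $\f\in L_\lt^1$ and Fubini), since $\partial_t\phi_\iota$ is an approximate negative unit mass concentrating at $t_0$. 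Rearranging the resulting identity gives the claim for almost every $t_0\in\I$.

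The genuinely new content is contained in Lemmas \ref{lma:approxpsi}--\ref{lma:weakVlpsi2}, so the only real care needed here is in \ref{cor:conservchargeiii}: one must justify the limit of the time-regularization simultaneously in the bulk term (via Lebesgue points) and in the boundary terms (via the $L^1\left(d\gamm\right)$-integrability). This is where I expect the main, though still routine, bookkeeping to lie.
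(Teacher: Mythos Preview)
Your treatment of \ref{cor:conservchargei} and \ref{cor:conservchargeii} coincides with the paper's: multiply the identities of Lemma \ref{lma:weakVlpsi2} by $\e$, sum over $\alpha$, and read off the distributional statements.

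For \ref{cor:conservchargeiii} you take a different route. The paper avoids Lebesgue-point arguments entirely: for an arbitrary $\varphi\in C_c^\infty\left(\left]0,\T\right[\right)$ it inserts the test function $\psi\left(t,x\right)=-\eta\left(x\right)\int_t^{\T}\varphi\left(s\right)\,ds$, with $\eta\in C_c^\infty\left(\R^3\right)$, $\eta=1$ on $\overline\Omega$, into \eqref{eq:Vlasovweakpsinovdepii}. Then $\partial_t\psi=\eta\varphi$, $\partial_x\psi$ vanishes on $\overline\Omega$, and after a single application of Fubini one obtains
\[
\int_0^{\T}\varphi\left(t\right)\left(\int_\Omega\rho^\inte\left(t\right)\,dx-\int_\Omega\mathring\rho^\inte\,dx+\suma\e\Bigl(\int_{\gamma_t^+}\f_+\,d\gamm-\int_{\gamma_t^-}\left(\K\f_++\g\right)\,d\gamm\Bigr)\right)dt=0,
\]
whence the bracket vanishes for almost every $t$. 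Your approximate-characteristic-function approach is also sound in substance, but note one oversight: the purely time-dependent choice $\psi\left(t,x\right)=\phi_\iota\left(t\right)$ does \emph{not} have compact support in $\left[0,\T\right[\times\R^3$ and is therefore not admissible in Lemma \ref{lma:weakVlpsi2}; you must multiply by a spatial cutoff $\eta$ as above (which is harmless on $\overline\Omega$). The paper's antiderivative trick is slicker because it dispenses with both the Lebesgue-point bookkeeping for the bulk term and the separate dominated-convergence passage for the boundary integrals.
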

	\begin{proof}
		As for \ref{cor:conservchargei} and \ref{cor:conservchargeii}, simply multiply \eqref{eq:Vlasovweakpsinovdepi} and \eqref{eq:Vlasovweakpsinovdepii} with $\e$ and sum over $\alpha$. As for \ref{cor:conservchargeiii}, take $\varphi\in C_c^\infty\left(\left]0,\T\right[\right)$. Choose $\eta\in C_c^\infty\left(\R^3\right)$ with $\eta=1$ on $\overline\Omega$. We define
		\begin{align*}
		\psi\colon\I\times\R^3\to\R,\quad\psi\left(t,x\right)=-\eta\left(x\right)\int_t^{\T}\varphi\,ds.
		\end{align*}
		Then $\psi\in C^\infty\left(\I\times\R^3\right)$ with $\supp\psi\subset\left[0,\T\right[\times\R^3$ compact. Therefore, Lemma \ref{lma:weakVlpsi2} (ii) yields, after summing over $\alpha$,
		\begin{align*}
		0&=\suma\e\left(-\int_0^{\T}\int_\Omega\left(\partial_t\psi\int_{\R^3}\f\,dv+\partial_x\psi\cdot\int_{\R^3}\v\f\,dv\right)dxdt+\int_{\gamma_{\T}^+}\f_+\psi\,d\gamm\right.\nonumber\\
		&\phantom{=\suma\e\;}\left.-\int_{\gamma_{\T}^-}\left(\K\f_++\g\right)\psi\,d\gamm-\int_\Omega\psi\left(0\right)\int_{\R^3}\mathring\f\,dvdx\right)\\
		&=-\int_0^{\T}\varphi\int_\Omega\rho^\inte\,dxdt+\int_0^{\T}\varphi\int_\Omega\mathring\rho^\inte\,dxds\\
		&\phantom{=\;}+\suma\e\left(-\int_0^{\T}\int_{\partial\Omega}\int_{\left\{v\in\R^3\mid n\left(x\right)\cdot v>0\right\}}\f_+\left(t,x,v\right)\int_t^{\T}\varphi\left(s\right)\,ds\,n\left(x\right)\cdot\v\,dvdS_xdt\right.\\
		&\phantom{=\;}\left.-\int_0^{\T}\int_{\partial\Omega}\int_{\left\{v\in\R^3\mid n\left(x\right)\cdot v<0\right\}}\left(\K\f_++\g\right)\left(t,x,v\right)\int_t^{\T}\varphi\left(s\right)\,ds\,n\left(x\right)\cdot\v\,dvdS_xdt\right)\\
		&=-\int_0^{\T}\varphi\left(\int_\Omega\rho^\inte\,dx-\int_\Omega\mathring\rho^\inte\,dx\right)\,dt\\
		&\phantom{=\;}+\suma\e\left(-\int_0^{\T}\varphi\left(s\right)\int_0^s\int_{\partial\Omega}\int_{\left\{v\in\R^3\mid n\left(x\right)\cdot v>0\right\}}\f_+\left(t,x,v\right)n\left(x\right)\cdot\v\,dvdS_xdtds\right.\\
		&\phantom{=\;}\left.-\int_0^{\T}\varphi\left(s\right)\int_0^s\int_{\partial\Omega}\int_{\left\{v\in\R^3\mid n\left(x\right)\cdot v<0\right\}}\left(\K\f_++\g\right)\left(t,x,v\right)n\left(x\right)\cdot\v\,dvdS_xdtds\right),
		\end{align*}
		from which the assertion follows immediately.
	\end{proof}
	We can finally show the remaining parts of Theorem \ref{thm:reddivE} with the help of Lemma \ref{lma:weakVlpsi2}; the redundancy of $\div_x\left(\mu H\right)=0$ has already been proved. To this end, assume Condition \ref{cond:extchargedens}.
	\begin{proof}[of Theorem \ref{thm:reddivE}]
		First take $\varphi\in C_c^\infty\left(\left]0,\T\right[\times\R^3\right)$ arbitrary. Define
		\begin{align*}
		\psi&\colon\I\times\R^3\to\R,&\psi\left(t,x\right)&=-\int_t^{\T}\varphi\left(s,x\right)\,ds,\\
		\vartheta&\colon\I\times\R^3\to\R^3,&\vartheta\left(t,x\right)&=-\int_t^{\T}\partial_x\varphi\left(s,x\right)\,ds,\\
		\xi&\colon\R^3\to\R,&\xi\left(x\right)&=\int_0^{\T}\varphi\left(s,x\right)\,ds.
		\end{align*}
		Clearly, $\psi\in C^\infty\left(\I\times\R^3\right)$ with $\supp\psi\subset\left[0,\T\right[\times\R^3$ compact, $\vartheta\in\Theta_{\T}$, and $\xi\in C_c^\infty\left(\R^3\right)$. By $\vartheta\in\Theta_{\T}$, there holds \eqref{eq:Maxwellweak1}, i.e.,
		\begin{align}\label{eq:redmax}
		0&=\int_0^{\T}\int_{\R^3}\left(\varepsilon E\cdot\partial_t\vartheta-H\cdot\curl_x\vartheta-4\pi\left(j^\inte+u\right)\cdot\vartheta\right)\,dxdt+\int_{\R^3}\varepsilon\mathring{E}\cdot\vartheta\left(0\right)\,dx\nonumber\\
		&=\int_0^{\T}\int_{\R^3}\left(\varepsilon E\cdot\partial_x\varphi+H\cdot\int_t^{\T}\curl_x\partial_x\varphi\,ds-4\pi\left(j^\inte+u\right)\cdot\vartheta\right)\,dxdt-\int_{\R^3}\varepsilon\mathring{E}\cdot\partial_x\xi\,dx\nonumber\\
		&=\int_0^{\T}\int_{\R^3}\left(\varepsilon E\cdot\partial_x\varphi-4\pi\left(j^\inte+u\right)\cdot\vartheta\right)\,dxdt-\int_{\R^3}\varepsilon\mathring{E}\cdot\partial_x\xi\,dx.
		\end{align}
		By Condition \ref{cond:extchargedens} we have
		\begin{align}\label{eq:redconservextcharge}
		0&=\int_0^{\T}\int_{\R^3}\left(\rho^u\partial_t\psi+u\cdot\partial_x\psi\right)\,dxdt+\int_{\R^3}\mathring\rho^u\psi\left(0\right)\,dx\nonumber\\
		&=\int_0^{\T}\int_{\R^3}\left(\rho^u\varphi+u\cdot\vartheta\right)\,dxdt-\int_{\R^3}\mathring\rho^u\xi\,dx.
		\end{align}
		
		To prove \ref{thm:reddivEi}, assume that $\varphi\in C_c^\infty\left(\left]0,\T\right[\times\left(\R^3\setminus\partial\Omega\right)\right)$. Then we have $\psi\in C^\infty\left(\I\times\R^3\right)$ with $\supp\psi\subset\left[0,\T\right[\times\left(\R^3\setminus\partial\Omega\right)$ compact and Lemma \ref{lma:weakVlpsi2} (i) gives us, after multiplying with $\e$ and summing over $\alpha$,
		\begin{align}\label{eq:redvlasovi}
		0&=\int_0^{\T}\int_\Omega\left(\rho^\inte\partial_t\psi+j^\inte\cdot\partial_x\psi\right)\,dxdt+\int_\Omega\mathring\rho^\inte\psi\left(0\right)\,dx\nonumber\\
		&=\int_0^{\T}\int_\Omega\left(\rho^\inte\varphi+j^\inte\cdot\vartheta\right)\,dxdt-\int_\Omega\mathring\rho^\inte\xi\,dx.
		\end{align}
		Multiplying \eqref{eq:redconservextcharge} and \eqref{eq:redvlasovi} with $4\pi$ and adding them to \eqref{eq:redmax} yields
		\begin{align*}
		\int_0^{\T}\int_{\R^3}\left(\varepsilon E\cdot\partial_x\varphi+4\pi\left(\rho^\inte+\rho^u\right)\varphi\right)\,dx=\int_{\R^3}\left(\varepsilon\mathring E\cdot\partial_x\xi+4\pi\left(\mathring\rho^\inte+\mathring\rho^u\right)\xi\right)\,dx=0
		\end{align*}
		by $\div_x\left(\varepsilon\mathring E\right)=4\pi\left(\mathring\rho^\inte+\mathring\rho^u\right)$ on $\R^3$ in the sense of distributions. Hence, $\div_x\left(\varepsilon E\right)=4\pi\left(\rho^\inte+\rho^u\right)$ on $\left]0,\T\right[\times\left(\R^3\setminus\partial\Omega\right)$ in the sense of distributions.
		
		To prove \ref{thm:reddivEii}, let the additional assumptions stated there hold. The test function $\varphi\in C_c^\infty\left(\left]0,\T\right[\times\R^3\right)$ may now not vanish on $\partial\Omega$. Then we have $\psi\in C^\infty\left(\I\times\R^3\right)$ with $\supp\psi\subset\left[0,\T\right[\times\R^3$ compact and Lemma \ref{lma:weakVlpsi2} (ii) gives us, after multiplying with $\e$ and summing over $\alpha$,
		\begin{align}\label{eq:redvlasovii}
		0&=\int_0^{\T}\int_\Omega\left(\rho^\inte\partial_t\psi+j^\inte\cdot\partial_x\psi\right)\,dxdt-T_{\partial\Omega}\psi+\int_\Omega\mathring\rho^\inte\psi\left(0\right)\,dx\nonumber\\
		&=\int_0^{\T}\left(\rho^\inte\varphi+j^\inte\cdot\vartheta\right)\,dxdt-T_{\partial\Omega}\psi-\int_\Omega\mathring\rho^\inte\xi\,dx.
		\end{align}
		We rewrite $T_{\partial\Omega}\psi$:
		\begin{align*}
		T_{\partial\Omega}\psi&=\suma\e\left(\int_{\gamma_{\T}^+}\f_+\psi\,d\gamm-\int_{\gamma_{\T}^-}\left(\K\f_++\g\right)\psi\,d\gamm\right)\\
		&=\suma\e\left(-\int_0^{\T}\int_{\partial\Omega}\int_{\left\{v\in\R^3\mid n\left(x\right)\cdot v>0\right\}}\f_+\left(t,x,v\right)\int_t^{\T}\varphi\left(s,x\right)\,ds\,n\left(x\right)\cdot\v\,dvdS_xdt\right.\\
		&\phantom{=\;}\left.-\int_0^{\T}\int_{\partial\Omega}\int_{\left\{v\in\R^3\mid n\left(x\right)\cdot v<0\right\}}\left(\K\f_++\g\right)\left(t,x,v\right)\int_t^{\T}\varphi\left(s,x\right)\,ds\,n\left(x\right)\cdot\v\,dvdS_xdt\right)\\
		&=\suma\e\left(-\int_0^{\T}\int_{\partial\Omega}\varphi\left(s,x\right)\int_0^s\int_{\left\{v\in\R^3\mid n\left(x\right)\cdot v>0\right\}}\f_+\left(t,x,v\right)n\left(x\right)\cdot\v\,dvdtdS_xds\right.\\
		&\phantom{=\;}\left.-\int_0^{\T}\int_{\partial\Omega}\varphi\left(s,x\right)\int_0^s\int_{\left\{v\in\R^3\mid n\left(x\right)\cdot v<0\right\}}\left(\K\f_++\g\right)\left(t,x,v\right)n\left(x\right)\cdot\v\,dvdtdS_xds\right)\\
		&=-S_{\partial\Omega}\varphi.
		\end{align*}
		Similarly as before, multiplying \eqref{eq:redconservextcharge} and \eqref{eq:redvlasovii} with $4\pi$ and adding them to \eqref{eq:redmax} yields
		\begin{align*}
		\int_0^{\T}\int_{\R^3}\left(\varepsilon E\cdot\partial_x\varphi+4\pi\left(\rho^\inte+\rho^u\right)\varphi\right)\,dx+4\pi S_{\partial\Omega}\varphi=\int_{\R^3}\left(\varepsilon\mathring E\cdot\partial_x\xi+4\pi\left(\mathring\rho^\inte+\mathring\rho^u\right)\xi\right)\,dx=0.
		\end{align*}
		Hence, $\div_x\left(\varepsilon E\right)=4\pi\left(\rho^\inte+\rho^u+S_{\partial\Omega}\right)$ on $\left]0,\T\right[\times\R^3$ in the sense of distributions.
	\end{proof}
	\begin{remark}
		We discuss some assumptions and give some comments regarding Theorem \ref{thm:reddivE} and Corollary \ref{cor:conservcharge}:
		\begin{itemize}
			\item Clearly, we see by interpolation that $\f\in\left(L_{\kin,\lt}^1\cap L_\lt^\infty\right)\left(\I\times\Omega\times\R^3\right)$ implies $\f\in\left(L_\lt^1\cap L_{\kin,\lt}^2\cap L_\lt^\infty\right)\left(\I\times\Omega\times\R^3\right)$ and $\left(E,H\right)\in L_\lt^\infty\left(\I;L^2\left(\R^3;\R^6\right)\right)$ ensures $\left(E,H\right)\in L_\lt^q\left(\I;L^2\left(\R^3;\R^6\right)\right)$. Hence, the $\f$ and $E$, $H$ of Theorem \ref{thm:Existence} satisfy these assumptions, and Theorem \ref{thm:reddivE} (i), (ii) can be applied. However, the boundary values $\f_+$ constructed there only satisfy $\f_+\in L_\lt^1\left(\gamma_{\T}^+,d\gamm\right)$ for $\alpha=1,\dots,N'$, i.e., the particles are subject to partially absorbing boundary conditions, and not necessarily for $\alpha=N'+1,\dots,N$, i.e., the particles are subject to purely reflecting boundary conditions. Therefore, whether the statement of Theorem \ref{thm:reddivE} (iii) is true for the solution of Theorem \ref{thm:Existence}, remains as an open problem, unless $N'=N$, i.e., all particles are subject to partially absorbing boundary conditions.
			\item Conversely, the assumption $\f_+\in L_\lt^1\left(\gamma_{\T}^+,d\gamm\right)$ is necessary for Theorem \ref{thm:reddivE} (iii) (and for Lemma \ref{lma:weakVlpsi2} (ii)). Otherwise, the integral $\int_{\gamma_{\T}^+}\f_+\psi\,d\gamm$ will not exist in general, since $\psi$ need not vanish on $\partial\Omega$ and does not depend on $v$.
			\item The distribution $S_{\partial\Omega}$ can be interpreted as follows: The terms
			\begin{align*}
			j_{\partial\Omega}^{\mathrm{out}}\left(t,x\right)&:=\suma\e\int_{\left\{v\in\R^3\mid n\left(x\right)\cdot v>0\right\}}\v\f_+\left(t,x,v\right)\,dv,\\
			j_{\partial\Omega}^{\mathrm{in}}\left(t,x\right)&:=\suma\e\int_{\left\{v\in\R^3\mid n\left(x\right)\cdot v<0\right\}}\v\left(\left(\mathcal K_\alpha\f_+\right)\left(t,x,v\right)+\g\left(t,x,v\right)\right)\,dv,
			\end{align*}
			where $\left(t,x\right)\in\I\times\partial\Omega$, can be viewed as the outgoing and incoming boundary current density. Hence $S_{\partial\Omega}$ can be rewritten as
			\begin{align*}
			S_{\partial\Omega}\varphi&=\int_0^{\T}\int_{\partial\Omega}\varphi\left(t,x\right)\int_0^tn\left(x\right)\cdot\left(j_{\partial\Omega}^{\mathrm{out}}\left(s,x\right)+j_{\partial\Omega}^{\mathrm{in}}\left(s,x\right)\right)\,dsdS_xdt.
			\end{align*}
			Thus, $S_{\partial\Omega}$ measures how many particles have left and entered $\Omega$ \textit{up to} time $t$. On the other hand, the distribution $T_{\partial\Omega}$ measures how many particles leave and enter $\Omega$ \textit{at} time $t$ via
			\begin{align*}
			T_{\partial\Omega}\psi=\int_0^{\T}\int_{\partial\Omega}\psi\left(t,x\right)n\left(x\right)\cdot\left(j_{\partial\Omega}^{\mathrm{out}}\left(t,x\right)+j_{\partial\Omega}^{\mathrm{in}}\left(t,x\right)\right)\,dS_xdt.
			\end{align*}
			We easily see that $\partial_t S_{\partial\Omega}=T_{\partial\Omega}$ on $\left]0,\T\right[\times\R^3$ in the sense of distributions, which corresponds to the fact that $T_{\partial\Omega}$ appears as \enquote{a part of $\partial_t\rho$} in \eqref{eq:locconservintcharge} and $S_{\partial\Omega}$ appears as \enquote{a part of $\rho$} in \eqref{eq:divErho}.
			\item The global charge balance, see Corollary \ref{cor:conservcharge} (iii), can similarly been written as follows:
			\begin{align*}
			\int_\Omega\rho^\inte\left(t,x\right)\,dx=\int_\Omega\mathring\rho^\inte\,dx-\int_0^t\int_{\partial\Omega} n\cdot\left(j_{\partial\Omega}^{\mathrm{out}}+j_{\partial\Omega}^{\mathrm{in}}\right)\,dS_xds
			\end{align*}
			for almost all $t\in\I$.
			\item As mentioned in the introduction, in a more realistic model $\varepsilon$ and $\mu$ should depend on $\f$, $E$, and $H$ (maybe even nonlocally) and hence implicitly on time. In this situation, the weak formulation is the same as before, which is stated in Definition \ref{def:WeakSolWholeSys}. If we assume $\varepsilon,\mu\in L_\loc^\infty\left(\I\times\R^3;\R^{3\times 3}\right)$ (and suitably introduce initial values for $\varepsilon,\mu$), viewed as explicit functions of $t$ and $x$, the proofs of Theorem \ref{thm:reddivE} and the lemmas before are still valid, and Theorem \ref{thm:reddivE} remains true.
			\item Lastly, we emphasize that the results of this section hold, under the respective assumptions, for all weak solutions of \eqref{eq:WholeSystem} in the sense of Definition \ref{def:WeakSolWholeSys} and not only for the solutions of Theorem \ref{thm:Existence}.
		\end{itemize}
	\end{remark}
	
	\nocite{*}
	\bibliography{solrvmwec}
	\bibliographystyle{plain}
\end{document}